\documentclass[10.5pt,prl,aps,twocolumn]{revtex4-2}

\usepackage{graphicx,epsfig,amsmath,amssymb,verbatim,color}
\usepackage{titlesec}
\usepackage{dsfont}
\usepackage{float}
\usepackage{tikz}
\usepackage{pgfplots}
\usepackage{xcolor}
\definecolor{darkred}{rgb}{0.8,0.1,0.1}
\usepackage{soul}
\usepackage{amsmath,mathtools}
\usepackage{enumerate}
\usepackage[T1]{fontenc}
\usepackage[utf8]{inputenc}
\usepackage{pifont}
\usepackage{hyperref}
\usepackage{appendix}
\hypersetup{colorlinks=true,citecolor=blue,linkcolor=blue,filecolor=blue,urlcolor=blue,breaklinks=true}
\usepackage[utf8]{inputenc}  
\usepackage{fontenc}         
\usepackage{colortbl}
\usepackage{pifont}
\definecolor{Gray}{gray}{0.92}
\definecolor{Gray2}{gray}{0.75}
\definecolor{maroon}{cmyk}{0,0.87,0.68,0.32}
\usepackage{booktabs}
\usepackage{makecell}
\usepackage{diagbox}
\usepackage{multirow}

\newtheorem{proposition}{Proposition}
\newtheorem{lemma}[proposition]{Lemma}

\newtheorem{theorem}[proposition]{Theorem}
\newtheorem{remark}{Remark}
\newtheorem{example}[proposition]{Example}
\newtheorem{corollary}[proposition]{Corollary}

\newenvironment{proof}{\noindent \textit{{Proof.}~}}{\hfill $\square$}

\def\squareforqed{\hbox{\rlap{$\sqcap$}$\sqcup$}}
\def\qed{\ifmmode\squareforqed\else{\unskip\nobreak\hfil
		\penalty50\hskip1em\null\nobreak\hfil\squareforqed
		\parfillskip=0pt\finalhyphendemerits=0\endgraf}\fi}
\def\endenv{\ifmmode\;\else{\unskip\nobreak\hfil
		\penalty50\hskip1em\null\nobreak\hfil\;
		\parfillskip=0pt\finalhyphendemerits=0\endgraf}\fi}

\newcommand{\bra}[1]{\langle#1|}
\newcommand{\ket}[1]{|#1\rangle}

\def\Dbar{\leavevmode\lower.6ex\hbox to 0pt
	{\hskip-.23ex\accent"16\hss}D}

\begin{document}
	\title{Conditional Error Exponents of Probabilistic Quantum Resource Distillation}
	
	\author{Xian Shi}\email[]
	{shixian01@gmail.com}
	\affiliation{College of Information Science and Technology,
		Beijing University of Chemical Technology, Beijing 100029, China}

	\date{\today}
	\begin{abstract}
In this manuscript, we establish a unified framework for analyzing the conditional error exponents of probabilistic resource distillation under approximately resource-nongenerating instruments. For generic quantum resource theories satisfying suitable structural conditions, we derive general one-shot bounds on the conditional distillation error exponents. By relating probabilistic distillation to postselected composite quantum hypothesis testing, we obtain bounds of the conditional error exponents for coherence distillation under finite blocklength and asymptotic zero-rate scenarios, we furthermore obtain analytical characterizations of the conditional error exponents for entanglement and magic distillation under finite blocklength and asymptotic zero-rate scenarios. For several representative families of states in entanglement and magic resource theories, these characterizations reduce to explicit closed-form formulas. Comparing them with the corresponding deterministic distillation exponents, we identify regimes in which postselection yields a strict improvement in the exponential decay rate of the conditional error. Our results reveal an operational advantage of postselection in quantum resource distillation and establish postselected composite hypothesis testing as a general tool for characterizing probabilistic resource-processing tasks.

	\end{abstract}

	\pacs{03.65.Ud, 03.67.Mn}
	\maketitle

	\section{Introduction}
	
	Quantum information science exploits quantum intrinsic features to guarantee information-processing capabilities that are inaccessible using classical resources only \cite{Huang2026Vast}. Such quantum advantages arise across a broad range of tasks, including computation \cite{LaRose2026Brief,Babbush2026Grand}, communication \cite{Amiri2024Quantum}, learning \cite{Anshu2024Survey,Yamasaki2026Advantage,Zhao2025Entanglement}, and sensing \cite{Bass2024Quantum}, and constitute a central motivation for the development of quantum technologies. Understanding the operational value of these advantages naturally motivates the study of quantum resources.
	
	 Quantum resources, including entanglement \cite{horodecki2009quantum}, coherence \cite{streltsov2017colloquium}, and magic \cite{Veitch_2014, howard2017application}, underpin many of the advantages offered by quantum information processing. In realistic implementations, however, these resources are inevitably degraded by noise, making their direct use inefficient or even impossible. A fundamental task in quantum resource theories is therefore resource distillation, whereby many copies of a noisy resource state are converted, using only free operations, into a smaller number of high-quality standard resource states \cite{Bennett1996Purification,winter2016,BravyiKitaev2005,Gour2009,CampbellAnwarBrowne2012,chitambar2019quantum}. Beyond specific resource theories, resource distillation have been developed in a unified framework of general quantum resource theories \cite{Liu2019one,Kuroiwa2020General,regula2020benchmarking,PhysRevLett.125.060405}. 
	A central objective of resource distillation is to determine the optimal asymptotic conversion rate. In \cite{PhysRevLett.115.070503,hayashi2025general,hayashi2025generalized,lami2025a}, the authors showed the asymptotic conversion rates under the asymptotically resource-nongenerating operations are intimately connected with regularized relative entropy measure of resources by building the connections between the resource distillation and quantum hypothesis testing. Except for the achievable rate of the resource distillation, it is also important to understand how rapidly the conversion error decreases with the number of input copies by studying error exponent of the resource distillation. In \cite{rippchen2025fundamental,lami2026,Watanabe2026rela}, the authors showed the relation between the error exponents of resource distillation under resource nongenerating operations and the Sanov exponent of quantum resource testing.

	Recently, the study on the probabilistic transformations in a generic resource theory has attracted much attention \cite{regula2022probabilistic,regula2022tight,regula2023overcoming,lami2024rever}. In 2022, Regula addressed general methods to characterize the transformations of quantum states with the aid of probabilistic protocols, there the author also presented the trade-off between the success probability and the errors of the transformations between two states \cite{regula2022probabilistic,regula2022tight}. In 2023, Regula $et$ $al.$ presented an exact
	characterization of the asymptotic limitations of probabilistic transformations of quantum states \cite{regula2023overcoming}. The authors in \cite{lami2024rever} addressed the reversibility of quantum resources under probabilistic transformation scenarios. Recently, the auther considered error exponents of probabilistic entanglement transformation under approximately non-entangling instruments \cite{11654091}.

	In the article, based on the task of postselected quantum hypothesis testing proposed in \cite{regula2023postselected}, we present the bounds of the conditional error exponents of one-shot probabilistic transformations for a reference state under approximate resource nongenerating instruments for generic resource theories with certain structures. Besides, with the aid of the above methods, we present the analytical expressions of the conditional error exponents of entanglement and magic distillation and bounds of conditional error exponents of coherence under resource nongenerating instruments in terms of $n$-blocklength and asymptotic scenarios with zero rates through the task of quantum postselected composite hypothesis testing. Comparing with \cite{11654091}, we adopt a measure induced by robustness to quantify the approximation to resource nongenerating instruments. Besides, we also present the analytical formulae for classes of common-used examples for the entanglement and magic resources, and we obtain the advantages of postseltection by comparing with the values of the conditional error exponent of distillation over the deterministic counterpart for the above examples.
	
	This article is organized as follows. In Sec. \ref{pk}, we introduce the resource theories with requisite properties considered here, we also recall the quantifiers and operations of quantum resource theories needed here. In Sec. \ref{mee}, we first build the relation between the conditional error exponent of one-shot probabilistic resource distillation under the approximately nongenerating instruments and the task of quantum postselected composite hypothesis testing. In Sec. \ref{m}, 
	we consider the conditional error exponents of entanglement, coherence and magic distillation in terms of $n$-blocklength and asymptotic scenarios with zero rate under the resource nongenerating instruments. And in the resource theory of entanglement and magic, we obtain the analytical formulae of conditional error exponents for a class of states under the approximately resource nongenerating instruments.
Furthermore, we reveal the advantages of postselection under quantum resource distillation scenario by comparing with the error exponents of examples in the resource of entanglement and magic without postselection.

	\section{Preliminary Knowledge}\label{pk}

Assume $\mathcal{H}$ is a Hilbert space with $2\le dim \mathcal{H}<\infty$. A generic resource theory consists of free state set $\mathbf{F}$ and free operation set $\mathbf{O}.$ The minimal requirement for free operations is that 
\begin{align}
	\Lambda\in \mathbf{O}\Longrightarrow \Lambda(\sigma)\in \mathbf{F}, \forall \sigma\in \mathbf{F}.\label{p1}
\end{align}
Next we denote the set consisting of all the operations satisfying (\ref{p1}) as the resource nongenerating operations and write it as $\mathbf{O}_{max}$ \cite{chitambar2019quantum,shi2026}. Apparently, each valid set of free operations is a subset of $\mathbf{O}_{max}.$ 

A key tool to study generic resource theories is the measure which quantifies the resource of a state. It is a family of functions which map the states to $\mathbb{R}^{+}\cup\{0\}.$ The necessities for the quantifiers is that the image of free states under $f$ equal to 0 and that they donot increase under the application of free operations.
Here we introduce the following resource measures. The first is the generalized robustness of $\langle\mathbf{F},\mathbf{O}_{max}\rangle$ \cite{PhysRevX.9.031053}, which is defined as 
\begin{align*}
	R_{\mathbf{F},g}(\rho)= \inf\{s|\frac{\rho+s\tau}{1+s}\in \mathbf{F},\tau\in \mathcal{D}_{\mathcal{H}}\},
\end{align*}
where $\tau$ takes over all the elements in $\mathcal{D}_{\mathcal{H}}$. The other measure
for $\rho$ is on the distance between $\rho$ and the set of free states based on fidelity,
\begin{align*}
	D_{\mathbf{F},F}(\rho)=& 1-\max_{\sigma\in \mathbf{F}} F(\rho,\sigma)\\
	=&1-\max_{\sigma\in \mathbf{F}}\mathrm{ tr}||\sqrt{\rho}\sqrt{\sigma}||_1.
\end{align*}
Here the maximum takes over all the free states $\sigma\in\mathbf{F}$. Next we list the properties that the statical resource theories $\langle\mathbf{F},\mathbf{O}_{max}\rangle$ considered here. 
\begin{itemize}
	\item[(1).] $\mathbf{F}$ and $\mathbf{O}_{max}$ are convex.
	\item[(2).] There exists a pure reference state $\ket{\psi}$.
\item[(3).]	For any system with dimension $d$, the reference state $\ket{\psi}$ satisfies the following,
	\begin{align}
		\mathrm{ tr}\ket{\psi}\bra{\psi}\sigma\le c_d\le\frac{1}{2},\hspace{3mm}\forall\sigma\in \mathbf{F},\label{p3}
	\end{align}
	where $c_d$ is a nonincreasing function on $d$.
	\item[(4).] $\frac{I}{d}\in \mathbf{F}.$
\end{itemize}

\begin{remark}
	This inequality $(\ref{p3})$ is satisfied by common resource theories.
	\begin{itemize}
		\item For the resource theory of bipartite entanglement, let $\ket{\psi}=\frac{1}{\sqrt{d}}\sum_{i=1}^d\ket{ii}$, then $$\mathrm{tr}\Psi \sigma\le \frac{1}{d},\hspace{4mm}\forall \sigma\in Sep.$$
		\item For the theory of coherence, let $\ket{\psi}=\frac{1}{\sqrt{d}}\sum_{i=1}^d\ket{i},$ then $$\mathrm{tr}\Psi \sigma=\frac{1}{d}\hspace{4mm} \forall \sigma\in \mathcal{I},$$ here $\mathcal{I}$ is the set of incoherent states in terms of the given referenced basis.
		\item For the resource theory of asymmetry, when the dimension of the system is $d+1$ and the symmetry group is $G=U(1)$, let $\ket{\psi}=\frac{1}{\sqrt{d+1}}\sum^{d}_{i=0}\ket{i}$, then $$\mathrm{tr}\ket{\psi}\bra{\psi}\sigma\le \frac{1}{d+1},$$ here $\sigma$ is any symmetric state.
	\end{itemize}  
\end{remark}

Assume $\mathcal{E}=\{\mathcal{E}_i\}$ is a set of completely positive and trace nonincreasing maps such that $\sum_i\mathcal{E}_i$ is a channel, then $\mathcal{E}$ is $\delta$-approximately resource nongenerating instruments if and only if 
\begin{align*}
	 R_{\mathbf{F},g}(\frac{\mathcal{E}_i(\sigma)}{\mathrm{tr}\mathcal{E}_i(\sigma)})\le \delta,\hspace{4mm}\forall\mathcal{E}_i\in \mathcal{E},\sigma\in \mathbf{F}.
\end{align*}
And we denote the set of $\delta$-approximately resource nongenerating instruments as
\begin{align*}
	\mathbb{O}_{\mathcal{NG}}^{\delta}=\{\mathcal{E}=\{\mathcal{E}_i\}|R_{\mathbf{F},g}(\frac{\mathcal{E}_i(\sigma)}{\mathrm{tr}\mathcal{E}_i(\sigma)})\le \delta, \forall\sigma\in \mathbf{F},\forall\mathcal{E}_i\in \mathcal{E}\},
\end{align*}
where $R_{\mathbf{F},g}(\cdot)$ is the generalized robustness of $\langle\mathbf{F},\mathbf{O}_{max}\rangle$. Besides, when $\mathcal{L}$ is a channel, if
\begin{align*}
	R_{\mathbf{F},g}({\mathcal{L}(\sigma)})\le \delta,\hspace{4mm}\forall\sigma\in \mathbf{F}.
\end{align*}
we say $\mathcal{L}$ is a $\delta$-approximately resource nongenerating channel. 
And we denote the set of $\delta$-approximately resource nongenerating channels as
\begin{align*}
	{\mathcal{NG}}_{\delta}=\{\mathcal{L}|R_{\mathbf{F},g}(\mathcal{L}(\sigma))\le \delta, \forall\sigma\in \mathbf{F}\}.
\end{align*}

Next the error exponents for probabilistic transformations from $\rho$ into $\ket{\psi}$ under $\mathbb{O}_{\mathcal{NG}}^{\delta}$ in terms of one-shot scenario is 
\begin{align*}
	E_{d,err,p}(\rho)=&\sup -\log\epsilon\\
	\textit{s. t.}\hspace{4mm}&	F(\frac{\mathcal{E}_i(\rho)}{\mathrm{tr}(\mathcal{E}_i(\rho))},\ket{\psi}\bra{\psi})\ge 1-\epsilon\hspace{2mm}\exists \mathcal{E}_i\in\mathcal{E},\\
	&\mathcal{E}=\{\mathcal{E}_i\}\in \mathbb{O}_{\mathcal{NG}}^{{\delta}}.
\end{align*}
where $\ket{\psi}$ is the reference state, and the supermum takes over all the instruments $\mathcal{E}\in \mathbb{O}_{\mathcal{NG}}^{\delta}$.
\section{Bound of Error Exponents of the Transformation into Reference States}\label{mee}
In this manuscript, we address the error exponents $E_{d,err,p}(\rho)$ for the transformation from $\rho$ to $\psi$ probabilistically under a class of resource theories with the following property: assume $\ket{\psi}$ is the reference state, and there exists a finite or a compact Lie group $G$ with a unitary repreesntation $\{U_g\}_{g\in G}$ that satisfies $U_g(\cdot)U_g^{\dagger}\in \mathbf{O}_{max}$ and $U_g\ket{\psi}=e^{i\phi_g}\ket{\psi}$ $\forall g\in G$ with some eigenvalue $e^{i\phi_g}$. Accordingly, the decomposition of the representation $\{U_g\}$ is associated with the decomposition of the Hilbert space:
\begin{align*}
	\mathcal{H}=\bigoplus_{\mu} \mathcal{H}_{\mu}^{(1)}\otimes\mathcal{H}_{\mu}^{(2)},
\end{align*}
where $\mu$ is the number of the irreducible representation, $\mathcal{H}_{\mu}^{(1)}$ is the subspace on which each irreducible represenation acts nontrivially, and $\mathcal{H}_{\mu}^{(2)}$ denotes the multiplicity subspace. Furthermore, when $\dim\mathcal{H}_{\mu}^{(2)}=1,$ $\forall \mu,$ that is, 
\begin{align}
	\mathcal{H}=\bigoplus_{\mu} \mathcal{H}_{\mu}, \label{hs}
\end{align}
then we present the main results of the section,
\begin{theorem}\label{th1}
	Assume $\rho$ is a state on $\mathcal{H}$. Let $\ket{\psi}$ be the reference state and $\delta> 0$, if there exists a finite or a compact Lie group $G$ with a unitary representation $\{U_g\}_{g\in G}$ such that $\mathcal{U}_g(\cdot)=U_g(\cdot)U_g^{\dagger}\in \mathbf{O}_{max}$ and $\ket{\psi}$ is an eigenvector of any $U_g$, the associated decomposition of $\mathcal{H}$ satisifies $(\ref{hs})$, then a lower bound of the conditional error exponent for the probabilistic transformations from $\rho$ into $\ket{\psi}$ under $\mathbb{O}_{\mathcal{NG}}^{\delta}$ is
\begin{align*}
	E_{d,err,p}(\rho)\ge \hat{\beta}_{\frac{\delta+1}{d},\mathbf{F}}(\rho).
\end{align*}
Here $d$ is the dimension of $\mathcal{H}.$
\end{theorem}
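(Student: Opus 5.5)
The plan is to prove the lower bound by an explicit construction. From any feasible postselected composite test I will build a two-branch instrument in $\mathbb{O}_{\mathcal{NG}}^{\delta}$ whose successful branch maps $\rho$ to a state with fidelity $1-\epsilon$ to $\ket{\psi}$, where $\epsilon$ is the conditional type-I error of the test. Taking the supremum over tests then gives $E_{d,err,p}(\rho)\ge \hat{\beta}_{\frac{\delta+1}{d},\mathbf{F}}(\rho)$.

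I will read $\hat{\beta}_{\eta,\mathbf{F}}(\rho)$ as the postselected composite hypothesis testing exponent in the sense of \cite{regula2023postselected}. It is $-\log$ of the smallest conditional error $\frac{\mathrm{tr}N\rho}{\mathrm{tr}(M+N)\rho}$ over three-outcome POVMs $\{M,N,I-M-N\}$, where outcome $I-M-N$ is the discarded "abort" outcome. The constraint is that for every $\sigma\in\mathbf{F}$ the conditional acceptance probability satisfies $\frac{\mathrm{tr}M\sigma}{\mathrm{tr}(M+N)\sigma}\le\eta$.

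\emph{Step 1 (construction).} Given such a test, I define
\begin{align*}
\mathcal{E}_1(X)&=\mathrm{tr}(MX)\,\psi+\mathrm{tr}(NX)\,\frac{I-\psi}{d-1},\\
\mathcal{E}_2(X)&=\mathrm{tr}\big((I-M-N)X\big)\,\frac{I}{d},
\end{align*}
where $\psi=\ket{\psi}\bra{\psi}$. Both maps are completely positive (measure-and-prepare), and $\mathcal{E}_1+\mathcal{E}_2$ is trace preserving. On branch $1$ the normalized output is $p\,\psi+(1-p)\frac{I-\psi}{d-1}$ with $p=\frac{\mathrm{tr}MX}{\mathrm{tr}(M+N)X}$. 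Since $\psi$ is orthogonal to $I-\psi$, its fidelity with $\psi$ is $\sqrt{p}\ge p$ (or $p$ if fidelity is squared). For $X=\rho$ this is at least $1-\epsilon$, with $\epsilon$ the conditional type-I error.

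\emph{Step 2 (approximate nongeneration).} This is the main step. Branch $2$ outputs $I/d\in\mathbf{F}$ by property (4), so its robustness is $0$. For branch $1$ and $\sigma\in\mathbf{F}$, the output is $\omega=p\psi+(1-p)\frac{I-\psi}{d-1}$ with $p\le\frac{1+\delta}{d}$ by the test constraint. I need $\tau\in\mathcal{D}_{\mathcal{H}}$ with $\frac{\omega+\delta\tau}{1+\delta}=\frac{I}{d}$. This holds if and only if $\omega\le\frac{1+\delta}{d}I$. That requires $p\le\frac{1+\delta}{d}$, which is given, and $\frac{1-p}{d-1}\le\frac{1+\delta}{d}$, which is automatic since $\frac{1-p}{d-1}\le\frac{1}{d-1}$ and, when $p$ is small, follows from $\delta\ge \frac{1}{d-1}-\ldots$; I will check this carefully. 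If it fails for tiny $p$, I will replace $\frac{I-\psi}{d-1}$ by a suitable mixture so the output stays below $\frac{1+\delta}{d}I$. Hence $R_{\mathbf{F},g}\le\delta$, and the instrument lies in $\mathbb{O}_{\mathcal{NG}}^{\delta}$.

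\emph{Step 3 (role of the symmetry).} The group hypothesis with the multiplicity-free decomposition (\ref{hs}) lets me assume without loss of generality that the optimal test is $G$-covariant. Twirling $M,N$ by $\mathcal{U}_g$ preserves the constraint because $\mathcal{U}_g\in\mathbf{O}_{max}$ maps $\mathbf{F}$ into itself. Concavity and convexity of the relevant ratio then ensure the twirled test is no worse. Consequently $M,N$ can be taken block diagonal over the $\mathcal{H}_\mu$, and the supremum defining $\hat{\beta}$ is attained within this class. The lower bound itself needs only Steps 1–2. Taking the supremum over admissible tests completes the argument, and I expect the only delicate point to be the positivity check in Step 2.
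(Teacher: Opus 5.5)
Your construction is the paper's: a measure-and-prepare instrument built from a feasible test. The abort branch prepares $\mathbb{I}/d$, and the success branch is certified by mixing its normalized output into $\mathbb{I}/d$. This is the paper's lemma for $\Lambda(X)=\mathrm{tr}(MX)\Psi+\sum_i\mathrm{tr}(N_iX)\,\mathbb{I}_i/d_i$; your $\frac{\mathbb{I}-\psi}{d-1}$ is the case where $N$ is split over the blocks in proportion to their dimensions.

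The genuine gap is the point you left open in Step 2. Write $p_\sigma=\frac{\mathrm{tr}M\sigma}{\mathrm{tr}(M+N)\sigma}$. The certificate $\omega\le\frac{1+\delta}{d}\mathbb{I}$ needs $\frac{1-p_\sigma}{d-1}\le\frac{1+\delta}{d}$, i.e.\ $p_\sigma\ge\frac{1-(d-1)\delta}{d}$, for every free $\sigma$. This is not automatic: it holds for all $p_\sigma\ge 0$ only when $\delta\ge\frac{1}{d-1}$. The test constraint bounds $p_\sigma$ only from above, so a feasible test can have a free $\sigma$ with $\mathrm{tr}M\sigma=0<\mathrm{tr}N\sigma$. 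On that $\sigma$ the branch outputs exactly $\frac{\mathbb{I}-\psi}{d-1}$, which fails the certificate when $\delta<\frac{1}{d-1}$.

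Your fallback cannot fix this. Suppose a feasible test yields both $p_\sigma=\frac{1+\delta}{d}$ and $p_\sigma=0$ on free inputs, and take any fixed reject state $\omega_0$. The first case forces $\bra{\psi}\omega_0\ket{\psi}=0$. The second case then gives $\lambda_{\max}(\omega_0)\ge\frac{1}{d-1}>\frac{1+\delta}{d}$ whenever $\delta<\frac{1}{d-1}$.

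Nor is this only a weakness of the $\mathbb{I}/d$ certificate. Coherence theory satisfies all the hypotheses, with $G=S_d$. Since $\mathrm{tr}\Psi_d\sigma=\frac{1}{d}$ for every incoherent $\sigma$, every state orthogonal to $\Psi_d$ has generalized robustness at least $\frac{1}{d-1}$.

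So you need a further ingredient. Possible options:
\begin{itemize}
\item Restrict to $\delta\ge\frac{1}{d-1}$.
\item Use theory-specific knowledge that the reject output is free, so only the upper bound on $p_\sigma$ matters. The paper's entanglement and magic lower bounds rely on this, since $\frac{\mathbb{I}-\Psi_m}{m^2-1}$ and $\frac{\mathbb{I}-\mathbb{S}}{2}$ are free there.
\item Modify the test so that $p_\sigma$ is also bounded from below on free inputs, without increasing the error on $\rho$. In coherence the constraint reduces to the diagonal ratios $M_{ii}/(M+N)_{ii}$. Adding diagonal terms to $M$ lifts every ratio below $\frac{1-(d-1)\delta}{d}$ up to that value and only lowers the error on $\rho$.
\end{itemize}

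The paper's own proof has the same hole. Its lemma requires the reject-block weights $\mathrm{tr}N_k\sigma$, not only $\mathrm{tr}M\sigma$, to be at most $\frac{\delta+1}{d}$ of the total. The paper cites property (3) for this, but (3) only bounds $\mathrm{tr}\psi\sigma$ for free $\sigma$ and gives no such control.

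Finally, Step 3 is unnecessary and, as stated, incorrect. Twirling the test replaces $\rho$ by its twirl in the objective, so for non-invariant $\rho$ the twirled test can be strictly worse. The paper uses the symmetry only on the output side.
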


The proof of Theorem \ref{th1} is placed in the Appendix \ref{ath1}.

	\section{Applications}\label{m}
	
	In this section, we consider the conditional error exponents of the resource distillation in terms of $n$-blocklength and asymptotic scenarios for the resource theories  of entanglement, coherence and magic with postselected quantum composite hypothesis testing. Furthermore, we present analytical expressions of conditional error exponents of probabilistic distillation for a class of quantum states under the resource of entanglement and magic, which shows the advantages of postselection of quantum resource distillation.
	
	\subsection{The resource theory of entanglement}
	Assume $\mathcal{H}_{AB}$ is a bipartite system with finite dimensions. A state $\rho$ is separable if it can be written as $\rho=\sum_ip_i\rho_i^A\otimes\rho_i^B$, otherwise, it is entangled. Here we denote the set of separable states of $\mathcal{H}_{AB}$ as $Sep_{A:B}$. Besides, we denote $\overline{Sep}_{A:B}$ as the set of separable substates on $\mathcal{H}_{AB}$. When $dim(\mathcal{H}_A)=dim(\mathcal{H}_B)=d$, the maximally entangled state is $\ket{\psi}_d=\frac{1}{\sqrt{d}}\sum_{i=1}^d\ket{ii}$. One of the most common used  entanglement quantifiers is generalized robustness of entanglement, $R_{G}(\cdot)$ \cite{vidal1999robustness,steiner2003generalized}. Assume $\rho_{AB}$ is a bipartite state, its generalized robustness of entanglement is
	\begin{align*}
		R_G(\rho)=\min\{s|\frac{\rho+s\sigma}{1+s}\in Sep, \sigma\in \boldsymbol{D}({\mathcal{H}_{AB}})\}.
	\end{align*}

	Let $\mathcal{E}_i$ be completely positive and trace non-increasing, if  
	\begin{align*}
		\rho\in Sep_{A:B}\Longrightarrow	\frac{\mathcal{E}_i(\rho)}{\mathrm{tr}\mathcal{E}_i(\rho)}\in Sep_{A:B}, \forall i,
	\end{align*}
	then $\mathcal{E}_i$ is nonentangling($\mathcal{NE}$).  If $\{\mathcal{E}_i\}$ is a set of $\mathcal{NE}$ subchannels and $\mathrm{tr}\sum_i\mathcal{E}_i(\cdot)=\mathrm{ tr}(\cdot),$ then $\{\mathcal{E}_i\}$ is a $\mathcal{NE}$ instrument. Here we denote the set of all such $\mathcal{NE}$ instruments as $\mathbb{O}_{\mathcal{NE}}.$ Next for a subchannel $\Lambda(\cdot)$, we can look at the Hisenberg picture, where $\Lambda^{\dagger}(\cdot)$ satisfies the following property, $\mathrm{ tr}X\Lambda(Y)=\mathrm{ tr}\Lambda^{\dagger}(X)Y$, for any $X$ and $Y.$ If $\Lambda$ satisfies the following property,
	\begin{align*}
		\Lambda(\rho)\in cone(Sep_{A:B}),\hspace{3mm}\forall\rho\in Sep_{A:B}\\
		\Lambda^{\dagger}(\rho)\in cone(Sep_{A:B}),\hspace{3mm} \forall\rho\in Sep_{A:B},
	\end{align*}
	then we say $\Lambda$ is dually nonentangling($\mathcal{DNE}$). If $\{\Lambda_i\}$ is a set of subchannels with each $\Lambda_i$ $\mathcal{DNE}$ and $\mathrm{ tr}\sum_i\Lambda_i(\cdot)=\mathrm{ tr}(\cdot)$, $\{\Lambda_i\}$ is a $\mathcal{DNE}$ instrument. The set of all such $\mathcal{DNE}$ instruments are denoted as $\mathbb{O}_{DNE}$. Following the work of Brandao and Plenio \cite{brandao2010reversible}, we can also define the set of asymptotically $\mathcal{NE}$ and $\mathcal{DNE}$ subchannels, respectively. Assume $\{\mathcal{E}_i\}$ is a set of subchannels such that $\sum_i\mathcal{E}_i$ is a channel, then the $\delta$-approximately non-entangling quantum instruments, $\mathcal{E}=\{\mathcal{E}_i|\mathcal{E}_i\in \boldsymbol{C}, \sum_i\mathcal{E}_i\in\boldsymbol{CP}\}$ is defined as
	\begin{align*}
		\mathbb{O}_{\mathcal{NE}}^{\delta}=\{\mathcal{E}|R_G(\frac{\mathcal{E}_i(\sigma)}{\mathrm{tr}\mathcal{E}_i(\sigma)})\le \delta, \forall \mathcal{E}_i\in \mathcal{E},\sigma\in Sep\}.
	\end{align*}
	where $R_G(\sigma)$ is the generalized robustness of entanglement. Analogously, the $\delta$-approximately dually non-entangling quantum instruments is defined as
	\begin{align*}
		\mathbb{O}_{\mathcal{DNE}}^{\delta}=\{\mathcal{E}|\mathcal{E}_i^{\dagger}(Sep)\subset cone(Sep),\forall\mathcal{E}_i\in \mathcal{E}\}\cap\mathbb{O}_{\mathcal{NE}}^{\delta}.
	\end{align*}

	Entanglement transformation is a fundamental task in quantum entanglement theory. When the dimension of target maximally entangled state is fixed, the probabilitic error exponent for $\rho$ in terms of $n$-blocklength scenario under the $\mathbb{O}_{\mathcal{F}_{\delta}}$ instruments $\{\mathcal{E}_i\}$ is defined as follows,
	\begin{align*}
		E_{d,err,p}^{(m),\mathbb{O}_{\mathcal{F}}^{\delta}}(\rho^{\otimes n}_{AB})=&\sup-\frac{1}{n}\log\epsilon_n\\
		\textit{s. t.}\hspace{4mm}&	F(\frac{\mathcal{E}_i(\rho_{AB}^{\otimes n})}{\mathrm{tr}(\mathcal{E}_i(\rho_{AB}^{\otimes n}))},\Psi_m)\ge 1-\epsilon_n\hspace{2mm}\exists \mathcal{E}_i\in\mathcal{E},\\
		&\mathcal{E}\in \mathbb{O}_{\mathcal{F}}^{\delta},\hspace{3mm}\mathcal{F}=\{\mathcal{NE},\mathcal{DNE}\}
	\end{align*}
	where $\Psi_m=\ket{\psi}\bra{\psi}$, $F(\rho,\sigma)=||\sqrt{\rho}\sqrt{\sigma}||^2_1,$ and the supermum takes over all instruments $\{\mathcal{E}_i\}\in \mathbb{O}_{\mathcal{F}}^{\delta}.$ Moreover,
	the zero-rate conditional error exponent of probabilistic entanglement distillation under $\mathbb{O}_{\mathcal{F}}^{\delta}$ for $\rho$ is 
	\begin{align*}
		E^{\infty,\mathbb{O}_{\mathcal{F}_{\delta}}}_{d,err,p}(\rho_{AB}):=\liminf\limits_{n\rightarrow\infty} \sup E_{d,err,p}^{(m_n),\mathbb{O}_{\mathcal{F}}^{\delta}}(\rho^{\otimes n}_{AB}),
	\end{align*}
	where $\{m_n\}_n$ takes over all sequences with $\lim\limits_{n\rightarrow\infty}\frac{\log m_n}{n}=0.$
	
	Next we will present an analytical formula for the probabilistic entanglement distillation under $\mathbb{O}_{\mathcal{NE}}$ and $\mathbb{O}_{\mathcal{DNE}}$ instruments when the success probability is nonvanishing. 
\begin{theorem}\label{t1}
Assume $\rho_{AB}^{\otimes n}$ is a bipartite state on $\mathcal{H}_{AB}^{\otimes n}$. Let $m\ge2\in \mathbb{N}$, $\delta> 0$, then the conditional error exponent of the entanglement distillation under $\mathbb{O}_{\mathcal{NE}_{\delta}}$  is 
\begin{align*}
	E_{d,err,p}^{(m),\mathbb{O}_{\mathcal{NE}}^{\delta}}(\rho^{\otimes n}_{AB})=\frac{1}{n}\hat{\beta}_{\frac{\delta+1}{m},Sep}(\rho_{AB}^{\otimes n}).
\end{align*}

Assume $\{m_n\ge2\}_n$ is a sequence of natural numbers with $\lim\limits_{n\rightarrow\infty}\frac{\log m_n}{n}=0,$ then
\begin{align*}
	E_{d,err,p}^{\infty,\mathbb{O}_{\mathcal{NE}}^{\delta}}(\rho_{AB})=\lim\limits_{n\rightarrow\infty}E_{d,err,p}^{(m_n),\mathbb{O}_{\mathcal{NE}}^{\delta}}(\rho^{\otimes n}_{AB})=\hat{D}_{\Omega,Sep}^{reg }(\rho).
\end{align*}
\end{theorem}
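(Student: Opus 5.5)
We prove the finite-blocklength identity by matching a lower and an upper bound, and then take the zero-rate limit. The quantity $\hat{\beta}_{p,\mathbf{F}}$ is the postselected composite hypothesis-testing exponent. It has not been defined in the part of the paper reproduced here, so we read it in the sense of \cite{regula2023postselected}: the optimal $-\log$ of the conditional type-I error of a postselected test $\{M_0,M_1,M_\perp\}$ discriminating $\rho$ from the whole set $\mathbf{F}$, where the conditional type-II error is constrained to be at most $p$ uniformly over $\sigma\in\mathbf{F}$.

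\textbf{Achievability ($\ge$).} We apply Theorem~\ref{th1} with $\mathcal{H}=\mathcal{H}_{AB}^{\otimes n}$ as input, $\mathbf{F}=Sep$, and $\Psi_m$ as the target. The relevant symmetry is the isotropic twirl $U\otimes\bar U$ acting on the $m\otimes m$ output. Each such unitary is local, hence nonentangling. $\Psi_m$ is an eigenvector of every $U\otimes\bar U$, and the commutant of this twirl is spanned by $\Psi_m$ and $I-\Psi_m$, so the decomposition is multiplicity-free in the sense of (\ref{hs}). The role of $d$ in Theorem~\ref{th1} is then played by $m$, via the overlap bound $\mathrm{tr}\Psi_m\sigma\le 1/m$ for separable $\sigma$.

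Concretely, take an optimal postselected test $\{M_0,M_1,M_\perp\}$ for $\rho_{AB}^{\otimes n}$ against $Sep$ with type-II threshold $\frac{\delta+1}{m}$. Build the instrument whose postselected branch is
\begin{align*}
\mathcal{E}_1(X)=\mathrm{tr}(M_0X)\,\Psi_m+\mathrm{tr}(M_1X)\,\frac{I-\Psi_m}{m^2-1},
\end{align*}
and let the remaining branch be $\mathrm{tr}(M_\perp X)\,\tau$ for some separable $\tau$. Two checks are needed:
\begin{itemize}
\item[(i)] On any separable input $\sigma$, $\mathcal{E}_1$ outputs an isotropic state with weight on $\Psi_m$ at most $\frac{\delta+1}{m}$. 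The generalized robustness of an isotropic state with weight $F$ on $\Psi_m$ is $\max\{0,mF-1\}$, which is at most $\delta$. Hence $\mathcal{E}\in\mathbb{O}_{\mathcal{NE}}^\delta$.
\item[(ii)] On $\rho_{AB}^{\otimes n}$, the conditional fidelity with $\Psi_m$ equals one minus the conditional type-I error.
\end{itemize}
Together these give $E\ge \frac1n\hat\beta_{\frac{\delta+1}{m},Sep}$.

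\textbf{Converse ($\le$).} Let $\{\mathcal{E}_i\}\in\mathbb{O}_{\mathcal{NE}}^\delta$ be an instrument whose branch $\mathcal{E}_i$ achieves $F\ge1-\epsilon$. From it we build the postselected test
\begin{align*}
M_0=\mathcal{E}_i^\dagger(\Psi_m),\quad M_1=\mathcal{E}_i^\dagger(I-\Psi_m),\quad M_\perp=I-\mathcal{E}_i^\dagger(I).
\end{align*}
The conditional type-I error on $\rho^{\otimes n}$ is exactly $1-F\le\epsilon$.

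For separable $\sigma$, write $\omega=\mathcal{E}_i(\sigma)/\mathrm{tr}\,\mathcal{E}_i(\sigma)$. By assumption $R_G(\omega)\le\delta$, so $\omega\le(1+\delta)\sigma'$ for some separable $\sigma'$. Therefore
\begin{align*}
\mathrm{tr}\,\Psi_m\omega\le(1+\delta)\,\mathrm{tr}\,\Psi_m\sigma'\le\frac{1+\delta}{m},
\end{align*}
so the conditional type-II error is at most $\frac{\delta+1}{m}$. Taking $-\log$ and optimizing over instruments gives the matching upper bound. This step is elementary once the definition of $\hat\beta$ is fixed; its only subtlety is handling branches with $\mathrm{tr}\,\mathcal{E}_i(\sigma)=0$, which impose no constraint.

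\textbf{Zero-rate limit.} Divide the finite-blocklength identity by $n$ and let $n\to\infty$ with $\frac{\log m_n}{n}\to0$, so the type-II threshold $\frac{\delta+1}{m_n}$ decays subexponentially. We need a postselected Stein/Sanov-type statement: for any subexponentially decaying threshold, $\frac1n\hat\beta_{p_n,Sep}(\rho^{\otimes n})\to\hat D^{reg}_{\Omega,Sep}(\rho)$. For this we would combine two facts from \cite{regula2023postselected}:
\begin{itemize}
\item[(a)] The one-shot postselected exponent is characterized by a Hilbert-projective/$\Omega$-type quantity, $\hat\beta_{p,\mathbf{F}}(\rho)=\hat D_{\Omega,\mathbf{F}}(\rho)+O(\log\frac1p)$ up to thresholds.
\item[(b)] Bounds of the form $\hat D_{\Omega}-\log\frac{1}{p}\le\hat\beta_p\le\hat D_\Omega$, whose correction term is sublinear in $n$ here because $\log m_n$ is.
\end{itemize}
This limit is the main obstacle. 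The $\Omega$-divergence against a composite, non-i.i.d.\ family such as $Sep_{A^n:B^n}$ must be shown to be superadditive (or at least to have a regularization that exists as a limit) under tensor powers, so that $\lim$ rather than $\liminf$ is justified. Our first attempt would use Fekete's lemma on $\hat D_{\Omega,Sep}(\rho^{\otimes n})$, relying on the closure of $Sep$ under tensor products and on the multiplicativity of the max/min eigenvalue ratios that define $\Omega$.
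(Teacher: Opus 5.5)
Your proposal is correct and is essentially the paper's proof: the converse uses the same Heisenberg-picture test $(\mathcal{E}_i^{\dagger}(\Psi_m),\mathcal{E}_i^{\dagger}(\mathbb{I}-\Psi_m),\mathbb{I}-\mathcal{E}_i^{\dagger}(\mathbb{I}))$, the achievability uses the same measure-and-prepare isotropic instrument, and the zero-rate step is exactly the paper's Corollary~\ref{c1} (the one-shot formula $2^{\hat{\beta}_{p}}=1+\frac{p}{1-p}\hat{\Omega}_{Sep}$ plus Fekete); your direct bound $\omega\le(1+\delta)\sigma'$ is slightly cleaner than the paper's appeal to Lemma~\ref{dm}, since that lemma is stated only for isotropic-output maps and so relies implicitly on the twirling reduction. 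Minor fixes: the upper bound in (b) should read $\hat{\beta}_p\le\hat{D}_{\Omega}+\log\frac{1}{1-p}$ (as written it fails for $\rho\in\mathbf{F}$); $\log\hat{\Omega}_{Sep}(\rho^{\otimes n})$ is subadditive rather than superadditive, which is what Fekete needs; and the detour through Theorem~\ref{th1} is unnecessary (its $d$ would be the output dimension $m^2$), because your checks (i)--(ii) already establish achievability.
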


The proof of Theorem \ref{t1} is placed in the Appendix \ref{aa0}.

\begin{example}\label{e1}
	Assume $\mathcal{H}_{AB}$ is a bipartite system with $dim(\mathcal{H}_A)=dim(\mathcal{H}_B)=d$, and $\rho_{AB}$ is the Werner state,
	\begin{align*}
		\rho_{p}=p\cdot\frac{2P_s}{d(d+1)}+(1-p)\cdot\frac{2P_{as}}{d(d-1)},
	\end{align*}
	here $P_s=\frac{I+F}{2}$, $P_{as}=\frac{I-F}{2}$, $F$ is the swap operator, $F=\sum_{ij}\ket{ij}\bra{ji}$. Then for each $n\in\mathbb{N}$,
	\begin{equation*}
	E^{\infty,\mathbb{O}_{\mathcal{NE}}^{\delta}}_{d,err,p}(\rho_p)=	\frac{1}{n}D_{\Omega,Sep}(\rho_p^{\otimes n})= \begin{cases} 
		\log\frac{1-p}{p}\hspace{3mm}p<\frac{1}{2} \\
			0\hspace{12mm}p\ge\frac{1}{2}
		\end{cases}  .
	\end{equation*}
\end{example}
The proof of Example \ref{e1} is proved in \cite{11654091}. For the sake of completeness of the paper, we place the proofs in the Appendix \ref{aa1}. 

\begin{figure}[h] 
	\centering
	\includegraphics[width=0.5\textwidth]{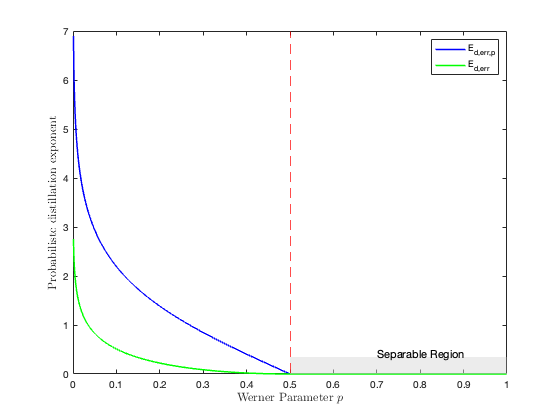} 
	\caption{The asymptotic error exponent of probabilistic entanglement distillation of $\rho_p$ under NE.}
	\label{fig1}
\end{figure}

 	Based on the results in \cite{lami2024}, the conditional error exponent of distillable entanglement under non-entangling operations without postselection equals to
 	\begin{align}
 		E^{\infty,\mathbb{O}_{\mathcal{NE}}^{\delta}}_{d,err}(\rho_{AB})=\min_{\sigma\in Sep } D(\sigma_{AB}||\rho_{AB})=-\frac{\log[4p(1-p)]}{2},\label{derr}
 	\end{align}
 	the proof of (\ref{derr}) is placed in Appendix \ref{aa1}. As when $p\in (0,\frac{1}{2})$, $$\log\frac{1-p}{p}>-\frac{1}{2}\log[4p(1-p)],$$ this demonstrates that postselection can strictly improve the conditional error exponent of entanglement distillation over its deterministic counterpart, albeit at the expense of a reduced probability of success, which is shown in Fig \ref{fig1}.

 \begin{example}\label{e2}
 	Assume $\rho$ is a two-qubit state with 
 	\begin{align*}
 		\rho=p_1\Psi^{+}+p_2\Psi^{-}+p_3\Phi^{+}+p_4\Phi^{-},\\
 		\Psi^{+}=\frac{1}{2}(\ket{00}+\ket{11})(\bra{00}+\bra{11}),\\
 		\Psi^{-}=\frac{1}{2}(\ket{00}+\ket{11})(\bra{00}-\bra{11}),\\
 		\Phi^{+}=\frac{1}{2}(\ket{01}+\ket{10})(\bra{01}+\bra{10}),\\
 		\Phi^{-}=\frac{1}{2}(\ket{01}-\ket{10})(\bra{01}-\bra{10}),
 	\end{align*}
 	where $p_i\ge 0$ and $\sum_i p_i=1.$ Then for each $n\in \mathbb{N}$,
 \begin{align*}
 	E^{\infty,\mathbb{O}_{\mathcal{NE}}^{\delta}}_{d,err,p}(\rho)=	  \begin{cases} 
 		\log\frac{p_{max}}{1-p_{max}}\hspace{3mm}p_{max}>\frac{1}{2} \\
 		0\hspace{18mm}p_{max}\le\frac{1}{2}
 	\end{cases}.
 \end{align*}
 Here $p_{max}=\max\{p_1,p_2,p_3,p_4\}$. 
 
 \noindent The proof of Example \ref{e2} is placed in the Appendix \ref{aa1}. Based on \cite{lami2024}, when $p_{max}> \frac{1}{2}$
 \begin{align*}
 	E^{\infty,\mathbb{O}_{\mathcal{NE}}^{\delta}}_{d,err}(\rho)=\min_{\sigma\in Sep_{A:B}} D(\sigma||\rho)=-\frac{\log[4p_{max}(1-p_{max})]}{2},
 \end{align*}
when $p_{max}\in (\frac{1}{2},1)$, $$\log\frac{p_{max}}{1-p_{max}}\ge -\frac{\log[4p_{max}(1-p_{max})]}{2},$$ this example also demonstrates that postselection can strictly improve the conditional error exponent of entanglement distillation over its deterministic counterpart.
 \end{example}

Next we present the error exponents of the probabilistic entanglement distillation for a bipartite state under $\mathbb{O}_{\mathcal{DNE}}^{\delta}$ instruments.

\begin{theorem}\label{th2}
Assume $\rho_{AB}^{\otimes n}$ is a bipartite state on $\mathcal{H}_{AB}^{\otimes n}$. Let $\delta\ge 0$ and $\{m_n\ge 2\}_n$ be a sequence of natural numbers, then the error exponent of  entanglement distillation satisfies
 \begin{align*}
	\hat{\beta}^{\mathbb{SEP}}_{\frac{\delta+2}{m_n+1},Sep}(\rho_{AB}^{\otimes n})- \log\frac{m_n+1}{m_n}\ge&	nE_{d,err,p}^{(m_n),\mathbb{O}_{\mathcal{DNE}}^{\delta}}(\rho_{AB}^{\otimes n})\\\ge& \hat{\beta}_{\frac{\delta+1}{m_n},Sep}^{\mathbb{SEP}}(\rho_{AB}^{\otimes n}).
\end{align*}
Furthermore, if $\{m_n\ge 2\}_n$ is any sequence with $\lim\limits_{n\rightarrow\infty}\frac{\log m_n}{n}=0,$ 
\begin{align*}
	\liminf\limits_{n\rightarrow\infty} E^{(m_n),\mathbb{O}_{\mathcal{DNE}}^{\delta}}_{d,err,p}(\rho^{\otimes m_n}_{AB})=&\hat{D}_{\Omega,Sep}^{reg,\mathbb{SEP}}(\rho_{AB}).
\end{align*}
\end{theorem}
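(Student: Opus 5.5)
The plan is to reduce probabilistic distillation under $\mathbb{O}_{\mathcal{DNE}}^{\delta}$ to postselected composite hypothesis testing in which the tests are restricted to the cone generated by separable operators, i.e.\ the quantity $\hat{\beta}^{\mathbb{SEP}}_{\cdot,Sep}$. We argue in both directions, in the same spirit as the proof of Theorem \ref{t1}, but now tracking the extra dual constraint $\mathcal{E}_i^{\dagger}(Sep)\subset cone(Sep)$.

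\textbf{Lower bound (achievability).} Take an optimal postselected test $(M,\ldots)$ for $\hat{\beta}^{\mathbb{SEP}}_{\frac{\delta+1}{m_n},Sep}(\rho_{AB}^{\otimes n})$, with $M$ and $I-M$ in the separable cone. Build the measure-and-prepare instrument
\[
\mathcal{E}_1(X)=\mathrm{tr}(MX)\,\Psi_{m_n}+\mathrm{tr}(M'X)\,\frac{I-\Psi_{m_n}}{m_n^2-1},
\]
together with a complementary branch that discards. Here $M'$ is the second postselection operator.
\begin{itemize}
\item[(i)] For separable $\sigma$, the normalized output has overlap with $\Psi_{m_n}$ controlled by the testing constraint. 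Using the isotropic-state robustness formula, $R_G$ of an isotropic state with singlet fraction $f$ is $\max(0,m_nf-1)$, and this gives $R_G\le\delta$ exactly when the threshold equals $\frac{\delta+1}{m_n}$.
\item[(ii)] The dual maps $\mathcal{E}_i^{\dagger}(Y)=\mathrm{tr}(\Psi Y)M+\cdots$ send separable $Y$ into $cone(Sep)$ precisely because $M,M'\in cone(Sep)$. This is where the superscript $\mathbb{SEP}$ comes from.
\item[(iii)] The conditional fidelity of branch $1$ on $\rho^{\otimes n}$ is $1-\epsilon$, with $\epsilon$ equal to the postselected type-II error. This yields $nE\ge\hat{\beta}^{\mathbb{SEP}}_{\frac{\delta+1}{m_n},Sep}$.
\end{itemize}

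\textbf{Upper bound (converse).} Given any $\mathcal{E}\in\mathbb{O}^{\delta}_{\mathcal{DNE}}$ and a branch $\mathcal{E}_i$ attaining fidelity $1-\epsilon$, first twirl the output with $U\otimes\bar U$. This keeps the instrument in the class, since local unitaries preserve $Sep$ and its dual. Set $M=\mathcal{E}_i^{\dagger}(\Psi_{m_n})$ and $N=\mathcal{E}_i^{\dagger}(I)$. A suitable combination, $\mathcal{E}_i^{\dagger}(I-\Psi_{m_n})$ together with $\mathcal{E}_i^{\dagger}(\Psi_{m_n})$, gives a postselected test. However, $\Psi_{m_n}$ is not separable, so $\mathcal{E}_i^{\dagger}(\Psi_{m_n})$ need not lie in the separable cone. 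The fix is to write
\[
\Psi_{m_n}=\tfrac{1}{m_n+1}\bigl[(m_n+1)\Psi_{m_n}+(I-\Psi_{m_n})\bigr]-\tfrac{1}{m_n+1}(I-\Psi_{m_n}),
\]
using that $\frac{I+m_n\Psi}{\cdot}$-type isotropic operators at the separability boundary are in $cone(Sep)$. Replace $\Psi$ by the separable operator $\frac{\Psi+\frac{1}{m_n}(I-\Psi)\cdot\ldots}{}$, i.e.\ the boundary isotropic operator $P=\Psi_{m_n}+\frac{1}{m_n+1}(I-\Psi_{m_n})$ suitably normalized. Then $\mathcal{E}_i^{\dagger}(P)$ and $\mathcal{E}_i^{\dagger}(I-P)$ lie in $cone(Sep)$ by the $\mathcal{DNE}$ property.

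The robustness condition bounds $\mathrm{tr}\,\mathcal{E}_i^{\dagger}(\Psi)\sigma/\mathrm{tr}\,\mathcal{E}_i^{\dagger}(I)\sigma\le\frac{\delta+1}{m_n}$. After mixing in the $\frac{1}{m_n+1}$ part, the threshold becomes $\frac{\delta+2}{m_n+1}$. The error on $\rho^{\otimes n}$ is inflated by the factor $\frac{m_n+1}{m_n}$, which gives the $-\log\frac{m_n+1}{m_n}$ term. I expect this separable-approximation step, and getting exactly these constants, to be the main obstacle. I would first try the isotropic twirl plus the explicit separable boundary operator.

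\textbf{Asymptotics.} Divide by $n$ and use $\frac{\log m_n}{n}\to0$. This gives $\frac1n\log\frac{m_n+1}{m_n}\to0$. The thresholds $\frac{\delta+1}{m_n}$ and $\frac{\delta+2}{m_n+1}$ decay subexponentially, so by the zero-rate property of $\hat{\beta}^{\mathbb{SEP}}$, its normalized limit is independent of such subexponential thresholds and equals $\hat{D}^{reg,\mathbb{SEP}}_{\Omega,Sep}$, analogously to the argument in Theorem \ref{t1}. Hence both bounds converge to $\hat{D}_{\Omega,Sep}^{reg,\mathbb{SEP}}(\rho_{AB})$.
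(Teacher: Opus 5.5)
Your proposal is correct and follows essentially the same route as the paper. The achievability is the same measure-and-prepare instrument built from a separable postselected test. Your boundary isotropic operator $P=\Psi_{m_n}+\frac{1}{m_n+1}(\mathbb{I}-\Psi_{m_n})=\frac{1}{m_n+1}(\mathbb{I}+m_n\Psi_{m_n})$, together with $\mathbb{I}-P=\frac{m_n}{m_n+1}(\mathbb{I}-\Psi_{m_n})$, is exactly the pair the paper pulls back through $\mathcal{E}_1^{\dagger}$, and the asymptotics then follow from Corollary \ref{c2}. Two small corrections: $P$ must not be renormalized (it is already a valid effect, $0\le P\le\mathbb{I}$), and the conditional error on $\rho^{\otimes n}$ is \emph{reduced} to at most $\frac{m_n}{m_n+1}\epsilon$ rather than inflated by $\frac{m_n+1}{m_n}$ --- inflation would give $+\log\frac{m_n+1}{m_n}$, whereas the reduction is what yields the $-\log\frac{m_n+1}{m_n}$ term.
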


The proof of Theorem \ref{th2} is placed in the Appendix \ref{aa0}.

\subsection{The resource theory of coherence}
Quantum coherence springs from the state superposition principle, which distinguishes quantum information from classical highly \cite{streltsov2017colloquium}. Coherence is closely related to quantum entanglement \cite{horodecki2009quantum}, quantum nonlocality \cite{brunner2014bell}, quantum imaginarity \cite{Hickey_2018,wu2021resource,wu2024resource,PhysRevA.111.L050401} and so on. Coherence also plays crucial roles in quantum algorithms \cite{rastegin2018role,ahnefeld2022coherence,li2023evolution,ye2026coherence,zhou2026cpl}, transport theory \cite{witt2013stationary,shi2024coherence}, and biology \cite{plenio2008dephasing,huelga2013vibrations}. 

Let $\mathcal{H}$ be a $d$-dimensional Hilbert space and $\mathcal{E}=\{\ket{i}\}$ be the set consisting of a prescribed orthonormal basis of $\mathcal{H},$ the set of incoherent states $I_{\mathcal{E}}$ is composed of all the states that are diagonal with respect to the basis $\{\ket{i}\}$. A typical maximally coherent state can be written as $\ket{\psi_d}=\frac{1}{\sqrt{d}}\sum_{i=0}^{d-1}\ket{i}.$ One of the most common used  coherence quantifiers is generalized robustness of coherence, $R_{G}(\cdot)$ \cite{napoli2016robustness,piani2016robustness}, the generalized robustness of coherence for $\rho$ is defined as
\begin{align*}
	R_G(\rho)=\min\{s|\frac{\rho+s\sigma}{1+s}\in I_{\mathcal{E}}, \sigma\in \boldsymbol{D}({\mathcal{H}})\}.
\end{align*}

 Assume $\{\mathcal{E}_i\}$ is a set of subchannels such that $\sum_i\mathcal{E}_i$ is a channel, then the $\delta$-approximately $\mathcal{NG}$ instruments is defined as
\begin{align*}
	\mathbb{O}_{\mathcal{NG}}^{\delta}=\{\mathcal{E}|R_G(\frac{\mathcal{E}_i(\sigma)}{\mathrm{tr}\mathcal{E}_i(\sigma)})\le \delta, \forall \mathcal{E}_i\in \mathcal{E},\sigma\in I_{\mathcal{E}}\}.
\end{align*}
where $R_G(\sigma)$ is the generalized robustness of coherence. 
When the dimension of target maximally coherent state is fixed, the probabilitic error exponent for $\rho$ in terms of $n$-blocklength scenario under $\mathbb{O}^{\delta}_{\mathcal{NG}}$-instruments $\{\mathcal{E}_i\}$ is defined as follows,
\begin{align*}
	E_{d,err,p}^{(m),\mathbb{O}^{\delta}_{\mathcal{NG}}}(\rho^{\otimes n})=&\sup-\frac{1}{n}\log\epsilon_n\\
	\textit{s. t.}\hspace{4mm}&	F(\frac{\mathcal{E}_i(\rho^{\otimes n})}{\mathrm{tr}(\mathcal{E}_i(\rho^{\otimes n}))},\Psi_m)\ge 1-\epsilon_n,\\
	&\exists \mathcal{E}_i\in\mathcal{E},\mathcal{E}\in \mathbb{O}^{\delta}_{\mathcal{NG}}.
\end{align*}
where $\Psi_m=\ket{\psi_m}\bra{\psi_m}$, and the supermum takes over all instruments $\mathcal{E}=\{\mathcal{E}_i\}\in \mathbb{O}_{\mathcal{NG}}^{\delta}.$ Moreover,
the zero-rate conditional error exponent of probabilistic coherent distillation under $\mathbb{O}_{\mathcal{NG}}^{\delta}$ for $\rho$ is 
\begin{align*}
	E^{\infty,\mathbb{O}_{\mathcal{NG}}^{\delta}}_{d,err,p}(\rho):=\liminf\limits_{n\rightarrow\infty} \sup E_{d,err,p}^{(m_n),\mathbb{O}^{\delta}_{\mathcal{NG}}}(\rho),
\end{align*}
where $\{m_n\ge2\}_n$ takes over all sequences with $\lim\limits_{n\rightarrow\infty}\frac{\log m_n}{n}=0.$

Next we will present an analytical formula for the probabilistic coherence distillation under $\mathbb{O}^{\delta}_{\mathcal{NG}}$ instruments when the success probability is nonvanishing. 
\begin{theorem}\label{th3}
	Assume $\rho^{\otimes n}$ is a state on $\mathcal{H}^{\otimes n}$. Let $m\ge2\in \mathbb{N}$, $\delta\ge 0$. Then the conditional error exponent of the coherence distillation under $\mathbb{O}^{\delta}_{\mathcal{NG}}$ is 
	\begin{align*}
		E_{d,err,p}^{(m),\mathbb{O}^{\delta}_{\mathcal{NG}}}(\rho^{\otimes n} )\le\frac{1}{n}\hat{\beta}_{\frac{\delta+1}{m},ic}(\rho^{\otimes n}).
	\end{align*}
	
	Assume $\{m_n\ge2\}_n$ is a sequence of natural numbers with $\lim\limits_{n\rightarrow\infty}\frac{\log m_n}{n}=0,$ then
	\begin{align*}
		E_{d,err,p}^{\infty,\mathbb{O}^{\delta}_{\mathcal{NG}}}(\rho)\le\hat{D}_{\Omega,ic}^{reg }(\rho).
	\end{align*}
\end{theorem}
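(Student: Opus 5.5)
The plan is a converse argument: from any instrument achieving conditional fidelity $1-\epsilon_n$, build a postselected composite test between $\rho^{\otimes n}$ and the incoherent states, with conditional type-II error at most $\frac{\delta+1}{m}$ and conditional type-I error at most $\epsilon_n$. I read $\hat{\beta}_{\eta,ic}(\rho)$, in the sense of \cite{regula2023postselected}, as the largest $-\log$ of the conditional type-I error of a postselected test whose conditional type-II error is at most $\eta$ against every incoherent state. Under that reading, the definition of $\hat{\beta}$ gives $-\log\epsilon_n\le\hat{\beta}_{\frac{\delta+1}{m},ic}(\rho^{\otimes n})$. Taking the supremum over instruments and dividing by $n$ then yields the first inequality.

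\emph{Step 1 (the test).} Let $\mathcal{E}=\{\mathcal{E}_i\}\in\mathbb{O}^{\delta}_{\mathcal{NG}}$, and let $\mathcal{E}_i$ be the branch with $F\big(\mathcal{E}_i(\rho^{\otimes n})/\mathrm{tr}\,\mathcal{E}_i(\rho^{\otimes n}),\Psi_m\big)\ge 1-\epsilon_n$. In the Heisenberg picture set
\[
P=\mathcal{E}_i^{\dagger}(I),\qquad M=\mathcal{E}_i^{\dagger}(I-\Psi_m).
\]
Both operators are positive, $0\le M\le P\le I$ because $\mathcal{E}_i$ is trace nonincreasing, and $P-M=\mathcal{E}_i^{\dagger}(\Psi_m)$. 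The postselected test accepts with $P$ and, conditioned on acceptance, declares ``free'' on $M$ and ``$\rho^{\otimes n}$'' on $P-M$. Since $\Psi_m$ is pure, $F(\omega,\Psi_m)=\mathrm{tr}\,\Psi_m\omega$. Hence the conditional type-I error is
\[
\frac{\mathrm{tr}\,M\rho^{\otimes n}}{\mathrm{tr}\,P\rho^{\otimes n}}=1-F\Big(\tfrac{\mathcal{E}_i(\rho^{\otimes n})}{\mathrm{tr}\,\mathcal{E}_i(\rho^{\otimes n})},\Psi_m\Big)\le\epsilon_n .
\]

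\emph{Step 2 (type-II bound).} Fix $\sigma\in I_{\mathcal{E}}$ with $\mathrm{tr}\,\mathcal{E}_i(\sigma)>0$, and let $\omega=\mathcal{E}_i(\sigma)/\mathrm{tr}\,\mathcal{E}_i(\sigma)$. By assumption $R_G(\omega)\le\delta$, so there is a state $\tau$ with $\frac{\omega+s\tau}{1+s}=\gamma\in I_{\mathcal{E}}$ for some $s\le\delta$. This gives $\omega\le(1+\delta)\gamma$, so
\[
\frac{\mathrm{tr}\,(P-M)\sigma}{\mathrm{tr}\,P\sigma}=\mathrm{tr}\,\Psi_m\omega\le(1+\delta)\,\mathrm{tr}\,\Psi_m\gamma=\frac{1+\delta}{m},
\]
using property (3) with $c_m=1/m$ for the maximally coherent state. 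If $\mathrm{tr}\,P\sigma=0$, the state $\sigma$ is never accepted and imposes no constraint. Therefore the test is feasible for $\hat{\beta}_{\frac{\delta+1}{m},ic}(\rho^{\otimes n})$, which proves the finite-blocklength bound.

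\emph{Step 3 (zero-rate limit).} Apply the finite bound with $m=m_n$ and take $\liminf_n$. What remains is to show
\[
\limsup_n\frac{1}{n}\hat{\beta}_{\frac{\delta+1}{m_n},ic}(\rho^{\otimes n})\le\hat{D}^{reg}_{\Omega,ic}(\rho)
\]
whenever $\frac{\log m_n}{n}\to0$. This is the main obstacle. I would use the known characterization that the postselected exponent is essentially the Thompson/Hilbert-type divergence $D_\Omega$ up to an additive term of order $\log\frac{m}{1+\delta}$. Concretely, $\hat{\beta}_{\eta,ic}(\omega)\le D_{\Omega,ic}(\omega)+\log\frac{1}{\eta}+O(1)$, which is the same estimate as the one used for Theorem \ref{t1}. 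Since $\frac{1}{n}\log\frac{m_n}{1+\delta}\to0$, dividing by $n$ and passing to the regularization gives the claim. I expect only an inequality here, not the equality of Theorem \ref{t1}. The reason is that the achievability direction of Theorem \ref{th1} requires the group-twirling structure (\ref{hs}) with multiplicity-free irreducible representations, and that structure fails for the twirling of $\Psi_m$ in coherence theory.
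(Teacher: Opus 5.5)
Your proof is correct and follows the paper's route: the same Heisenberg-picture test $M_1=\mathcal{E}_i^{\dagger}(\mathbb{I}-\Psi_m)$, $M_2=\mathcal{E}_i^{\dagger}(\Psi_m)$, the same threshold $\frac{\delta+1}{m}$ on incoherent inputs, and Corollary \ref{c1} for the limit. Two differences in detail favour you: you get the threshold directly from $\omega\le(1+\delta)\gamma$ and $\mathrm{tr}\,\Psi_m\gamma=\frac{1}{m}$, which is cleaner than the paper's appeal to Lemma \ref{cm} (that lemma by itself covers only twirled outputs), and, unlike the paper, you account for the $n$-dependent threshold. In fact the exact formula gives $\hat{\beta}_{\eta,ic}\le\log\hat{\Omega}_{ic}+\log\frac{1}{1-\eta}$, so this direction does not need the zero-rate condition at all, only $m_n>\delta+1$, which both you and the paper tacitly assume.

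Your closing explanation is mistaken, though. The permutation twirl on $\mathbb{C}^m$ is multiplicity-free (trivial plus standard representation), which is exactly how the paper gets $\mathcal{T}_S(\rho)=a\Psi_m+b\tau_m$. The real obstruction to achievability is that the flag state $\tau_m=\frac{\mathbb{I}-\Psi_m}{m-1}$ is itself coherent, with $R_G(\tau_m)=\frac{1}{m-1}$ by Lemma \ref{cm}. So the instrument built from a feasible test can violate the $\delta$-robustness constraint on incoherent inputs with small $\mathrm{tr}\,M_2\sigma$.
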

The proof of Theorem \ref{th3} is placed in the Appendix \ref{apc}.

\subsection{The resource theory of magic}
  According to the Gottesman-Knill theorem \cite{Scott2004}, a quantum circuit comprised of only Clifford gates can be simulated efficiently on a classical computer. Hence, to reflect the advantages in comparison with classical computers, a quantum computer should be capable of non-Clifford operations or have nonstabilizer states available to it. Such states and operations can be seen as resources to perform universal quantum computation \cite{BravyiKitaev2005,Veitch_2014,HowardCampbell2017}. Similar with the resources of quantum entanglement and coherence, a key problem is to consider the transformation from the nonstabilizer states into high quality nonstabilizer states \cite{CampbellBrowne2010,CampbellAnwarBrowne2012,BravyiHaah2012,HaahEtAl2017,HaahHastings2018,WangWildeSu2020,Wills2025}. 

Here we consider the Hilbert state $\mathcal{H}$ with $\dim\mathcal{H}=3$. For the magic resource of the qutrit system, the Strange state is one of the class of states with the maixmal sum negativity of discrete Wigner function \cite{Veitch_2014}, which can be written as 
\begin{align*}
	\ket{S}=\frac{1}{\sqrt{2}}(\ket{1}-\ket{2}).
\end{align*}

Next we present conditional error exponents of the following probabilistic magic transformation from $\rho$ to $\mathbb{S}$ under $\mathbb{O}_{\mathcal{NG}}^{\delta}$ instruments.

\begin{theorem}\label{edm}
	Assume $\rho$ is a state on $\mathcal{H}$, let $\delta\ge 0$, then the conditional error exponent of the magic distillation under $\mathbb{O}^{\delta}_{\mathcal{NG}}$ is 
\begin{align*}
	E_{d,err,p}^{\infty,\mathbb{O}^{\delta}_{\mathcal{NG}}}(\rho)=\hat{D}_{\Omega,STAB}^{reg }(\rho),
\end{align*}
	here $\mathbb{O}^{\delta}_{\mathcal{NG}}=\{\{\mathcal{E}_i\}|R_{\boldsymbol{G}}(\frac{\mathcal{E}_i(\sigma)}{\mathrm{ tr}\mathcal{E}_i(\sigma)})\le \delta,\forall\sigma\in STAB,\forall i\}.$
\end{theorem}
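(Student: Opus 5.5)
We will follow the same route as for Theorem \ref{t1}: first pin down the one-shot quantity for $n$ copies by a matching pair of bounds, and then regularize along sequences with $\log m_n/n\to 0$. The target is the qutrit Strange state $\ket{S}$. For $n$-blocklength we take $\ket{S}^{\otimes k}$ (or its symmetrized family) as the reference state; the proof only needs one structural property of it.

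\textbf{Achievability (lower bound).} We apply Theorem \ref{th1} with $\mathbf{F}=STAB$. The Strange state is stabilized, up to phases, by a finite subgroup of the qutrit Clifford group. Its symplectic part is the order-two element $\ket{j}\mapsto\ket{-j}$, under which $\ket{S}$ has eigenvalue $-1$. Together with the diagonal Clifford phases that fix $\ket{S}$, this gives a finite group $G$ whose conjugations lie in $\mathbf{O}_{max}$, since Clifford unitaries preserve $STAB$. We decompose $\mathcal{H}=\mathbb{C}^3$ (and its tensor powers) under $G$. We must check that this decomposition is multiplicity-free, so that (\ref{hs}) holds. For $d=3$ we expect the isotypic components to be $\mathrm{span}\{\ket{S}\}$, $\mathrm{span}\{\ket{0}\}$ and $\mathrm{span}\{\ket{1}+\ket{2}\}$, with distinct characters. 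We also need the overlap bound (\ref{p3}) with the appropriate constant, namely $\max_{\sigma\in STAB}\bra{S}\sigma\ket{S}$. Theorem \ref{th1} then gives $E_{d,err,p}\ge\hat{\beta}_{c(\delta+1),STAB}(\rho^{\otimes n})$. Finally we pass to the regularized postselected composite quantity $\hat{D}^{reg}_{\Omega,STAB}$, using the same limit argument as in Theorem \ref{t1}.

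\textbf{Converse (upper bound).} Take any $\mathcal{E}\in\mathbb{O}^{\delta}_{\mathcal{NG}}$ and a branch $\mathcal{E}_i$ with fidelity $1-\epsilon$. We build the postselected test $(M_0,M_1)=(\mathcal{E}_i^{\dagger}(\Psi),\mathcal{E}_i^{\dagger}(I-\Psi))$, where $\Psi=\ket{S}\bra{S}$. On $\rho$ the conditional type-I error is at most $\epsilon$. For each stabilizer $\sigma$, we have $R_G(\mathcal{E}_i(\sigma)/\mathrm{tr}\,\mathcal{E}_i(\sigma))\le\delta$. Hence the normalized output is dominated by $(1+\delta)$ times a stabilizer state, and its conditional overlap with $\Psi$ is at most $(1+\delta)c$. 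This puts the test in the feasible set defining $\hat{\beta}$, so $-\log\epsilon\le\hat{\beta}(\rho^{\otimes n})$. Dividing by $n$ and taking the limit yields $\le\hat{D}^{reg}_{\Omega,STAB}(\rho)$. Because the constant $c(\delta+1)$ is fixed and independent of $n$, it disappears in the zero-rate limit. This is what makes the two bounds coincide asymptotically.

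\textbf{Main obstacle.} The key step is verifying hypothesis (\ref{hs}) for $\ket{S}^{\otimes k}$ under a Clifford subgroup. Multiplicity-freeness can fail for tensor powers. In that case we would not use $\ket{S}^{\otimes k}$ directly. Instead we would twirl only by the tensor product of single-site stabilizer groups, twisted by permutations, and check directly that the commutant restricted to the relevant span is abelian. Alternatively, we would target a single Strange state, with $m$ fixed, and absorb the dimension dependence into the sequence $m_n$ with $\log m_n/n\to0$, where only the regularized limit matters. We also need the regularized limit to exist, which we would get from subadditivity of $\hat{\beta}$ under tensor products of tests, as in Theorem \ref{t1}.
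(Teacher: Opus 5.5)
Your argument is sound in outline and reaches the same identity, but both directions are argued differently from the paper.

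For the converse, the paper first reduces every branch, via the full symplectic Clifford twirl $\mathcal{T}_{Sp}$ (Lemma \ref{lf}), to the form $\mathrm{tr}(M_2\,\cdot)\,\mathbb{S}+\mathrm{tr}(M_1\,\cdot)\,\frac{\mathbb{I}-\mathbb{S}}{2}$. It then reads off the level $\frac{1+\delta}{2}$ from $R_{\boldsymbol{G}}=\max(0,2F-1)$ (Lemmas \ref{mrs} and \ref{mm}). Your domination argument, $\mathcal{E}_i(\sigma)/\mathrm{tr}\,\mathcal{E}_i(\sigma)\le(1+\delta)\sigma'$ with $\sigma'\in STAB$ and $\max_{\sigma'\in STAB}\bra{S}\sigma'\ket{S}=\frac{1}{2}$, gives the same level with no symmetry at all. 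That is cleaner and works for any target.

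For achievability the paper does not use Theorem \ref{th1}. It sends the accepted outcome to $\mathbb{S}$, and both the rejected and the discarded outcomes to $\frac{\mathbb{I}-\mathbb{S}}{2}=\mathcal{T}_{Sp}(\ket{0}\bra{0})\in STAB$. So every test feasible at level $\frac{1+\delta}{2}$ becomes a $\delta$-approximately nongenerating instrument. The two levels then coincide, and one gets the exact identity $E^{\mathbb{O}^{\delta}_{\mathcal{NG}}}_{d,err,p}(\rho^{\otimes n})=\frac{1}{n}\hat{\beta}_{\frac{1+\delta}{2},STAB}(\rho^{\otimes n})$ at every $n$, before invoking Corollary \ref{c1}. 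Your route through Theorem \ref{th1} uses the abelian group generated by $\ket{j}\mapsto\ket{-j}$ and $\mathrm{diag}(1,\omega,\omega)$; your sector decomposition into $\ket{S}$, $\ket{0}$, $\ket{1}+\ket{2}$ is correct. But it only gives level $\frac{1+\delta}{3}$, so the constant $c$ is not the same on both sides. The bounds meet only after regularization, where Corollary \ref{c1} makes every fixed level in $(0,1)$ equivalent.

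There are three caveats.

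\emph{First}, your argument needs $\frac{1+\delta}{2}<1$, i.e.\ $\delta<1$. For $\delta\ge 1$ one has $R_{\boldsymbol{G}}(\mathbb{S})=1\le\delta$, so the replacement channel onto $\mathbb{S}$ is admissible and the exponent is $+\infty$. This is why the appendix restricts to $\delta\in(0,1)$.

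\emph{Second}, the certificate behind Theorem \ref{th1} is Lemma \ref{gmm}, which mixes the output up to $\mathbb{I}/3$. It requires the weight of every sector, not only the target's, to be at most $\frac{1+\delta}{3}$ on free inputs. With your group one sector is the non-stabilizer pure state $(\ket{1}+\ket{2})/\sqrt{2}$. The clean justification of your lower bound is exactly the paper's construction. What makes it work is the full $Sp(2,3)$, all of whose Weil unitaries fix $\ket{S}$ up to phase: it merges $\ket{0}$ and $(\ket{1}+\ket{2})/\sqrt{2}$ into a single sector whose normalized projector is free.

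\emph{Third}, your main obstacle is moot. The target is a single Strange state for every $n$, so no tensor-power decomposition of the output space is ever needed.
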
 

The proof of Theorem \ref{edm} is placed in the Appendix \ref{ptm}. As the magic resource satisfies the Brandao-Plenio axioms, then
based on Theorem \ref{edm}, Lemma \ref{mf6} and Theorem 14 in \cite{lami2026}, we have
\begin{corollary}
	Assume $\rho$  is a state, let $1>\delta>0$, and the error exponent of the quantum magic distillation under $\mathcal{NG}_{\delta}$, $E_{d,err}^{\infty,\mathcal{NG}_\delta}(\rho)$ equals to $D(STAB||\rho),$ where 
	\begin{align*}
	&E_{d,err}^{\infty,\mathcal{NG}_\delta}(\rho)\\=&\liminf_{n\rightarrow\infty}\{\sup -\frac{1}{n}\log\epsilon_n|	F(\mathcal{E}(\rho^{\otimes n}),\mathbb{S})\ge 1-\epsilon_n, \mathcal{E}\in \mathcal{NG}_{\delta}\},\\
	&D(STAB||\rho)=\inf_{\sigma\in STAB} D(\sigma||\rho).
\end{align*}
\end{corollary}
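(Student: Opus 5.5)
This corollary follows by chaining three ingredients. Theorem \ref{edm} identifies the zero-rate probabilistic exponent with $\hat{D}_{\Omega,STAB}^{reg}(\rho)$. Lemma \ref{mf6} (the auxiliary magic-theory lemma in the appendix) supplies the needed structural properties of $STAB$. Theorem 14 of \cite{lami2026} handles the deterministic case for any theory satisfying the Brandão--Plenio axioms. For $STAB$ on qutrits these axioms hold: the set is convex and closed, it is closed under tensor products and partial traces, and it contains a full-rank state, namely $I/3$ (property (4)). The plan is therefore to verify these hypotheses and then apply the result, rather than to prove anything from scratch.

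\textbf{Step 1 (reformulation).} I will rewrite the requirement $F(\mathcal{E}(\rho^{\otimes n}),\mathbb{S})\ge 1-\epsilon_n$ with $\mathcal{E}\in\mathcal{NG}_\delta$ as a hypothesis-testing problem. I will twirl over the Clifford stabiliser group of $\ket{S}$, which is the group $G$ appearing in Theorem \ref{th1} and which is free. This lets me assume without loss of generality that the output is isotropic, $(1-\epsilon)\mathbb{S}+\epsilon\,\tau^{\perp}$. The test $M=\mathcal{E}^\dagger(\mathbb{S})$ then satisfies $\mathrm{tr}M\rho^{\otimes n}\ge 1-\epsilon_n$. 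On the free side, $\mathrm{tr}M\sigma\le (1+\delta)c$ for every $\sigma\in STAB$, where $c=\max_{\sigma\in STAB}\mathrm{tr}\,\mathbb{S}\sigma\le 1/2$; this bound is where the robustness constraint and property (3) enter. The converse direction, building $\mathcal{E}$ from a test, is the standard measure-and-prepare channel $\mathcal{E}(X)=\mathrm{tr}(MX)\mathbb{S}+\mathrm{tr}((I-M)X)\tau$. I will use Lemma \ref{mf6} to show that $\tau$ can be chosen so that this channel lies in $\mathcal{NG}_\delta$.

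\textbf{Step 2 (identification with the Sanov exponent).} With the channel exactly equivalent to a test that has free-side error bounded by a constant strictly below one, the optimal $-\frac1n\log\epsilon_n$ becomes the Stein/Sanov-type exponent in which $\rho^{\otimes n}$ plays the alternative hypothesis. Theorem 14 of \cite{lami2026} computes this exponent under the Brandão--Plenio axioms as $\inf_{\sigma\in STAB}D(\sigma\|\rho)=D(STAB\|\rho)$. The constraint $0<\delta<1$ keeps the free-side threshold $(1+\delta)c$ strictly below $1$, which that theorem requires; the exponent does not depend on which constant is used. I will take the $\liminf$ over $n$ and match it to the definition of $E_{d,err}^{\infty,\mathcal{NG}_\delta}$.

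\textbf{Main obstacle.} The delicate point is the achievability direction of Step 1. The prepared state $\tau$ must be chosen so that $\mathrm{tr}(M\sigma)\mathbb{S}+(1-\mathrm{tr}M\sigma)\tau$ has generalized robustness at most $\delta$ for every stabiliser $\sigma$. This amounts to showing that mixing a fixed fraction of $\mathbb{S}$ into a suitable stabiliser $\tau$ costs robustness at most $\delta$ whenever that fraction is at most some $\eta(\delta)$. Once this holds, the free-side error threshold is $\eta(\delta)<1$ instead of $(1+\delta)c$, and again the exponent is unaffected by the constant. I expect Lemma \ref{mf6} to give exactly this bound on $R_{\boldsymbol{G}}$ for such mixtures, so I would first try to apply it directly with $\tau$ equal to the stabiliser state maximizing overlap with the orthogonal complement of $\ket{S}$.
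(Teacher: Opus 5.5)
\textbf{Gap: the hypotheses of Theorem 14 of \cite{lami2026} are not fully checked.} Your architecture is the same chain the paper uses. You twirl over the Clifford stabiliser of $\ket{S}$, which reduces $\mathcal{NG}_\delta$-distillation to a Sanov-type test of $STAB_n$ against $\rho^{\otimes n}$ with free-side error at most $\frac{1+\delta}{2}<1$, and you then invoke Theorem 14 of \cite{lami2026}. The problem is that you assert this theorem holds for any family satisfying the Brand\~ao--Plenio axioms. The paper's citation of Lemma~\ref{mf6} alongside it indicates that more is required: the free sets must also be stable under conditioning on near-trivial two-outcome measurements. Concretely, one needs $\mathrm{tr}_{1\cdots n}[(\frac{\mathbb{I}^{\otimes n}\pm X_n}{2}\otimes\mathbb{I}^{\otimes m})\sigma_{n+m}]\in \mathrm{cone}(STAB_m)$ for all $n,m$, all $\sigma_{n+m}\in STAB_{n+m}$, and all Hermitian $X_n$ with $\|X_n\|_\infty\le r_n$. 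That is exactly what Lemma~\ref{mf6} proves, using the local-Clifford EPR normal form of bipartitioned stabiliser states and the 2-design property. Your checklist (convexity, closedness, tensor products, partial traces, full-rank $\mathbb{I}/3$) omits this condition. So although you name Lemma~\ref{mf6} as an ingredient, Step 2 applies the theorem without verifying its hypotheses.

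\textbf{Lemma~\ref{mf6} is the wrong tool for achievability; Lemmas~\ref{mrs} and~\ref{mm} do it directly.} Lemma~\ref{mf6} says nothing about the generalized robustness of mixtures of $\mathbb{S}$ with a stabiliser state, so your planned use of it in Step 1 would not go through. Take $\tau=\varphi=\frac{\mathbb{I}-\mathbb{S}}{2}$. This is a stabiliser state, since it equals $\mathcal{T}_{Sp}(\ket{0}\bra{0})$ and $\bra{S}0\rangle=0$. Then $\mathcal{E}(\sigma)=F\mathbb{S}+(1-F)\varphi$ with $F=\mathrm{tr}M\sigma$, and $R_{\boldsymbol{G}}(\mathcal{E}(\sigma))=\max(0,2F-1)$. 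Hence $\mathcal{E}\in\mathcal{NG}_\delta$ if and only if $\mathrm{tr}M\sigma\le\frac{1+\delta}{2}$ for all $\sigma\in STAB_n$. The converse and achievability thresholds therefore coincide exactly, and you need neither an auxiliary $\eta(\delta)$ nor an argument that the exponent is independent of the constant.
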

\begin{example}\label{e3}
	Let 
	\begin{align*}
		\chi_p=&p\ket{\phi}\bra{\phi}+(1-p)\frac{\mathbb{I}}{3}, \\\ket{\phi}=&\frac{1}{\sqrt{6}}(2\ket{0}-\ket{1}-\ket{2}).
	\end{align*}
Then 
\begin{align*}
	\frac{1}{n}D_{\Omega,STAB}(\chi_p^{\otimes n})=&D_{\Omega,STAB}(\chi_p)\\=&\begin{cases}
		0\hspace{18mm} p\in [0,\frac{2}{5}]\\
		\log\frac{1+2p}{3(1-p)}\hspace{5mm}p\in (\frac{2}{5},1)
	\end{cases}.\forall n\in \mathbb{N}
\end{align*}
\begin{align*}
		&D(\operatorname{STAB}_1\Vert\rho_p)\\	=&	\begin{cases}
			0,
			&
			0\leq p\leq \dfrac{2}{5},
			\\[3mm]
			\displaystyle
			q_\Psi\log\frac{q_\Psi}{a_p}
			+
			q_u\log\frac{q_u}{b_p}
			+
			q_v\log\frac{q_v}{b_p},
			&
			\dfrac{2}{5}<p<1.
		\end{cases}
\end{align*}
	where $q_\Psi=\frac{2y_p+1}{3y_p+2},q_u=\frac{y_p}{3y_p+2},q_v=\frac{1}{3y_p+2},a_p=\frac{1+2p}{3},$ $b_p=\frac{1-p}{3},$ and $y_p$ is the unique solution of $y_p^2(2y_p+1)=\frac{1+2p}{1-2p}.$
\end{example}
The proof of Example \ref{e3} is placed in the Appendix \ref{ptm}. In Fig. \ref{fig3}, we plot the asymptotic error exponent of magic distilltion under resource nongenerating operations. The red line is on the asymptotic error exponent under probabilistic scenario, while the blue line is on the success probability equaling 1. From Fig. \ref{fig3}, we have postselection can strictly improve the conditional error exponent of the distillation over the deterministic counterpart.

\begin{figure}[h] 
	\centering
	\includegraphics[width=0.5\textwidth]{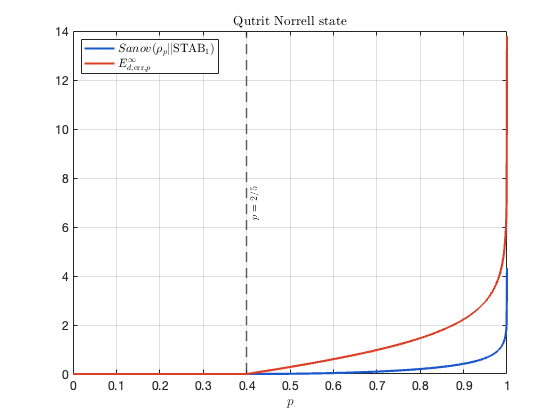} 
	\caption{The asymptotic error exponent of (probabilistic) magic distillation of $\rho_p$ under resource nongenerating instruments(operations). }
	\label{fig3}
\end{figure}
\section{Conclusion}
In this work we considered the probabilistic entanglement manipulation on the asymptotic error exponents of probabilistic entanglement distillation under $\delta$-approximately nongenerating   quantum instruments for a class of quantum resource theories with certain structures. Our main contribution is to propose a method to obtain bounds of the conditional error exponents by linking the operational task to postselected quantum hypothesis testing against the set of free states. Furthermore, we built the relationship between the conditional error exponents of the probabilistic entanglement distillation under $\mathbb{O}_{\mathcal{NE}}^{\delta}$ and $\mathbb{O}_{\mathcal{DNE}}^{\delta}$ and magic distillation under $\mathbb{O}_{\mathcal{NG}}^{\delta}$ in terms of the asymptotic scenarios and the regularized Hilbert projective distance between the state and the set of free states, also we presented the analytical formulae of conditional error exponents of the probabilistic entanglement and magic distillation for classes of mixed states. Besides, we obtained the lower bounds of  the probabilistic coherence distillation under $\mathbb{O}_{\mathcal{NG}}^{\delta}$ in terms of zero-rates asymptotic scenarios. 

Several open directions remain. It would be interesting to (i) extend the present methods to the dynamical resources. (ii) develop efficiently computable semidefinite programming formulations \cite{wang2025computable,lami2025computable} for the relevant postselected testing quantities in practically regimes, and (iii) explore strong-converse \cite{cheng2024strong,oufkir2025quantum,berta2025channel} and second-order \cite{li2014second} refinements of the obtained exponents. We hope that the connection established here between probabilistic resource distillation and postselected hypothesis testing will serve as a useful tool for further progress in the resource theory.
  \section{Acknowledgement}
 The author used generative AI, ChatGPT 5.6 Sol, to prove Lemma 22 and solve Examples 4, 9, and 23, as well as to prepare the manuscript. 
X. S. was supported by the National Natural Science Foundation of China (Grant No. 12301580).
\bibliographystyle{IEEEtran}
\bibliography{ref}

\setcounter{equation}{0}

\clearpage
\onecolumngrid
\setcounter{page}{1}
\setcounter{equation}{0}
\setcounter{figure}{0}
\renewcommand{\theequation}{S\arabic{equation}}
\renewcommand{\thefigure}{S\arabic{figure}}

\setcounter{secnumdepth}{2}
\setcounter{tocdepth}{2}
\appendix

\section{Appendix}\label{app}

\subsection{Quantum Relative Entropies}
Assume $\rho$ and $\sigma$ are two states, let $\alpha\in (1,\infty]$, then the $\alpha$-sandwiched Renyi divergence $\tilde{D}_{\alpha}(\rho,\sigma)$ for $\rho$ and $\sigma$ is defined as 
\begin{equation*}
	\tilde{D}_{\alpha}(\rho,\sigma)=
	\begin{cases}
		\frac{\alpha}{\alpha-1}\log||\sigma^{\frac{1-\alpha}{2\alpha}\rho\sigma^{\frac{1-\alpha}{2\alpha}}}||_{\alpha}\hspace{3mm} \textit{if $supp(\rho)\subseteq supp(\sigma)$,}\\
		+\infty \hspace{5mm} \textit{otherwise},
	\end{cases}
\end{equation*}
when $\alpha\rightarrow1$, $\tilde{D}_{\alpha}(\rho,\sigma)$ tends to the quantum relative entropy of $\rho$ and $\sigma$, ${D}(\rho||\sigma)=\mathrm{tr}[\rho(\log\rho-\log\sigma)].$ 

Next we define the quantum maxi-relative entropy for two states $\rho$ and $\sigma$ with $supp(\rho)\subseteq supp(\sigma)$,
\begin{align}
	D_{max}(\rho,\sigma)=\log\inf& \hspace{2mm}\lambda\label{dpmax}\\
	\textit{s. t.}\hspace{5mm}&\rho\le\lambda\sigma\nonumber\\
	&\lambda\in \mathbb{R}^{+},\nonumber
\end{align}
otherwise, $D_{max}(\rho,\sigma)$ tends to the infty. 
The dual program of $(\ref{dpmax})$ is 
\begin{align}
	D_{max}(\rho,\sigma)=\log\max&\hspace{3mm}\mathrm{ tr}\rho X\label{ddmax}\\
	\textit{s. t.}\hspace{5mm}&\mathrm{ tr}\sigma X\le 1,\nonumber\\
	&X\ge0.\nonumber
\end{align}

The Hilbert projective metric between two states $\rho$ and $\sigma$ is
\begin{align*}
	D_{\Omega}(\rho,\sigma)=D_{max}(\rho,\sigma)+D_{max}(\sigma,\rho).\\
	\Omega(\rho,\sigma)=2^{D_{\Omega}(\rho,\sigma)}.
\end{align*} 

After defining the Hilbert projective metric between two states, it is natural to define the divergence between the two states after measurements $M.$ Assume $\mathbb{M}$ is a class of measurements, $$\mathbb{M}=\{(M_i)|M_i\ge 0,\sum_iM_i=\mathbb{I}, M_i\in \mathcal{T}\},$$
here $\mathcal{T}$ is a convex set of nonnegative operations, the $\mathbb{M}$-Hilbert projective metric between $\rho$ and $\sigma$, $D_{\Omega,\mathbb{M}}(\rho,\sigma),$ is defined as
\begin{align*}
	D^{\mathbb{M}}_{\Omega}(\rho,\sigma)=\sup_{M\in\mathbb{M}}D_{\Omega}(\mathcal{M}(\rho),\mathcal{M}(\sigma)),
\end{align*}
where the supermum $M=\{M_i\}_i$ takes over all the measurements in $\mathbb{M}$, and $\mathcal{M}(\cdot)=\sum_i\mathrm{tr}(M_i\cdot)\ket{i}\bra{i}.$
The method to generalize the divergences based on measurements is highly helpful to the quantum information theory.

Next we present the following properties of $D_{\Omega}(\rho,\sigma)$.
\begin{lemma}\label{l1}\cite{regula2022tight}
	Assume $\rho$ and $\sigma$ are two states, then
	\begin{itemize}
		\item[(1.)] $D_{\Omega}(\rho,\sigma)\ge 0$, and the quality happens if and only if $\rho=\sigma$.
		\item[(2.)] $D_{\Omega}(\rho,\sigma)=D_{\Omega}(\sigma,\rho)$.
		\item[(3.)] For arbitrary positive numbers $\lambda$ and $\varphi$, then $D_{\Omega}(\rho,\sigma)=D_{\Omega}(\lambda\rho,\varphi\sigma).$
		\item[(4.)] The quantity $D_{\Omega}(\cdot,\cdot)$ satisifies the data-processing property under the positive map, that is, for each positive linear map $\mathcal{E}$, 
		\begin{align*}
			D_{\Omega}(\mathcal{E}(\rho),\mathcal{E}(\sigma))\le D_{\Omega}(\rho,\sigma).
		\end{align*}
		\item[(5.)] $D_{\Omega}(\rho,\sigma)$ can be computed under the semidefinite programming method, 
		\begin{align}
			D_{\Omega}(\rho,\sigma)=&\log\sup \mathrm{ tr}A\rho\label{f1}\\
			\textit{s. t.} \hspace{4mm}& \mathrm{ tr}B\rho=1,\nonumber\\
			&\mathrm{ tr}(B-A)\sigma\ge0,\nonumber\\
			&A,B\ge 0\nonumber
		\end{align}
		\item[(6.)] Assume $\rho$ and $\sigma$ are two states, $D_{\Omega}(\rho^{\otimes n},\sigma^{\otimes n})=nD_{\Omega}(\rho,\sigma)$.
		\item[(7.)] Assume $\mathbb{M}$ is a class of measurements, $$\mathbb{M}=\{(M_i)|M_i\ge 0,\sum_iM_i=\mathbb{I}, M_i\in \mathcal{T}\},$$here $\mathcal{T}$ is a convex set of nonnegative operations,  then
		\begin{align}
			D^{\mathbb{M}}_{\Omega}(\rho,\sigma)=&\log\sup \mathrm{ tr}A\rho\label{f1}\\
			\textit{s. t.} \hspace{4mm}& \mathrm{ tr}B\rho=1,\nonumber\\
			&\mathrm{ tr}(B-A)\sigma\ge0,\nonumber\\
			&A,B\in cone(\mathcal{T}).\nonumber
		\end{align}
	\end{itemize}
\end{lemma}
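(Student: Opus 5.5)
We verify the seven items in turn, always working from the primal characterisation $D_{max}(\rho,\sigma)=\log\inf\{\lambda:\rho\le\lambda\sigma\}$. For items (3), (5) and (7) we treat $\rho,\sigma$ as arbitrary positive operators, since (3) rescales them and (7) applies (5) to $\mathcal{M}(\rho),\mathcal{M}(\sigma)$.

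\textbf{Items (1)--(4).} Item (2) is immediate, because the definition is the symmetric sum $D_{max}(\rho,\sigma)+D_{max}(\sigma,\rho)$.

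For (3), we have $\lambda\rho\le t\,\varphi\sigma$ if and only if $\rho\le (t\varphi/\lambda)\sigma$. Hence $D_{max}(\lambda\rho,\varphi\sigma)=D_{max}(\rho,\sigma)+\log(\lambda/\varphi)$, and the opposite shift $\log(\varphi/\lambda)$ appears in the other term, so the two cancel.

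For (1), if $\rho\le t\sigma$ and $\sigma\le s\rho$, then $\rho\le ts\rho$. Applying the trace gives $ts\ge1$, so $D_\Omega\ge0$. If $ts=1$, then $t\sigma-\rho\ge0$ and $\rho-t\sigma\ge ts\rho-t\sigma\ge 0$ hold simultaneously, so $\rho=t\sigma$. For states this forces $t=1$ and $\rho=\sigma$. The converse direction is trivial.

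For (4), any positive map $\mathcal{E}$ preserves the operator inequalities $\rho\le t\sigma$ and $\sigma\le s\rho$. Every feasible pair $(t,s)$ for $(\rho,\sigma)$ is therefore feasible for $(\mathcal{E}(\rho),\mathcal{E}(\sigma))$, which gives the data-processing inequality.

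\textbf{Item (5).} Combining the two constraints shows
\begin{align*}
\Omega(\rho,\sigma)=\inf\{\gamma:\ \exists\, t>0,\ \rho\le t\sigma\le\gamma\rho\}.
\end{align*}
This is a conic linear program. We attach a dual variable $B\ge0$ to the constraint $t\sigma-\rho\ge0$ and a dual variable $A\ge0$ to the constraint $\gamma\rho-t\sigma\ge0$. The Lagrangian is then
\begin{align*}
\gamma\bigl(1-\mathrm{tr}A\rho\bigr)+t\,\mathrm{tr}(A-B)\sigma+\mathrm{tr}B\rho .
\end{align*}
Minimising over $\gamma$ and over $t\ge0$ yields the dual program: maximise $\mathrm{tr}B\rho$ subject to $\mathrm{tr}A\rho=1$ and $\mathrm{tr}(B-A)\sigma\le0$. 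Renaming $A\leftrightarrow B$ gives exactly (\ref{f1}).

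Weak duality is automatic. Strong duality needs a Slater point. When $supp(\rho)=supp(\sigma)$, a strictly feasible primal point exists (large $t$, then larger $\gamma$ on the common support). When the supports differ, both sides equal $+\infty$: one exhibits a projector onto the part of $supp(\rho)$ outside $supp(\sigma)$ (or the reverse) and uses it to build dual-feasible $A,B$ of unbounded value. I expect this support bookkeeping to be the fiddliest part of (5).

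\textbf{Item (6).} The upper bound follows by tensoring the primal certificates: $\rho^{\otimes n}\le t^n\sigma^{\otimes n}$ and $\sigma^{\otimes n}\le s^n\rho^{\otimes n}$. For the lower bound, take $(A,B)$ feasible in (\ref{f1}) and use $(A^{\otimes n},B^{\otimes n})$. Then $\mathrm{tr}B^{\otimes n}\rho^{\otimes n}=1$. Moreover $\mathrm{tr}(B^{\otimes n}-A^{\otimes n})\sigma^{\otimes n}=(\mathrm{tr}B\sigma)^n-(\mathrm{tr}A\sigma)^n\ge0$, because $\mathrm{tr}B\sigma\ge\mathrm{tr}A\sigma\ge0$. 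The objective becomes $(\mathrm{tr}A\rho)^n$, so $D_\Omega(\rho^{\otimes n},\sigma^{\otimes n})\ge nD_\Omega(\rho,\sigma)$ by (5).

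\textbf{Item (7).} For a fixed measurement $M$, apply (5) to the diagonal states $\mathcal{M}(\rho),\mathcal{M}(\sigma)$. By pinching, which is a positive map, we may take the dual variables diagonal: $A=\sum_i a_i\ket{i}\bra{i}$ and $B=\sum_i b_i\ket{i}\bra{i}$ with $a_i,b_i\ge0$. Then $\mathrm{tr}A\mathcal{M}(\rho)=\mathrm{tr}A'\rho$ with $A'=\sum_i a_iM_i\in cone(\mathcal{T})$, and likewise for $B'$. This shows that the value of $D^{\mathbb{M}}_\Omega$ is at most the right-hand side of (\ref{f1}) in item (7).

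The converse is the main obstacle. Given arbitrary $A',B'\in cone(\mathcal{T})$, we must realise them as coarse-grainings of a single measurement in $\mathbb{M}$. The natural attempt is to rescale by $c\ge\|A'+B'\|_\infty$ and form the three-outcome POVM
\begin{align*}
\Bigl\{\tfrac{A'}{c},\ \tfrac{B'}{c},\ \mathbb{I}-\tfrac{A'+B'}{c}\Bigr\},
\end{align*}
then put weights $(c,0,0)$ on the outcomes for $A$ and $(0,c,0)$ for $B$. This step needs the complement element $\mathbb{I}-(A'+B')/c$ to lie in $\mathcal{T}$. That holds in the cases of interest, such as separable or PPT effects with $\mathbb{I}$ in the interior of $cone(\mathcal{T})$, and I would make this closure assumption explicit, as is done in \cite{regula2022tight}.
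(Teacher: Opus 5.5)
The paper itself gives no proof of this lemma; it is quoted from \cite{regula2022tight}, so there is no in-paper argument to compare with. Your arguments for items (1)--(6) are correct and are the standard ones. The same goes for the ``$\le$'' half of (7). There, the step $\sum_i a_iM_i\in\mathrm{cone}(\mathcal{T})$ silently uses the convexity of $\mathcal{T}$, which you should say.

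In (5) you can avoid the support case analysis. The dual has the strictly feasible point $B=\mathbb{I}$, $A=\varepsilon\mathbb{I}$ with $\varepsilon\in(0,1)$, which is the Slater point the paper itself uses after Corollary \ref{c2}. Conic duality then gives two cases. If the dual value is finite, the primal is feasible, attains that value, and there is no gap. If the dual value is $+\infty$, weak duality forces the primal to be infeasible, i.e.\ $D_\Omega(\rho,\sigma)=+\infty$. Unequal supports are thus covered automatically.

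The real issue is the converse half of (7), and that step cannot be supplied. As written, (7) is false for a general convex set $\mathcal{T}$, so the hypothesis you propose is necessary, not a convenience. Here is a counterexample.
\begin{itemize}
\item Take $d\ge 2$, a rank-one projector $P$, and $\mathcal{T}=\mathrm{conv}\{\mathbb{I},P\}$.
\item Every element of $\mathcal{T}$ has the form $\lambda\mathbb{I}+(1-\lambda)P$ with $\lambda\in[0,1]$. Comparing the coefficients of $\mathbb{I}$ and $P$ in $\sum_i M_i=\mathbb{I}$ forces every $\lambda_i=1$ and a single outcome. Hence $\mathbb{M}=\{(\mathbb{I})\}$ and $D^{\mathbb{M}}_\Omega(\rho,\sigma)=0$ for all states.
\item On the other hand, $\mathrm{cone}(\mathcal{T})=\{a\mathbb{I}+bP: a,b\ge 0\}$. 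For $\rho=P$ and $\sigma=\mathbb{I}/d$, the pair $B=\mathbb{I}$, $A=dP$ satisfies $\mathrm{tr}B\rho=1$ and $\mathrm{tr}(B-A)\sigma=0$. It gives $\mathrm{tr}A\rho=d$, so the right-hand side of (7) is at least $\log d>0$.
\item Your construction fails exactly where you predicted. The complement $\mathbb{I}-(A'+B')/c=(1-\tfrac1c)\mathbb{I}-\tfrac dc P$ is never in $\mathrm{cone}(\mathcal{T})$, because $\mathbb{I}$ is not an interior point of that cone.
\end{itemize}

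When you add the hypothesis, also make sure $A'/c$ and $B'/c$ themselves lie in $\mathcal{T}$, not merely in $\mathrm{cone}(\mathcal{T})$. A clean sufficient condition is $0\in\mathcal{T}$ together with $\mathbb{I}\in\mathrm{int}\,\mathrm{cone}(\mathcal{T})$. Then, for large $c$, the complement lies in $\mathrm{cone}(\mathcal{T})$ and can be split into finitely many elements of $\mathcal{T}$. This yields a measurement in $\mathbb{M}$ with outcomes proportional to $A'$ and $B'$. Your choice of classical weights $(c,0,\dots)$ and $(0,c,\dots)$ in (5) then gives $D_\Omega(\mathcal{M}(\rho),\mathcal{M}(\sigma))\ge\log\mathrm{tr}A'\rho$.

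These conditions hold for separable effects, which is the only instance of (7) the paper actually uses (Corollary \ref{c2}, Theorem \ref{th2}).
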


\subsection{Quantum Postselected Hypothesis Testing}
Quantum state discrimination is a fundamental quantum information task. Recently, the authors in \cite{regula2023postselected} addressed the following problem. Assume Alice receives a state, and she knows that the state is $\rho$ or $\sigma$, her aim is to determine which state she obtained. In the scenario, she can perform a three-outcome positive operator-valued measure (POVM), $M=\{M_1,M_2,M_0\}.$ The outcome 1 and 2 correspond to the state $\rho$ and $\sigma$, respectively, while the outcome 0 corresponds to the inconclusive. Let
\begin{align*}
	\textit{conditional type I error:}\hspace{5mm}\overline{\alpha}(M)=\frac{\mathrm{ tr}M_2\rho}{\mathrm{ tr}(M_1+M_2)\rho},\\
	\textit{conditional type II error:}\hspace{5mm}\overline{\beta}(M)=\frac{\mathrm{ tr}M_1\sigma}{\mathrm{ tr}(M_1+M_2)\sigma},
\end{align*}
Assume $\mathcal{F}$ is a convex and closed set of quantum states, and the postselected hypothesis testing between a state $\rho$ and the set $\mathcal{F}$ is 

\begin{align}
	\overline{\beta}_{\epsilon,\mathcal{F}}(\rho)=-\log\inf_{M\in \mathcal{M}_3}\{\sup_{\sigma\in\mathcal{F}}\frac{\mathrm{ tr}M_1\sigma}{\mathrm{ tr}(M_1+M_2)\sigma}|\frac{\mathrm{ tr}M_2\rho}{\mathrm{ tr}(M_1+M_2)\rho}\le\epsilon\}\label{pht}
\end{align}
where $M$ takes over all the elements in $\mathcal{M}_3$, and $\mathrm{ tr}(M_1+M_2)\sigma,\mathrm{ tr}(M_1+M_2)\rho> 0.$

\begin{lemma}\cite{regula2023postselected}\label{l2}
	Assume $\mathcal{F}$ is a convex and closed set of quantum states, then
	\begin{align*}
		\overline{\beta}_{\epsilon,\mathcal{F}}(\rho)=\frac{\epsilon}{1-\epsilon}\min_{\sigma\in\mathcal{F}}\Omega(\rho,\sigma)+1.
	\end{align*}
	Here $\Omega(\rho,\sigma)=2^{D_{\Omega}(\rho,\sigma)}.$ 
	
	When $\mathcal{F}$ is closed under the tensor operations,
	\begin{align*}
		\lim\limits_{n\rightarrow\infty}\frac{1}{n}\overline{\beta}_{\epsilon,\mathcal{F}}(\rho^{\otimes n})=\lim\limits_{n\rightarrow\infty}\frac{1}{n}\min_{\sigma_n\in\mathcal{F}}D_{\Omega}(\rho^{\otimes n},\sigma_n).
	\end{align*}
	
	Furthermore, the Hilbert projective metric satisfies the asymptotic equipartition property,
	\begin{align*}
		\lim\limits_{\epsilon\rightarrow 0}\lim\limits_{n\rightarrow\infty}\frac{1}{n}D^{\epsilon}_{\Omega}(\rho^{\otimes n},\mathcal{F}):=&\lim\limits_{\epsilon\rightarrow 0}\lim\limits_{n\rightarrow\infty}\min_{\rho^{'}\in B_{\epsilon}(\rho^{\otimes n})}\frac{1}{n}D_{\Omega}(\rho^{`},\mathcal{F})\\
		=&D_{\mathcal{F}}^{\infty}(\rho).
	\end{align*}
	where the minimum in the first equality takes over all the states in  $B_{\epsilon}(\rho^{\otimes n})=\{\rho^{'}|\frac{1}{2}||\rho^{\otimes n}-\rho^{'}||_1\le \epsilon\},$ and $D_{\mathcal{F}}^{\infty}(\rho)$ in the second equality is defined as $D_{\mathcal{F}}^{\infty}(\rho)=\lim\limits_{n\rightarrow\infty}\frac{1}{n}\min_{\sigma_n\in\mathcal{F}_n}D(\rho^{\otimes n}||\sigma_n).$
\end{lemma}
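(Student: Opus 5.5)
The plan is to prove the three assertions of Lemma~\ref{l2} in order: the single-shot formula, then its regularisation, then the asymptotic equipartition property. Throughout, the displayed identity is read as the usual expression for the optimal conditional type~II error, i.e.\ $2^{\overline{\beta}_{\epsilon,\mathcal{F}}(\rho)}=\frac{\epsilon}{1-\epsilon}\min_{\sigma}\Omega(\rho,\sigma)+1$.

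\textbf{Single-shot formula for a fixed $\sigma$.} I would first treat a fixed $\sigma$.
\begin{itemize}
\item Both conditional errors are invariant under the joint rescaling $(M_1,M_2)\mapsto(tM_1,tM_2)$, with $M_0$ absorbing the remainder. So the constraint $M_1+M_2\le\mathbb{I}$ can be dropped, and one optimises over arbitrary $M_1,M_2\ge0$.
\item Writing $\overline{\alpha}\le\epsilon$ as $\mathrm{tr}M_2\rho\le\frac{\epsilon}{1-\epsilon}\mathrm{tr}M_1\rho$, the type~II error becomes $\overline{\beta}=\bigl(1+\mathrm{tr}M_2\sigma/\mathrm{tr}M_1\sigma\bigr)^{-1}$.
\item Minimising $\overline{\beta}$ therefore amounts to maximising $\frac{\mathrm{tr}M_2\sigma}{\mathrm{tr}M_1\sigma}$ under this constraint. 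By rescaling $M_2$ alone, the constraint can be taken tight.
\item The optimum is then $\frac{\epsilon}{1-\epsilon}\sup_{P,Q\ge0}\frac{\mathrm{tr}Q\sigma\,\mathrm{tr}P\rho}{\mathrm{tr}P\sigma\,\mathrm{tr}Q\rho}$.
\end{itemize}
This double ratio should equal $\Omega(\rho,\sigma)$. To see it, normalise $\mathrm{tr}Q\rho=1$ and match the result with the conic program (\ref{f1}) of Lemma~\ref{l1}(5), taking $A\propto P$ and $B=Q$. Equivalently, use the dual forms (\ref{ddmax}) of $D_{max}(\rho,\sigma)$ and $D_{max}(\sigma,\rho)$: the supremum factorises into $\sup_P\frac{\mathrm{tr}P\rho}{\mathrm{tr}P\sigma}\cdot\sup_Q\frac{\mathrm{tr}Q\sigma}{\mathrm{tr}Q\rho}$. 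This gives $2^{\overline{\beta}}=1+\frac{\epsilon}{1-\epsilon}\Omega(\rho,\sigma)$.

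\textbf{Passing to the convex set $\mathcal{F}$ (main obstacle).} Here one needs to exchange $\inf_M\sup_{\sigma\in\mathcal{F}}$ with $\sup_{\sigma}\inf_M$. I expect this minimax step to be the main obstacle, because the objective is a ratio that is not jointly convex--concave. My first attempt would be to move the $\sigma$-dependence into linear constraints:
\begin{itemize}
\item Fix the level $t$ of $\mathrm{tr}M_2\sigma/\mathrm{tr}M_1\sigma$ and ask for $\mathrm{tr}(M_2-tM_1)\sigma\ge0$ for all $\sigma\in\mathcal{F}$, i.e.\ $M_2-tM_1\in\mathcal{F}^{*}$ (the dual cone).
\item This turns the problem into a conic linear program in $(M_1,M_2)$.
\item Lagrange duality over the compact convex set $\mathcal{F}$ (Sion's theorem applied to the bilinear Lagrangian) should then return $\min_{\sigma\in\mathcal{F}}\Omega(\rho,\sigma)$, in the same way as the reduction of Lemma~\ref{l1}(7) to a cone-restricted program.
\end{itemize}

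\textbf{Regularised version.} This follows directly by taking logarithms in the single-shot formula:
\[
\log\Bigl(1+\tfrac{\epsilon}{1-\epsilon}\Omega_{\min}\Bigr)=\min_{\sigma}D_{\Omega}(\rho,\sigma)+O_\epsilon(1),
\]
since $\Omega\ge1$. Dividing by $n$ removes the constant term. The limit of $\frac1n\min_{\sigma_n}D_\Omega(\rho^{\otimes n},\sigma_n)$ exists by subadditivity, which holds because $\mathcal{F}$ is closed under tensor products and $D_\Omega$ is additive on product states by Lemma~\ref{l1}(6).

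\textbf{Asymptotic equipartition property.} I would prove the two inequalities separately.
\begin{itemize}
\item \emph{Lower bound.} $D_\Omega(\rho',\sigma)\ge D_{max}(\rho',\sigma)\ge D(\rho'\|\sigma)$. Combining this with continuity of the relative entropy of resource (Fannes-type, i.e.\ asymptotic continuity) on the $\epsilon$-ball gives the bound by $D^\infty_{\mathcal{F}}(\rho)$ as $\epsilon\to0$.
\item \emph{Upper bound.} Take a near-optimal $\sigma_k$ for the regularised relative entropy. Form blocks $\sigma_k^{\otimes n/k}$, mixed with a full-rank free state (for instance $\mathbb{I}/d$) so that $D_{max}(\sigma,\rho')$ stays finite. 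Then smooth $\rho^{\otimes n}$ by projecting onto the joint typical subspace of $\rho^{\otimes n}$ and $\sigma$, using pinching with respect to $\sigma$; on that subspace both max-divergences are controlled by $n(D\pm\delta)$.
\item The key difficulty in the upper bound is the reverse term $D_{max}(\sigma,\rho')$. I would handle it by additionally mixing $\sigma$ into the smoothed state so that $\sigma\le 2^{n\delta}\rho'$, at a trace-distance cost below $\epsilon$.
\end{itemize}
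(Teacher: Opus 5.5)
The paper does not prove Lemma~\ref{l2}. It quotes it from \cite{regula2023postselected}, so the closest in-paper argument is the proof of the reversed statement, Corollary~\ref{c1}. Your first two parts follow that same template.
\begin{enumerate}
\item[(i)] Reduce to maximising $\mathrm{tr}M_2\sigma/\mathrm{tr}M_1\sigma$ under $\mathrm{tr}M_2\rho\le\frac{\epsilon}{1-\epsilon}\mathrm{tr}M_1\rho$, and factor the optimum through the dual forms of the two max-divergences.
\item[(ii)] Pass to $\mathcal F$ through the conic program for $\min_{\sigma\in\mathcal F}\Omega(\rho,\sigma)$.
\item[(iii)] Regularise with the sandwich $\frac{\epsilon}{1-\epsilon}\Omega\le 1+\frac{\epsilon}{1-\epsilon}\Omega\le\frac{1}{1-\epsilon}\Omega$ and Fekete's lemma.
\end{enumerate}
Your reading $2^{\overline{\beta}_{\epsilon,\mathcal F}(\rho)}=1+\frac{\epsilon}{1-\epsilon}\min_{\sigma}\Omega(\rho,\sigma)$ is the correct one; the displayed formula drops the exponential.

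The minimax step you flag as the main obstacle is short once made explicit. Take any $(A,B)$ feasible for the dual program, i.e.\ $A,B\ge0$, $\mathrm{tr}B\rho=1$, and $\mathrm{tr}(B-A)\sigma\ge0$ for all $\sigma\in\mathcal F$. Use the test $M_1=sA$, $M_2=s\frac{\epsilon}{1-\epsilon}\mathrm{tr}(A\rho)\,B$, with $s>0$ small enough that $M_1+M_2\le\mathbb{I}$. This makes the $\rho$-constraint tight and gives $\mathrm{tr}M_2\sigma\ge\frac{\epsilon}{1-\epsilon}\mathrm{tr}(A\rho)\,\mathrm{tr}M_1\sigma$ for every $\sigma\in\mathcal F$. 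The Slater point $B=\mathbb{I}$, $A=a\mathbb{I}$ with $a<1$ closes the duality gap. This is exactly the inf--sup exchange that Corollary~\ref{c1} asserts without argument.

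For subadditivity you need $D_{max}(\rho_1\otimes\rho_2\|\sigma_1\otimes\sigma_2)\le D_{max}(\rho_1\|\sigma_1)+D_{max}(\rho_2\|\sigma_2)$ for arbitrary product pairs. Lemma~\ref{l1}(6) only covers $\rho^{\otimes n}$ against $\sigma^{\otimes n}$.

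The AEP part has no counterpart in the paper. Your Datta-type route works, but two steps are misdescribed.

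\emph{Upper bound.} The typical-subspace/pinching step controls only $D_{max}(\tilde\rho\|\sigma)$; this is the i.i.d.\ AEP for the smoothed max-divergence, applied to blocks $\rho^{\otimes k}$ versus $\sigma_k$. The reverse divergence is infinite for any $\tilde\rho$ supported on a typical subspace. Mixing a full-rank state into $\sigma$ does not help, because $\sigma\le\mu\rho'$ requires $\mathrm{supp}\,\sigma\subseteq\mathrm{supp}\,\rho'$. A full-rank free state is useful only on the forward side, e.g.\ to pad the remainder block when $k\nmid n$. What handles the reverse term is your last step by itself. With $\rho'=(1-\eta)\tilde\rho+\eta\sigma$ you get $\sigma\le\eta^{-1}\rho'$, and $\rho'\le\max(\lambda,1)\sigma$ whenever $\tilde\rho\le\lambda\sigma$. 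So the reverse term costs only the $n$-independent constant $\log(1/\eta)$, and the upper bound holds for every fixed $\epsilon$.

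\emph{Lower bound.} The continuity estimate has the form $|D_{\mathcal F_n}(\rho')-D_{\mathcal F_n}(\rho^{\otimes n})|\le\epsilon\kappa_n+c(\epsilon)$, with $\kappa_n=\sup_\tau D_{\mathcal F_n}(\tau)$ and $c(\epsilon)\to0$. It gives the claim only if $\kappa_n=O(n)$, e.g.\ when $\mathbb{I}/d^n\in\mathcal F_n$. This is not among the lemma's stated hypotheses. It does follow from the paper's standing assumption (4) together with tensor closure, so you should state it.

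Finally, the inner limit over $n$ need not exist, so phrase the upper bound with $\limsup_n$ and the lower bound with $\liminf_n$.
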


Here we address a reversed problem of the composite postselected hypothesis testing. Assume $\mathcal{F}$ is a convex and compact set, $M\in\mathcal{M}_3$ is a feasible POVM, the conditional type $II$ error is defined as
\begin{align*}
	\overline{\beta}(M)=\frac{\mathrm{tr}M_1\rho}{\mathrm{ tr}(M_1+M_2)\rho},
\end{align*}
while the conditional type $I$ error $$\overline{\alpha}_{\mathcal{F}}(M)=\sup_{\sigma\in\mathcal{F}}\frac{\mathrm{tr}M_2\sigma}{\mathrm{tr}(M_1+M_2)\sigma}.$$
The reversed composite postselected hypothesis testing, $\hat{\beta}_{\epsilon,\mathcal{F}}(\rho)$, is defined as follows,
\begin{align}
	\hat{\beta}_{\epsilon,\mathcal{F}}(\rho)=&-\log\inf_{M\in\mathcal{M}_3}\frac{\mathrm{tr}M_1\rho}{\mathrm{ tr}(M_1+M_2)\rho}\label{lf0}\\
	\textit{s. t.}&\hspace{4mm} \frac{\mathrm{tr}M_2\sigma}{\mathrm{tr}(M_1+M_2)\sigma}\le\epsilon, \forall\sigma\in\mathcal{F},\nonumber\\
	&\hspace{4mm}0\le M_1+M_2\le \mathbb{I}.\nonumber
\end{align}
When $M_1+M_2=\mathbb{I}$, the postselected hypothesis testing turns into the quantum hypothesis testing between $\mathcal{F}$ and $\rho$, $D_{H}^{\epsilon}(\mathcal{F}||\rho)$.
Furthermore, when $\mathcal{M}_3$ in (\ref{lf0}) is in a class of $\mathbb{M}$, then we define $\hat{\beta}_{\epsilon,\mathcal{F}}^{\mathbb{M}}(\rho)$ as follows
\begin{align}
\hat{\beta}^{\mathbb{M}}_{\epsilon,\mathcal{F}}(\rho)=&-\log\inf_{M\in\mathcal{M}_3}\frac{\mathrm{tr}M_1\rho}{\mathrm{ tr}(M_1+M_2)\rho}\label{lfm}\\
\textit{s. t.}&\hspace{4mm} \frac{\mathrm{tr}M_2\sigma}{\mathrm{tr}(M_1+M_2)\sigma}\le\epsilon, \forall\sigma\in\mathcal{F},\nonumber\\
&\hspace{4mm}0\le M_1+M_2\le \mathbb{I}, \mathcal{M}_3\in \mathbb{M}.\nonumber
\end{align}

The analytical formula of $\hat{\beta}_{\epsilon,\mathcal{F}}(\rho)$ is presented in the following corollary.

\begin{corollary}\label{c1}
	Assume $\mathcal{F}$ is a convex and compact set of quantum states, then 
	\begin{align}
		{\hat{\beta}_{\epsilon,\mathcal{F}}(\rho)}
		=\log[1+\frac{\epsilon}{1-\epsilon}{\hat{\Omega}}_{\mathcal{F}}(\rho)]
	\end{align}
	
	When each family set $(\mathcal{F}_n)_n$ are convex and compact, and $(\mathcal{F}_n)_n$ is closed under tensor product, we have
	\begin{align}
		\lim\limits_{n\rightarrow \infty}\frac{1}{n}{\hat{\beta}_{\epsilon,\mathcal{F}}(\rho^{\otimes n})}=\hat{D}_{\Omega,\mathcal{F}}^{reg}(\rho):=
		\lim\limits_{n\rightarrow\infty}\frac{1}{n}\log\hat{\Omega}_{\mathcal{F}}(\rho^{\otimes n}).
	\end{align}
\end{corollary}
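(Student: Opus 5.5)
The plan is to reduce the ratio optimization in (\ref{lf0}) to a homogeneous linear program over pairs of positive operators, and to read off $\hat{\Omega}_{\mathcal{F}}(\rho)$ as the optimal value of that program. I take
\begin{align*}
\hat{\Omega}_{\mathcal{F}}(\rho)=\sup\{\mathrm{tr}A\rho \,|\, \mathrm{tr}B\rho=1,\ \mathrm{tr}(B-A)\sigma\ge 0\ \forall\sigma\in\mathcal{F},\ A,B\ge 0\},
\end{align*}
which is the composite analogue of (\ref{f1}). Via a minimax argument I expect it to coincide with $\min_{\sigma\in\mathcal{F}}\Omega(\rho,\sigma)$.

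\textbf{Step 1: rewrite the objective.} For a feasible $M=\{M_1,M_2,M_0\}$ with $\mathrm{tr}M_1\rho>0$,
\begin{align*}
\frac{\mathrm{tr}M_1\rho}{\mathrm{tr}(M_1+M_2)\rho}=\Big(1+\frac{\mathrm{tr}M_2\rho}{\mathrm{tr}M_1\rho}\Big)^{-1}.
\end{align*}
So minimizing the conditional type II error is the same as maximizing $\mathrm{tr}M_2\rho/\mathrm{tr}M_1\rho$. The case $\mathrm{tr}M_1\rho=0$ has to be handled separately, either by a limiting argument or by showing it forces $\hat{\Omega}=+\infty$, where both sides are infinite.

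\textbf{Step 2: rewrite the constraint.} The type I constraint, for each $\sigma\in\mathcal{F}$, is equivalent to
\begin{align*}
(1-\epsilon)\,\mathrm{tr}M_2\sigma\le\epsilon\,\mathrm{tr}M_1\sigma .
\end{align*}
Both this constraint and the objective are invariant under $M_i\mapsto tM_i$ for $t>0$. Hence the condition $M_1+M_2\le\mathbb{I}$ can be dropped, since it is restored by rescaling, and we may normalize $\mathrm{tr}M_1\rho=1$.

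\textbf{Step 3: substitute.} Put $B=M_1$ and $A=\frac{1-\epsilon}{\epsilon}M_2$. This maps feasible pairs of the two problems bijectively onto each other. The quantity being maximized becomes $\frac{\epsilon}{1-\epsilon}\mathrm{tr}A\rho$, subject to exactly the constraints defining $\hat{\Omega}_{\mathcal{F}}(\rho)$. Plugging back gives
\begin{align*}
\hat{\beta}_{\epsilon,\mathcal{F}}(\rho)=\log\Big[1+\frac{\epsilon}{1-\epsilon}\hat{\Omega}_{\mathcal{F}}(\rho)\Big].
\end{align*}

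\textbf{Step 4: the asymptotic statement.} First, $\hat{\Omega}_{\mathcal{F}}\ge 1$, because $A=B=\mathbb{I}/\mathrm{tr}\rho$ is feasible. Writing $c=\frac{\epsilon}{1-\epsilon}$ and $X=\hat{\Omega}_{\mathcal{F}}(\rho^{\otimes n})\ge1$, we have
\begin{align*}
\log X+\log\min\{c,1\}\le\log(1+cX)\le\log X+\log(1+c).
\end{align*}
Dividing by $n$, the $\epsilon$-dependent terms vanish. So the limit of $\frac{1}{n}\hat{\beta}_{\epsilon,\mathcal{F}}(\rho^{\otimes n})$ equals $\lim_n\frac1n\log\hat{\Omega}_{\mathcal{F}}(\rho^{\otimes n})$, provided that latter limit exists.

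\textbf{Main obstacle: existence of the limit.} Taking tensor products of feasible $(A,B)$ does not work directly. The constraint must hold for all $\sigma\in\mathcal{F}_{n+m}$, which need not be product states. I would therefore first establish the minimax identity $\hat{\Omega}_{\mathcal{F}}(\rho)=\min_{\sigma\in\mathcal{F}}\Omega(\rho,\sigma)$. This uses convexity and compactness of $\mathcal{F}$, a Sion-type exchange over the normalized cone of pairs $(A,B)$, and the SDP form (\ref{f1}) for fixed $\sigma$. Given that identity, closure under tensor products together with item (6) of Lemma \ref{l1} gives
\begin{align*}
\log\hat{\Omega}_{\mathcal{F}}(\rho^{\otimes(n+m)})\le\log\hat{\Omega}_{\mathcal{F}}(\rho^{\otimes n})+\log\hat{\Omega}_{\mathcal{F}}(\rho^{\otimes m}),
\end{align*}
by using the product of the two minimizers. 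The sequence is nonnegative and subadditive, so Fekete's lemma yields the limit $\hat{D}^{reg}_{\Omega,\mathcal{F}}(\rho)$. The delicate point in the minimax step is that the feasible set of pairs $(A,B)$ is noncompact, so it must be truncated before applying the exchange.
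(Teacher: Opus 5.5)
Your proposal is correct and follows essentially the paper's route. Scale invariance of the conditional errors turns the problem into the $(A,B)$ program dual to the Hilbert projective metric, which is then identified with $\min_{\sigma\in\mathcal{F}}\Omega(\rho,\sigma)$. The limit follows from the bounds $\log X+\log\min\{c,1\}\le\log(1+cX)\le\log X+\log(1+c)$ (using $X\ge 1$), together with subadditivity via products of optimal free states and Fekete's lemma.

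Your version is more careful than the paper's, which simply asserts the inf--sup exchange. You could also get that exchange from conic (Lagrange) duality with Slater's condition at the strictly feasible point $B=\mathbb{I}$, $A=t\mathbb{I}$ ($0<t<1$), as the paper does for Corollary~\ref{c2}; this would spare you the truncation argument.
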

\begin{proof}
	Here we take a similar method in \cite{regula2023postselected} to show the theorem. Based on the definition of $\hat{\beta}_{\epsilon,\mathcal{F}}(\rho),$ we have
	
	\begin{align*}
		&\hat{\beta}_{\epsilon,\mathcal{F}}(\rho)\\
		=&-\log\inf_{M\in \mathcal{M}_3}\{\frac{\mathrm{tr}M_1\rho}{\mathrm{tr}(M_1+M_2)\rho}|\frac{\mathrm{tr}M_2\sigma}{\mathrm{tr}(M_1+M_2)\sigma}\le\epsilon, \forall\sigma\in\mathcal{F}, 0\le M_1+M_2\le \mathbb{I}\}\\
		=&-\log\inf_{t,M\in \mathcal{M}_3}\{t|\frac{\mathrm{tr}M_1\rho}{\mathrm{tr}(M_1+M_2)\rho}\le t,\frac{\mathrm{tr}M_2\sigma}{\mathrm{tr}(M_1+M_2)\sigma}\le\epsilon, \forall\sigma\in\mathcal{F}, 0\le M_1+M_2\le \mathbb{I}\}\\
		=&-\log\inf_{\tilde{t},M_1,M_2^{'}\ge 0}\{\frac{1}{\tilde{t}}|\frac{\mathrm{tr}M_2^{'}\rho}{\mathrm{tr}M_1\rho}\ge 1,\frac{\mathrm{ tr}M_1\sigma}{\mathrm{ tr}M_2^{'}\sigma}\ge \frac{1-\epsilon}{\epsilon}(t^{'}-1),\forall\sigma\in \mathcal{F}\},
	\end{align*}
	
	In the third equality, we denote $M_2^{'}=\frac{t }{1-t}M_2$, in the last equality, $\tilde{t}=\frac{1}{t}$. Then we have
	\begin{align*}
		&2^{\hat{\beta}_{\epsilon,\mathcal{F}}(\rho)}\\
		=&\inf_{\sigma\in\mathcal{F}}\sup_{\tilde{t}\ge 0,M_1,M_2^{'}\ge 0}\{\tilde{t}|\frac{\mathrm{ tr}M_1\rho}{\mathrm{ tr}M_2^{'}\rho}\le 1,\frac{\mathrm{ tr}M_1\sigma}{\mathrm{ tr}M_2^{'}\sigma}\ge \frac{1-\epsilon}{\epsilon}(t^{'}-1)\}\\
		=&\inf_{\sigma\in \mathcal{F}}\sup_{\tilde{t}\ge 0,M_1,M_2^{'}\ge 0}\{   1+\frac{\epsilon}{1-\epsilon}\frac{\mathrm{ tr}M_1\sigma}{\mathrm{ tr}M_2^{'}\sigma}|\frac{\mathrm{ tr}M_1\rho}{\mathrm{ tr}M_2^{'}\rho}\le 1\}.
	\end{align*}
	As 
	\begin{align*}
		\min_{\rho\in \mathcal{F}}\Omega(\rho,\sigma)=&\sup_{A,B}\{\frac{\mathrm{tr}A\rho}{\mathrm{tr}B\rho}|\frac{\mathrm{tr}A\sigma}{\mathrm{ tr}B\sigma}\le 1,\forall\rho\in \mathcal{F}\}\\
		=&\inf_{\rho\in\mathcal{F}}\sup_{A,B}\{\frac{\mathrm{tr}A\rho}{\mathrm{tr}B\rho}|\frac{\mathrm{tr}A\sigma}{\mathrm{ tr}B\sigma}\le 1\},
	\end{align*}
	then
	\begin{align*}
		2^{\hat{\beta}_{\epsilon,\mathcal{F}}(\rho)}
		=&1+\frac{\epsilon}{1-\epsilon}\min_{\sigma\in\mathcal{F}}\Omega(\sigma,\rho)
	\end{align*}
	
For a generic $n$, let $\sigma_n\in\mathcal{F}_n$ be the optimal for $\rho^{\otimes n}$ in terms of $\Omega(\cdot,\rho^{\otimes n})$, 
	\begin{align}
		&	\frac{\epsilon}{1-\epsilon}\Omega(\sigma_n,\rho^{\otimes n})\le 2^{\hat{\beta}_{\epsilon,\mathcal{F}}(\rho^{\otimes n})}\le \frac{1}{1-\epsilon}\Omega(\sigma_n,\rho^{\otimes n})\nonumber\\
		\Longrightarrow&
		\log\frac{\epsilon}{1-\epsilon}+\log\Omega(\sigma_n,\rho^{\otimes n})\le {\hat{\beta}_{\epsilon,\mathcal{F}}(\rho^{\otimes n})}\nonumber\\\le& \log\frac{1}{1-\epsilon}+\log\Omega(\sigma_n,\rho^{\otimes n}),\label{lf00}
	\end{align}
	next when $\mathcal{F}_n$ is closed under tensor product, $\log\hat{\Omega}_{\mathcal{F}_{m+n}}(\rho^{\otimes m+n})\le \log\hat{\Omega}_{\mathcal{F}_{m}}(\rho^{\otimes m})+\log\hat{\Omega}_{\mathcal{F}_{n}}(\rho^{\otimes n})$, due to Fekete's Lemma, $\lim\limits_{n\rightarrow\infty}\frac{1}{n}\log\hat{\Omega}_{\mathcal{F}_{n}}(\rho^{\otimes n})$ exists. Then dividing n to both sides of (\ref{lf00}) and taking the limit, we have
	\begin{align*}
		\lim\limits_{n\rightarrow\infty}\frac{1}{n}\hat{\beta}_{\epsilon,\mathcal{F}}(\rho^{\otimes n})=\lim\limits_{n\rightarrow\infty}\frac{1}{n}\log\hat{\Omega}_{\mathcal{F}}(\rho^{\otimes n}).
	\end{align*}
\end{proof}

	\begin{corollary}\label{c2}
	Assume $\mathcal{F}$ is a convex and compact set of quantum states on $\mathcal{H}$ with $\frac{I}{d}\in \mathcal{F}$, here $d$ is the dimension of $\mathcal{H},$ $\mathbb{M}$ is a class of measurements on $\mathcal{H}$ with $(\mathbb{M}^{*})^{*}=cone(\mathbb{M})$, then 
	\begin{align}
		{\hat{\beta}^{\mathbb{M}}_{\epsilon,\mathcal{F}}(\rho)}
		=\log[1+\frac{\epsilon}{1-\epsilon}{\hat{\Omega}}^{\mathbb{M}}_{\mathcal{F}}(\rho)],
	\end{align}
	here 
	\begin{align*}
		\hat{\Omega}^{\mathbb{M}}_{\mathcal{F}}(\rho)=&\inf   \gamma\\
		\textit{s. t.}\hspace{3mm}&\rho\preceq_{\mathbb{M}^{*}}\tilde{\sigma}\preceq_{\mathbb{M}^{*}}\gamma\rho,\\
		&	\tilde{\sigma}\in cone(\mathcal{F}),
	\end{align*}
	here $X\preceq_{\mathbb{M}^{*}}Y$ means that $\mathrm{ tr}(Y-X)H\ge 0,$ $\forall H\in M$ and $M\in \mathbb{M}$.

	When the family set $(\mathbb{M}_n)_n$ and $(\mathcal{F}_n)_n$ are convex and compact, $(\mathbb{M}_n)_n$ and $(\mathcal{F}_n)_n$ are closed under tensor product, we have
	\begin{align}
		\liminf\limits_{n\rightarrow \infty}\frac{1}{n}{\hat{\beta}^{\mathbb{M}}_{\epsilon,\mathcal{F}}(\rho^{\otimes n})}=\hat{D}_{\Omega,\mathcal{F}}^{reg,\mathbb{M}}(\rho):=
		\liminf\limits_{n\rightarrow\infty}\frac{1}{n}\log\hat{\Omega}^{\mathbb{M}}_{\mathcal{F}}(\rho^{\otimes n}).
	\end{align}
	
\end{corollary}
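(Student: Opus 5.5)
We follow the proof of Corollary \ref{c1}, now with the measurement class $\mathbb{M}$ in place of all POVMs; the restriction to $\mathbb{M}$ only affects how the cone ordering is expressed. The plan has four steps.

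\textbf{Step 1: reduce to a linear-fractional program.} In (\ref{lfm}), write $t$ for the objective value $\frac{\mathrm{tr}M_1\rho}{\mathrm{tr}(M_1+M_2)\rho}$. The type-I constraint $\frac{\mathrm{tr}M_2\sigma}{\mathrm{tr}(M_1+M_2)\sigma}\le\epsilon$ is equivalent to $\mathrm{tr}M_2\sigma\le\frac{\epsilon}{1-\epsilon}\mathrm{tr}M_1\sigma$ for all $\sigma\in\mathcal{F}$. The objective constraint becomes $\mathrm{tr}M_1\rho\le\frac{t}{1-t}\mathrm{tr}M_2\rho$. Since all ratios are invariant under a common rescaling of $(M_1,M_2)$, the normalization $M_1+M_2\le\mathbb{I}$ can be dropped. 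We then optimize over pairs $(A,B)=(M_1,M_2)$ ranging over $cone(\mathbb{M})$, i.e.\ over the cone generated by effects in $\mathbb{M}$. This is where the hypothesis $(\mathbb{M}^*)^*=cone(\mathbb{M})$ will be used. After this step,
\[
2^{\hat\beta^{\mathbb{M}}_{\epsilon,\mathcal{F}}(\rho)}=\sup\Big\{1+\tfrac{\epsilon}{1-\epsilon}\tfrac{\mathrm{tr}A\sigma}{\mathrm{tr}B\sigma}\ \text{minimized over }\sigma\ \Big|\ \tfrac{\mathrm{tr}A\rho}{\mathrm{tr}B\rho}\le 1,\ A,B\in cone(\mathbb{M})\Big\},
\]
up to checking the exact direction of the ratios. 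The main care here is the bookkeeping between $t$ and $\tilde t$, mirroring the substitutions in Corollary \ref{c1}.

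\textbf{Step 2: identify the inner quantity with $\hat{\Omega}^{\mathbb{M}}_{\mathcal{F}}(\rho)$ by conic duality.} This is the main obstacle. We need
\[
\sup_{A,B\in cone(\mathbb{M})}\Big\{\inf_{\sigma\in\mathcal{F}}\tfrac{\mathrm{tr}A\sigma}{\mathrm{tr}B\sigma}\ :\ \mathrm{tr}A\rho\le\mathrm{tr}B\rho\Big\}=\hat{\Omega}^{\mathbb{M}}_{\mathcal{F}}(\rho).
\]
The plan is to first exchange $\sup_{A,B}$ and $\inf_\sigma$ using Sion's minimax theorem. This is justified because $\mathcal{F}$ is convex and compact, and the ratio can be linearized by normalizing $\mathrm{tr}B\sigma=1$ as in Lemma \ref{l1}(5). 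For a fixed $\sigma$, the inner supremum is the measured Hilbert projective quantity of Lemma \ref{l1}(7). We then pass to the SDP dual: $\min\gamma$ subject to $\rho\preceq_{\mathbb{M}^*}\tilde\sigma\preceq_{\mathbb{M}^*}\gamma\rho$ with $\tilde\sigma\in cone(\mathcal{F})$. Here $(\mathbb{M}^*)^*=cone(\mathbb{M})$ is exactly what is needed for the dual cone of $cone(\mathbb{M})$ to define the ordering $\preceq_{\mathbb{M}^*}$. For strong duality (Slater's condition) we use $\frac{I}{d}\in\mathcal{F}$: it gives a strictly feasible point, with $\tilde\sigma$ a large multiple of $\frac{I}{d}$, and it guarantees that $\hat{\Omega}^{\mathbb{M}}_{\mathcal{F}}(\rho)$ is finite. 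If the minimax exchange proves delicate, the fallback is to prove the two inequalities directly. Weak duality is immediate: evaluating $\mathrm{tr}A(\cdot)$ and $\mathrm{tr}B(\cdot)$ on the chain $\rho\preceq\tilde\sigma\preceq\gamma\rho$ bounds the ratio. The reverse inequality follows by a separating-hyperplane argument in the space of pairs $(A,B)$.

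\textbf{Step 3: asymptotics.} Sandwich $\hat\beta^{\mathbb{M}}_{\epsilon,\mathcal{F}}(\rho^{\otimes n})$ exactly as in (\ref{lf00}):
\[
\log\tfrac{\epsilon}{1-\epsilon}+\log\hat\Omega^{\mathbb{M}_n}_{\mathcal{F}_n}(\rho^{\otimes n})\le\hat\beta^{\mathbb{M}_n}_{\epsilon,\mathcal{F}_n}(\rho^{\otimes n})\le\log\tfrac{1}{1-\epsilon}+\log\hat\Omega^{\mathbb{M}_n}_{\mathcal{F}_n}(\rho^{\otimes n}),
\]
where the upper bound uses $\hat\Omega\ge1$.

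\textbf{Step 4: take the limit.} Divide by $n$ and take $\liminf$; the additive constants vanish. Because the statement is phrased with $\liminf$, no Fekete-type subadditivity is required. This is convenient, since subadditivity of $\hat\Omega^{\mathbb{M}}$ would require tensor-closure of $\mathbb{M}^*$, which need not hold.
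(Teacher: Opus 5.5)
Your overall route is the paper's: the Corollary~\ref{c1}-style linear-fractional reduction, then conic duality to identify the resulting program with $\hat{\Omega}^{\mathbb{M}}_{\mathcal{F}}(\rho)$, then the sandwich $\frac{\epsilon}{1-\epsilon}\hat{\Omega}\le 1+\frac{\epsilon}{1-\epsilon}\hat{\Omega}\le\frac{1}{1-\epsilon}\hat{\Omega}$, then a $\liminf$ with no Fekete argument. Your explicit use of $\hat{\Omega}\ge 1$ in the sandwich is a useful addition.

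The problem is in Step~2, which you rightly single out as the crux: the strong-duality justification you give does not work. A Slater point with $\tilde{\sigma}=c\,\mathbb{I}/d$ would need $c\,\mathbb{I}/d$ to lie strictly below $\gamma\rho$ in the $\mathbb{M}^{*}$ order. That means $\gamma\,\mathrm{tr}H\rho> c\,\mathrm{tr}H/d$ for every nonzero effect $H$ of $\mathbb{M}$, which fails as soon as some effect of $\mathbb{M}$ annihilates $\rho$.

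Nor does $\mathbb{I}/d\in\mathcal{F}$ make $\hat{\Omega}^{\mathbb{M}}_{\mathcal{F}}(\rho)$ finite. Take $\mathbb{M}$ to be all measurements, $\mathcal{F}=Sep$, and $\rho=\ket{\psi}\bra{\psi}$ pure entangled. Then $\rho\le\tilde{\sigma}\le\gamma\rho$ forces $\tilde{\sigma}\propto\rho\notin cone(Sep)$, so the $\gamma$-program is infeasible and $\hat{\Omega}^{\mathbb{M}}_{\mathcal{F}}(\rho)=+\infty$. Consistently, $\hat{\beta}^{\mathbb{M}}_{\epsilon,\mathcal{F}}(\rho)=+\infty$, witnessed by $M_1=\mathbb{I}-\rho$ and $M_2=c\rho$ with $c$ small. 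So as written, your argument only covers states with $\mathrm{tr}H\rho>0$ for all nonzero $H\in cone(\mathbb{M})$.

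The fix is to put the Slater point on the measurement side, as the paper does in (\ref{dmhp}). Take $(A,B)=(\epsilon\mathbb{I},\mathbb{I})$ with $\epsilon\in(0,1)$:
\begin{itemize}
\item it satisfies $\mathrm{tr}B\rho=1$;
\item it gives $\mathrm{tr}(B-A)\sigma=1-\epsilon>0$ uniformly on $\mathcal{F}$;
\item $\mathbb{I}$ lies in the relative interior of $cone(\mathbb{M})$, because every measurement in $\mathbb{M}$ sums to $\mathbb{I}$, so $\mathbb{I}-tH=(1-t)\mathbb{I}+t(\mathbb{I}-H)\in cone(\mathbb{M})$ for each effect $H$ and $t\le 1$.
\end{itemize}
This gives a zero duality gap for every $\rho$, including the case where the $\gamma$-program is infeasible and both sides are $+\infty$. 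It uses neither $\mathbb{I}/d\in\mathcal{F}$ nor any property of $\rho$.

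Dualising the whole program at once also makes your Sion step unnecessary: the constraint family $\mathrm{tr}(B-A)\sigma\ge 0$, $\sigma\in\mathcal{F}$, directly acquires $\tilde{\sigma}\in cone(\mathcal{F})$ as its multiplier. If you keep the minimax exchange, do not justify it by ``normalising $\mathrm{tr}B\sigma=1$'', which is unavailable when $\sigma$ varies. Instead use the quasi-linearity of the ratio $\mathrm{tr}A\sigma/\mathrm{tr}B\sigma$ separately in $\sigma$ and in $(A,B)$.
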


First we present the dual form of $\Omega_{\mathcal{F}}^{\mathbb{M}}(\rho)$, which can be obtained through standard Lagrange duality arguments \cite{ponstein2004approaches},

\begin{align}
	\hat{\Omega}^{\mathbb{M}}_{\mathcal{F}}(\rho)=&\sup \mathrm{ tr}A\rho \label{dmhp}\\
	\textit{s. t.} \hspace{4mm}& \mathrm{ tr}B\rho=1,\nonumber\\
	&\mathrm{ tr}(B-A)\sigma\ge0,\nonumber\\
	&A,B\in\{\omega|\omega=\sum_i\mu_i M_i, \mu_i \ge 0\},\nonumber\\
	&M=\{M_i\}_i\in  \mathbb{M}.\nonumber
\end{align}

The strong duality can be proved by Slater's theorem \cite{rockafellar1997convex}, when taking $B=\mathbb{I}$ and $A=\epsilon \mathbb{I}$ for $\epsilon\in (0,1)$, it is feasible for the dual.

Based on a similar method of the proof of Corollary \ref{c1}, we have
	\begin{align*}
	2^{\hat{\beta}^{\mathbb{M}}_{\epsilon,\mathcal{F}}(\rho)}
	=&1+\frac{\epsilon}{1-\epsilon}\hat{\Omega}_{\mathcal{F}}^{\mathbb{M}}(\rho)
\end{align*}

For a generic $n$, let $\sigma_n\in\mathcal{F}_n$ be the optimal for $\rho^{\otimes n}$ in terms of $\Omega(\cdot,\rho^{\otimes n})$, 
\begin{align}
	&	\frac{\epsilon}{1-\epsilon}\Omega^{\mathbb{M}}(\sigma_n,\rho^{\otimes n})\le 2^{\hat{\beta}^{\mathbb{M}}_{\epsilon,\mathcal{F}}(\rho^{\otimes n})}\le \frac{1}{1-\epsilon}\Omega^{\mathbb{M}}(\sigma_n,\rho^{\otimes n})\nonumber\\
	\Longrightarrow&
	\log\frac{\epsilon}{1-\epsilon}+\log\Omega^{\mathbb{M}}(\sigma_n,\rho^{\otimes n})\le {\hat{\beta}^{\mathbb{M}}_{\epsilon,\mathcal{F}}(\rho^{\otimes n})}\le \log\frac{1}{1-\epsilon}+\log\Omega^{\mathbb{M}}(\sigma_n,\rho^{\otimes n}). \label{lf01}
\end{align}
Then dividing $n$ to both sides of (\ref{lf01}) and taking the limit, we have
\begin{align*}
	\liminf\limits_{n\rightarrow\infty}\frac{1}{n}\hat{\beta}^{\mathbb{M}}_{\epsilon,\mathcal{F}}(\rho^{\otimes n})=\liminf\limits_{n\rightarrow\infty}\frac{1}{n}\log\hat{\Omega}^{\mathbb{M}}_{\mathcal{F}}(\rho^{\otimes n}).
\end{align*}

\subsection{Proof of Theorem \ref{th1}}\label{ath1}

\begin{lemma}\label{gmm}
	Assume $\mathcal{H}$ is a Hilbert space with $dim(\mathcal{H})=d\ge2,$ both $M$ and $N_i$ $(i=1,2,\cdots,m)$ are semidefinite positive operators acting on $\mathcal{H}_{AB}$ with $M+\sum_i N_i\le \mathbb{I}_{AB}$, let $$\Lambda(X)=\mathrm{ tr}MX\cdot\mathbb{I}_0+\sum_{i=1}^m\mathrm{ tr}N_iX\cdot \frac{\mathbb{I}_i}{d_{i}},$$ here $\mathbb{I}_0=\Psi,$ $\mathbb{I}_i$ is some projector onto some subspace of $\mathcal{H}$,  $\mathbb{I}_i\perp\mathbb{I}_j$, $\forall i\ne j,$ and $d_i=dim(\mathbb{I}_i).$ Then for any $\epsilon>0$,  if $\sup\limits_{X\in \mathbf{F}} \frac{\max(\mathrm{tr}MX,\mathrm{tr}N_kX)}{\mathrm{tr}MX+\sum_i \mathrm{tr}N_iX}\le \frac{\epsilon+1}{d}$, $\Lambda(\cdot)\in \mathbb{O}_{\mathcal{NG}}^{\epsilon}$.
\end{lemma}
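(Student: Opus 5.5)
The plan is to verify the defining condition of $\mathbb{O}_{\mathcal{NG}}^{\epsilon}$ directly. For every free $X\in\mathbf{F}$ with $\mathrm{tr}\Lambda(X)>0$, I will show that the normalized output
\begin{align*}
\omega_X=\frac{\Lambda(X)}{\mathrm{tr}\Lambda(X)}
\end{align*}
has generalized robustness at most $\epsilon$. The only free state I will use is $\frac{I}{d}$, which lies in $\mathbf{F}$ by property (4). I will not need the group structure here; that is used later, to reduce an arbitrary instrument to one of the form of $\Lambda$.

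First I normalize. Write $t=\mathrm{tr}MX+\sum_i\mathrm{tr}N_iX=\mathrm{tr}\Lambda(X)$, and set $p_0=\mathrm{tr}MX/t$ and $p_i=\mathrm{tr}N_iX/t$. Then
\begin{align*}
\omega_X=p_0\Psi+\sum_{i=1}^m p_i\frac{\mathbb{I}_i}{d_i},\qquad \sum_{i\ge0}p_i=1.
\end{align*}
The hypothesis on the ratio says exactly that $p_0\le\frac{1+\epsilon}{d}$ and $p_k\le\frac{1+\epsilon}{d}$ for all $k$. I read the statement so that $\mathbb{I}_0=\Psi,\mathbb{I}_1,\dots,\mathbb{I}_m$ are mutually orthogonal projectors (in the intended use, $\Psi$ together with the decomposition $(\ref{hs})$). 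The operators $\Psi$ and $\mathbb{I}_i/d_i$ then have orthogonal supports, so the spectrum of $\omega_X$ consists of $p_0$ and the values $p_i/d_i\le p_i$, possibly together with zeros. Hence
\begin{align*}
\omega_X\le\frac{1+\epsilon}{d}\,\mathbb{I}.
\end{align*}

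Next I build the robustness witness. Define $\tau=\frac{1}{\epsilon}\bigl((1+\epsilon)\frac{\mathbb{I}}{d}-\omega_X\bigr)$. It is positive semidefinite by the operator inequality above, and $\mathrm{tr}\,\tau=\frac{(1+\epsilon)-1}{\epsilon}=1$, so $\tau\in\mathcal{D}_{\mathcal{H}}$. Moreover $\frac{\omega_X+\epsilon\tau}{1+\epsilon}=\frac{\mathbb{I}}{d}\in\mathbf{F}$, and therefore $R_{\mathbf{F},g}(\omega_X)\le\epsilon$.

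Finally, $\Lambda$ is only trace nonincreasing, so I complete it to an instrument by adding the branch $X\mapsto\mathrm{tr}\bigl[(\mathbb{I}-M-\sum_iN_i)X\bigr]\frac{\mathbb{I}}{d}$. Its normalized output is always $\frac{I}{d}$, which has robustness $0$. The sum of the two branches is then a channel, and both branches satisfy the $\epsilon$-condition, so the instrument lies in $\mathbb{O}_{\mathcal{NG}}^{\epsilon}$.

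The main obstacle is bookkeeping rather than analysis. I must make sure the orthogonality of $\Psi$ to all the $\mathbb{I}_i$ is actually available, since otherwise the eigenvalue bound fails and one would need a twirled free state in place of $\frac{I}{d}$. I must also handle free $X$ with $\mathrm{tr}\Lambda(X)=0$; these are vacuous because the definition only constrains branches with nonzero trace.
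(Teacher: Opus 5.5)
Your proof is correct and is essentially the paper's argument: in both, the normalized output $\Lambda(X)/\mathrm{tr}\Lambda(X)$ is mixed with a positive complement so that the result is exactly the free state $\mathbb{I}/d$. This works because $\Psi$ and the $\mathbb{I}_i$ are mutually orthogonal and every weight is at most $\frac{1+\epsilon}{d}$. The differences are cosmetic: the paper writes the complement explicitly on the projectors, obtaining $R_{\mathbf{F},g}\le d\max_k p_k-1$ and tacitly using $\Psi+\sum_i\mathbb{I}_i=\mathbb{I}$ (which holds under (\ref{hs})), whereas your witness $\tau=\frac{1}{\epsilon}\bigl((1+\epsilon)\frac{\mathbb{I}}{d}-\omega_X\bigr)$ does not need the projectors to resolve the identity, and your discard branch is the same completion the paper uses in the proof of Theorem~\ref{th1}.
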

\begin{proof}
	For any $X\in \mathbf{F},$
	\begin{align*}
		&\frac{\mathrm{ tr}\Lambda(X)}{d\max(\mathrm{tr}MX,\mathrm{tr}N_kX)}[\frac{\Lambda(X)}{\mathrm{tr}\Lambda(X)}+\frac{(\max(\mathrm{ tr}MX,\mathrm{tr}N_iX)-\mathrm{ tr}MX)\mathbb{I}_0+\sum_i[\max(\mathrm{ tr}MX,\mathrm{tr}N_kX)-\frac{\mathrm{tr}N_iX}{{d_i}}] {\mathbb{I}_i}}{\mathrm{ tr}\Lambda(X)}]\\
		=& \frac{\mathbb{I}}{d}\in \mathbf{F},
	\end{align*}
	then we have $$R_{s,\mathbf{F}}(\Lambda(X))\le\frac{d\max(\mathrm{tr}MX,\mathrm{tr}N_kX)}{\mathrm{tr}MX+\sum_i \mathrm{tr}N_iX}-1.$$ Hence, $\sup\limits_{X\in \mathbf{F}}R_{s,\mathbf{F}}(\mathrm{ tr}\Lambda(X))\le \epsilon$, $\Lambda(\cdot)\in \mathbb{O}_{\epsilon}.$
\end{proof}

First we define the following twirling operation
\begin{align}
	\Lambda(\cdot)=\int_Gdg U_g(\cdot) U_g^{\dagger},\label{ff10}
\end{align}
where the integral is taken over the Haar measure of the group $G.$ Any unitary representation $\{U_g\}_{g\in G}$ of a finite or a compact Lie group $G$ can be decomposed into a direct sum of irreducible representation \cite{Fulton2013}. Accordingly, the decomposition of the representation $\{U_g\}$ is associated with the decomposition of the Hilbert space 
\begin{align*}
	\mathcal{H}=\bigoplus_{\mu} \mathcal{H}_{\mu}^{(1)}\otimes\mathcal{H}_{\mu}^{(2)},
\end{align*}
where $\mu$ is the irreducible representation, $\mathcal{H}_{\mu}^{(1)}$ is the subspace on which each irreducible represenation acts nontrivially, and $\mathcal{H}_{\mu}^{(2)}$ denotes the multiplicity subspace. Furthermore, as $dim \mathcal{H}_{\mu}^{(2)}=1,$ $\forall\mu$, each $U_g$ can be written as
\begin{align*}
	U_g=\bigoplus_{\mu} U_{\mu}^{(1)}(g).
\end{align*}
As for any $g\in G$,
\begin{align*}
	U_g\Lambda(\rho)U_g^{\dagger}=&\int_G dh U_gU_h\rho U_h^{\dagger}U_g^{\dagger}\\
	=&\int_G dhU_{gh}\rho U_{gh}^{\dagger}\\
	=&\int_G dg U_g\rho U_g^{\dagger}=\Lambda(\rho),
\end{align*}
Based on Schur's lemma \cite{Fulton2013}, each symmetric state $\sigma$ with $U_g\sigma U_g^{\dagger}=\sigma$ $\forall g,$ can be written as 
\begin{align*}
	\sigma=\bigoplus_{\mu}q_{\mu}\frac{\mathbb{I}_{\mu}^{(1)}}{d_{\mu}^{(1)}},
\end{align*}
here $\{q_{\mu}\}_{\mu}$ is a probability distribution, $\mathbb{I}_{\mu}^{(1)}$ is the projector onto $\mathcal{H}_{\mu}^{(1)},$ and $dim\mathcal{H}_{\mu}^{(1)}=d_{\mu}^{(1)}.$

Next define the projection onto invariant subspaces as 
\begin{align*}
	\mathcal{P}(\cdot)=\sum_{\mu}\mathbb{I}_{\mu}^{(1)}(\cdot)\mathbb{I}_{\mu}^{(1)}.
\end{align*}
Then for any state $\rho$, $\mathcal{P}(\rho)=\bigoplus_{\mu}p_{\mu}\sigma_{\mu}$, here $\sigma_{\mu}$ is a quantum state on $\mathcal{H}_{\mu}^{(1)}$. And
\begin{align*}
	\Lambda\circ\mathcal{P}(\rho)=\int_gdg U_g\mathcal{P}(\rho) U_g^{\dagger}=\int_gdg \mathcal{P}(U_g\rho U_g^{\dagger})=\Lambda(\rho), \hspace{2mm}\forall g\in G.
\end{align*}
Then 
\begin{align*}
	\Lambda(\rho)=\Lambda(\mathcal{P}(\rho))=&\bigoplus_{\mu}p_{\mu}\int_Gdg U_{\mu}^{(1)}(g)\sigma_{\mu} U_{\mu}^{(1)}(g)^{\dagger}\\
	=&\bigoplus_{\mu}p_{\mu}\frac{\mathbb{I}_{\mu}^{(1)}}{d_{\mu}^{(1)}}\hspace{3mm} ,
\end{align*}
Hence,
\begin{align}
	\Lambda(\rho)=&\sum_{\mu}p_{\mu}\frac{\mathbb{I}_{\mu}^{(1)}}{d_{\mu}^{(1)}}\nonumber\\
	=&\sum_{\mu}\frac{\mathbb{I}_{\mu}^{(1)}}{d_{\mu}^{(1)}}\mathrm{tr}[\rho \mathbb{I}_{\mu}^{(1)}].\label{ff11}
\end{align}

Based on the assumption, $\ket{\psi}$ is the eigenvector of any $U_g$. That is, $\ket{\psi}$ is in the invariant subspace of $U_g,$ $i.$$e.$ $U_g\ket{\psi}=e^{i\phi_g}\ket{\psi}$, and $\mathbb{I}_{\mu_1}=\Psi=\ket{\psi}\bra{\psi}.$ 
Combing with (\ref{ff11}), we have
\begin{align*}
	\Lambda(\rho)=\mathrm{tr}\rho\Psi\cdot\Psi+\sum_{\mu\ne \mu_1}\frac{\mathbb{I}_{\mu}^{(1)}}{d_{\mu}^{(1)}}\mathrm{tr}[\rho \mathbb{I}_{\mu}^{(1)}].
\end{align*}

\begin{lemma}\label{lf}
	Assume $\rho$ is a state, then its probabilistic distillation exponent for the reference state $\ket{\psi}$ under $\mathbb{O}_{\mathcal{NG}}^{\delta}$ instrument, $E_{d,err,p}(\rho)$, can be rewritten as
	\begin{align*}
		E_{d,err,p}(\rho)=&\sup-\log\epsilon\\
		\textit{s. t.}\hspace{4mm}&  \frac{\mathrm{tr}M\mathcal{E}_i(\rho)}{\mathrm{ tr}\mathcal{E}_i(\rho)}\ge 1-\epsilon,
		\mathcal{E}_i(X)=\mathrm{ tr}MX\cdot\Psi+\sum_{\mu\ne \mu_1}\mathrm{ tr}N_{\mu}X\cdot\frac{\mathbb{I}_{\mu}^{(1)}}{d_{\mu}^{(1)}},\\
		&M+\sum_{\mu\ne \mu_1}N_{\mu}\le \mathbb{I},M,N_{\mu}\ge 0,\forall\mu\ne \mu_1,\\
		&\mathcal{E}_i\in \mathcal{E},\mathcal{E}\in \mathbb{O}_{\mathcal{NG}}^{{\delta}}.
	\end{align*}
	
\end{lemma}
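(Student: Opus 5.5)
We establish the two inequalities between the value of the original program defining $E_{d,err,p}(\rho)$ and the value of the restricted program of Lemma \ref{lf}. Restricting the feasible set can only lower the supremum. So one direction is immediate: every feasible point of the program in Lemma \ref{lf} is a legitimate instrument in $\mathbb{O}_{\mathcal{NG}}^{\delta}$. Its fidelity with $\Psi$ equals $\mathrm{tr}\Psi\mathcal{E}_i(\rho)/\mathrm{tr}\mathcal{E}_i(\rho)$, and for an output of the stated form this is $\mathrm{tr}M\rho/\mathrm{tr}\mathcal{E}_i(\rho)$, because the blocks $\mathbb{I}_{\mu}^{(1)}$ with $\mu\neq\mu_1$ are orthogonal to $\Psi$. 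I should check that this coincides with the expression $\mathrm{tr}M\mathcal{E}_i(\rho)/\mathrm{tr}\mathcal{E}_i(\rho)$ as it is written in the lemma; I read that expression as $\mathrm{tr}\Psi\mathcal{E}_i(\rho)$.

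The substantive direction is the converse: any feasible instrument $\mathcal{E}=\{\mathcal{E}_j\}\in\mathbb{O}_{\mathcal{NG}}^{\delta}$, with a branch $\mathcal{E}_i$ achieving error $\epsilon$, can be replaced by one of the restricted form with no larger error. The plan is to post-compose every branch with the twirl $\Lambda$ of (\ref{ff10}), setting $\mathcal{E}_j'=\Lambda\circ\mathcal{E}_j$.

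By (\ref{ff11}) and the identification $\mathbb{I}_{\mu_1}^{(1)}=\Psi$, each branch becomes
\[
\mathcal{E}_j'(X)=\mathrm{tr}[\Psi\mathcal{E}_j(X)]\,\Psi+\sum_{\mu\neq\mu_1}\mathrm{tr}[\mathbb{I}_{\mu}^{(1)}\mathcal{E}_j(X)]\,\frac{\mathbb{I}_{\mu}^{(1)}}{d_{\mu}^{(1)}} .
\]
This is exactly the required form, with $M=\mathcal{E}_j^{\dagger}(\Psi)$ and $N_\mu=\mathcal{E}_j^{\dagger}(\mathbb{I}_\mu^{(1)})$. These are positive, and $M+\sum_\mu N_\mu=\mathcal{E}_j^\dagger(\mathbb{I})\le\mathbb{I}$ since $\mathcal{E}_j$ is trace non-increasing.

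Two facts then need to be verified.
\begin{itemize}
\item[(i)] \emph{The figure of merit is unchanged.} The twirl is trace-preserving, so $\mathrm{tr}\mathcal{E}_i'(\rho)=\mathrm{tr}\mathcal{E}_i(\rho)$. It is also self-adjoint and fixes $\Psi$, because $U_g\Psi U_g^\dagger=\Psi$. Hence $\mathrm{tr}\Psi\mathcal{E}_i'(\rho)=\mathrm{tr}\Psi\mathcal{E}_i(\rho)$, and $F$ against the pure state $\Psi$ is linear, so the fidelity is preserved.
\item[(ii)] \emph{The new instrument still lies in $\mathbb{O}_{\mathcal{NG}}^{\delta}$.} The sum $\sum_j\mathcal{E}_j'$ is a channel, and the normalised output of each branch is $\Lambda(\tau)$ with $\tau=\mathcal{E}_j(\sigma)/\mathrm{tr}\mathcal{E}_j(\sigma)$. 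It therefore suffices to show $R_{\mathbf{F},g}(\Lambda(\tau))\le R_{\mathbf{F},g}(\tau)\le\delta$.
\end{itemize}

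Step (ii) is the main obstacle. The robustness is monotone under $\Lambda$, and I would prove this directly. Take an optimal decomposition $\tau+s\omega=(1+s)\sigma'$ with $\sigma'\in\mathbf{F}$. Applying $\Lambda$ gives $\Lambda(\tau)+s\Lambda(\omega)=(1+s)\Lambda(\sigma')$. Here $\Lambda(\sigma')\in\mathbf{F}$, because $\Lambda$ is a convex mixture (Haar integral) of the maps $\mathcal{U}_g\in\mathbf{O}_{max}$. By convexity and closedness of $\mathbf{F}$ (property (1), together with compactness of the state space), such a mixture sends free states to free states; for a compact Lie group this should be justified by approximating the integral by finite convex combinations and using closedness of $\mathbf{F}$. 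Thus $s$ remains feasible for $\Lambda(\tau)$.

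Combining (i) and (ii) shows that the restricted program attains at least the value of the original one, which gives the claimed equality.
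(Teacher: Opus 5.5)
Your proof is correct and follows the paper's argument. You post-compose with the twirl $\Lambda$, use $\mathbb{I}^{(1)}_{\mu_1}=\Psi$ to reach the restricted form with $M=\mathcal{E}_i^{\dagger}(\Psi)$ and $N_\mu=\mathcal{E}_i^{\dagger}(\mathbb{I}^{(1)}_\mu)$, and note that $\mathrm{tr}\,\Psi\mathcal{E}_i(\rho)$ and $\mathrm{tr}\,\mathcal{E}_i(\rho)$ are unchanged, so you can stop there. Your step (ii) proves, via monotonicity of $R_{\mathbf{F},g}$ under $\Lambda$, the membership $\Lambda\circ\mathcal{E}\in\mathbb{O}_{\mathcal{NG}}^{\delta}$ that the paper only asserts; closedness of $\mathbf{F}$ is not part of property (1), but you do not need it, because in finite dimensions the Haar average $\Lambda$ is a finite convex combination of the maps $\mathcal{U}_g$. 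Your reading of $\mathrm{tr}M\mathcal{E}_i(\rho)$ as $\mathrm{tr}\Psi\mathcal{E}_i(\rho)$ is the intended one.
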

\begin{proof}
	When $\mathcal{E}_i\in \mathcal{E}$ and $\mathcal{E}\in \mathbb{O}_{\mathcal{NG}}^{\delta}$, $\Lambda\circ\mathcal{E}_i\in\Lambda\circ\mathcal{E}$, $\Lambda\circ\mathcal{E}_i\in \mathbb{O}_{\mathcal{NG}}^{\delta},$ then 
	\begin{align*}
		F(\frac{\mathcal{E}_i(\rho)}{\mathrm{tr}(\mathcal{E}_i(\rho))},\Psi)=&\bra{\psi}\frac{\mathcal{E}_i(\rho)}{\mathrm{tr}(\mathcal{E}_i(\rho))}\ket{\psi}\\
		=&\frac{\int dg\bra{\psi}U_g^{\dagger}\mathcal{E}_i(\rho)U\ket{\psi}dU}{\mathrm{ tr}\mathcal{E}_i(\rho)}\\
		=&\frac{\mathrm{ tr}\Psi(\mathrm{ tr}\mathcal{E}_i^{\dagger}(\Psi)\rho\cdot\Psi+\sum_{\mu\ne \mu_1}\frac{\mathbb{I}_{\mu}^{(1)}}{d_{\mu}^{(1)}}\mathrm{tr}[\rho \mathcal{E}_i^{\dagger}\mathbb{I}_{\mu}^{(1)}])}{\mathrm{ tr}\mathcal{E}_i(\rho )}\\
		=&\frac{\mathrm{ tr}\Psi\mathcal{E}_i(\rho)}{\mathrm{ tr}\mathcal{E}_i(\rho)}\ge 1-\epsilon_.
	\end{align*}
	
	Hence, we finish the proof.
\end{proof}

\emph{Theorem \ref{th1}} Assume $\rho$ is a state on $\mathcal{H}$. Let $\ket{\psi}$ be the reference state and $\delta> 0$, if there exists a finite or a compact Lie group $G$ with a unitary repreesntation $\{U_g\}_{g\in G}$ such that $U_g(\cdot)U_g^{\dagger}\in \mathbf{O}_{max}$, $\ket{\psi}$ is an eigenvector of any $U_g$, and the associated decomposition of $\mathcal{H}$ satisifies $(\ref{hs})$, then a lower bound of the conditional error exponent for the probabilistic transformation from $\rho$ to $\psi$ under $\mathbb{O}_{\mathcal{NG}}^{\delta}$ in terms of one-shot scenario is
\begin{align*}
	E_{d,err,p}(\rho)\ge \hat{\beta}_{\frac{\delta+1}{d},\mathbf{F}}(\rho).
\end{align*}

\begin{proof}
	Assume $\{(M,N_i,L)|M+\sum_i N_i+L=\mathbb{I},M,N_i\ge0\}$ is a POVM with $ \frac{\mathrm{ tr}\sum_i N_i\sigma}{\mathrm{ tr}(M+\sum_i N_i)\sigma}\le \frac{{\delta}+1}{d}$ for any $\sigma\in \mathbf{F}.$ Next we construct the following subchannel $\mathcal{E}_1(\cdot)$, $\mathcal{E}_2(\cdot)$,
	\begin{align*}
		\mathcal{E}_1(\cdot)=&\mathrm{tr}M(\cdot)\Psi+\sum_i\mathrm{tr}N_i(\cdot)\frac{\mathbb{I}_i}{d_i},\\
		\mathcal{E}_2(\cdot)=&\mathrm{tr}L(\cdot)\frac{\mathbb{I}}{d}.
	\end{align*}
	For any $\sigma\in \mathbf{F}$, as $\frac{\mathbb{I}}{d}$ is separable, $\frac{\mathcal{E}_2}{\mathrm{tr}\mathcal{E}_2(\cdot)}\in\mathbf{O}_{max}$. Based on Lemma \ref{gmm}, and  $\frac{\max(\mathrm{ tr}N_i\sigma,\mathrm{tr}M\sigma)}{\mathrm{ tr}(M+\sum_i N_i)\sigma}\le\frac{\sum_i\mathrm{ tr}N_i\sigma}{\mathrm{ tr}(M+\sum_i N_i)\sigma}\le \frac{{\delta}+1}{d}$, here the first inequality is due to (\ref{p3}). By Lemma \ref{lf},
	\begin{align*}
		E_{d,err,p}(\rho)\ge& -\log\frac{\sum_i\mathrm{ tr}N_i\rho}{\mathrm{tr}(M+\sum_i N_i)\rho}.
	\end{align*}
\end{proof}

\subsection{Entanglement}

Assume $\mathcal{H}_{AB}$ is the Hilbert space with finite dimensions. A state $\rho_{AB}$ is separable if it can be written as $$\rho=\sum_i p_i\rho_i^A\otimes\rho_i^B,$$ 
here the states $\rho^A_i$ and $\rho_i^B$ are states on local systems $A$ and $B,$ respectively. Otherwise, $\rho_{AB}$ is entangled. We will denote the set of separable states of $\mathcal{H}_{AB}$ as $Sep_{A:B}$, or simply $Sep$ if there is no ambiguity regarding the system.otherwise, it is entangled. Besides, we denote $\overline{Sep}$ and $cone(Sep)=\{\lambda\sigma|\lambda>0,\sigma\in Sep\}$ as the set of separable substates and cone of separable states.

An important method to detect whether a state is separable is the positive partial transpose(PPT) criterion \cite{peres}, which said any separable state $\rho_{AB}$ satisfies the following inequality $\rho_{AB}^{T_B}\ge 0$. A bipartite state $\sigma$ satisfying the PPT criterion is called a PPT state. Furthermore, we can generalize the above concepts to the POVMs. A measurement $\mathsf{M}$ is said to be separable measurements if 
\begin{align*}
	\mathsf{M}=\{M_x|\sum_x M_x=\mathbb{I}, M_x\in \overline{Sep}\}.
\end{align*}
Here we denote the set of all separable measurements as $\mathbb{SEP}.$ Besides, we denote $\mathbb{ALL}$ as the set of all measurements, $\mathbb{ALL}=\{(M_x)|\sum_xM_x=\mathbb{I},M_x\ge 0,\forall x\}$.

\subsection{Probabilistic entanglement distillation exponents for the resource theory of entanglement}\label{aa0}

In this section, we will present the analytical formula for the entanglement distillation exponents and entanglement cost exponents under $\mathbb{O}_{\mathcal{NE}}^{\delta}$ and $\mathbb{O}_{\mathcal{DNE}}^{\delta}$ with the approach of semidefinite programm(SDP) .

Assume $\mathcal{H}_{AB}$ is a bipartite system with $dim(\mathcal{H}_A)=dim(\mathcal{H}_B)=m\ge2$. Let $\ket{\psi_d}=\frac{1}{\sqrt{d}}\sum_i\ket{ii}$ be the maximally entangled state(MES) of $\mathcal{H}_{AB}$. An important property of the MES is that it stays unchanged under $\mathcal{T}(\cdot)$, here
\begin{align*}
	\mathcal{T}(\cdot)=\int_UdU (U\otimes \overline{U})^{\dagger}\cdot(U\otimes \overline{U}).
\end{align*}
Here $\mathcal{T}(\cdot)$ is local operations and shared randomness, hence, it can be realized by local operations and classical communication (LOCC).
Next based on the Schur-weyl theorem, $\mathcal{T}(X)$ can be written as follows,
\begin{align*}
	\mathcal{T}(X)=\Psi_m\mathrm{tr}(X\Psi_d)+\tau_m\mathrm{tr}[X(\mathbb{I}-\Psi_d)],
\end{align*}
here $\Psi_m=\ket{\psi_m}\bra{\psi_m}$, $\tau_m=\frac{\mathbb{I}-\Psi_m}{m^2-1}$. Assume $\mathcal{N}$ is a subchannel, then 
\begin{align}
	\mathcal{N}\circ\mathcal{T}(X)=&\mathcal{N}(\Psi_m)\mathrm{ tr}(X\Psi_m)+\mathcal{N}(\tau_m)\mathrm{ tr}[X(\mathbb{I}-\Psi_m)],\label{sd1}\\
	\mathcal{T}\circ\mathcal{N}(X)=&\Psi_m\mathrm{ tr}\mathcal{N}(X)\Psi_m+\tau_m\mathrm{ tr}\mathcal{N}(X)(\mathbb{I}-\Psi_m),\label{sd2}
\end{align}
For (\ref{sd1}), as $\mathcal{N}(\cdot)$ is a subchannel, $\mathcal{N}(\Psi_m)$ and $\mathcal{N}(\tau_m)$ are substates. Hence, (\ref{sd1}) can always be written as the following, 
\begin{align*}
	\Lambda_{\gamma,\delta}(X)=\mathrm{ tr}(X\Psi_m)\cdot\gamma+\mathrm{tr}X(\mathbb{I}-\Psi_m)\cdot\delta,
\end{align*} For (\ref{sd2}), 
\begin{align*}
	\Psi_m\mathrm{ tr}\mathcal{N}(X)\Psi_m+\tau_m\mathrm{ tr}\mathcal{N}(X)(\mathbb{I}-\Psi_m)=\Psi_m\mathrm{ tr}X\mathcal{N}^{\dagger}(\Psi_m)+\tau_m\mathrm{ tr}X\mathcal{N}^{\dagger}(\mathbb{I}-\Psi_m),\\
\end{align*}
here as $\mathcal{N}$ is a subchannel, then $\mathcal{N}^{\dagger}(\mathbb{I})\le \mathbb{I}$, hence, (\ref{sd2}) can always be written as the following,
\begin{align*}
	\Lambda_{M,N}(X)=\mathrm{ tr}MX\cdot \Psi_m+\mathrm{ tr}NX\cdot \tau_m,
\end{align*}
here $M,N\ge 0$ and $M+N\le \mathbb{I}.$ Next we present properties of the subchannels $\Lambda_{\gamma,\delta}(X)$ and $\Lambda_{M,N}(X)$  needed here.

\begin{lemma}\label{dm}
	Assume $\mathcal{H}_{AB}$ is a Hilbert space with $dim(\mathcal{H}_A)=dim(\mathcal{H}_B)=m,$ both $M$ and $N$ are semidefinite positive operators acting on $\mathcal{H}_{AB}$ with $M+N\le \mathbb{I}_{AB}$, let $\Lambda_{M,N}(X)=\mathrm{ tr}MX\cdot \Psi_m+\mathrm{ tr}NX\cdot \tau_m$, here $\tau_m=\frac{I-\Psi_m}{m^2-1}$, then for any $\varepsilon>0$, we have
	\begin{itemize}
		\item[(1).] $\Lambda_{M,N}(\cdot)\in \mathcal{NE}_{\varepsilon}$ if and only if $\sup_{X\in Sep}\frac{\mathrm{ tr}MX}{\mathrm{ tr}NX}\le \frac{\epsilon+1}{m-1-\epsilon}$.
		\item[(2).] $\Lambda_{M,N}(\cdot)\in \mathcal{DNE}_{\varepsilon}$ if and only if $\sup_{X\in Sep}\frac{\mathrm{ tr}MX}{\mathrm{ tr}NX}\le \frac{\epsilon+1}{m-1-\epsilon}$ and $N,\frac{1}{m}M+(1-\frac{1}{m})N\in \overline{Sep}.$
	\end{itemize}
\end{lemma}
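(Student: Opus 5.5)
The plan is to reduce both parts to two facts about isotropic states: the output of $\Lambda_{M,N}$ on any input is isotropic, and the generalized robustness of an isotropic state has a closed form.

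\textbf{Part (1).} For $X\in Sep$ with $\mathrm{tr}(M+N)X>0$, the normalized output is
\begin{align*}
\frac{\Lambda_{M,N}(X)}{\mathrm{tr}\Lambda_{M,N}(X)}=p_X\Psi_m+(1-p_X)\tau_m,\qquad p_X=\frac{\mathrm{tr}MX}{\mathrm{tr}MX+\mathrm{tr}NX}.
\end{align*}
This is an isotropic state with fidelity $p_X$ to $\Psi_m$. The first step is to prove $R_G(p\Psi_m+(1-p)\tau_m)=\max\{0,mp-1\}$, splitting it into two bounds.

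\emph{Lower bound.} Suppose $\frac{\rho_p+s\tau}{1+s}\in Sep$ for some state $\tau$. The remark after (\ref{p3}) gives $\mathrm{tr}\Psi_m\sigma\le \frac1m$ for every $\sigma\in Sep$. Hence
\begin{align*}
\frac{p+s\,\mathrm{tr}\Psi_m\tau}{1+s}\le\frac1m,
\end{align*}
and since $\mathrm{tr}\Psi_m\tau\ge0$ this forces $s\ge mp-1$.

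\emph{Upper bound.} Take $\tau=\tau_m$ and $s=mp-1$. The mixture is again isotropic, with fidelity $p/(1+s)=1/m$. I will use the standard fact that isotropic states with fidelity at most $1/m$ are separable; it can be checked directly from the PPT criterion, which is sufficient on the isotropic family under $\mathcal{T}$-twirling.

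Combining the two bounds, $\Lambda_{M,N}\in\mathcal{NE}_\varepsilon$ holds iff $mp_X-1\le\varepsilon$ for all separable $X$. Rearranging $p_X\le\frac{1+\varepsilon}{m}$ gives $\frac{\mathrm{tr}MX}{\mathrm{tr}NX}\le\frac{1+\varepsilon}{m-1-\varepsilon}$, under the implicit assumption $\varepsilon<m-1$; for larger $\varepsilon$ the constraint is vacuous. Inputs with $\mathrm{tr}NX=0$ are covered by the convention that the ratio is then $+\infty$ unless $\mathrm{tr}MX=0$ as well.

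\textbf{Part (2).} The first condition is exactly part (1), so it remains to characterize $\Lambda_{M,N}^\dagger(Sep)\subset cone(Sep)$. Computing the adjoint gives
\begin{align*}
\Lambda_{M,N}^\dagger(Y)=\mathrm{tr}(\Psi_mY)\,M+\mathrm{tr}(\tau_mY)\,N.
\end{align*}
For a normalized separable $Y$, set $a=\mathrm{tr}\Psi_mY$. Then $a$ ranges exactly over $[0,\frac1m]$: the value $0$ is attained by $\ket{01}\bra{01}$ and $\frac1m$ by $\ket{00}\bra{00}$, and by convexity every intermediate value is attained. The coefficient of $N$ is $\mathrm{tr}\tau_mY=\frac{1-a}{m^2-1}$, which is affine in $a$.

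Since $cone(Sep)$ is a convex cone and the map $a\mapsto aM+\frac{1-a}{m^2-1}N$ is affine, the image $\Lambda_{M,N}^\dagger(Sep)$ is the cone generated by the two endpoint operators. So the inclusion holds iff both endpoints lie in $cone(Sep)$.
\begin{itemize}
\item At $a=0$ the endpoint is a positive multiple of $N$.
\item At $a=\frac1m$ the endpoint is a positive multiple of $\frac1mM+cN$, where $c$ is fixed by the normalization of $\tau_m$.
\end{itemize}
Matching $c$ with the paper's stated combination $\frac1mM+(1-\frac1m)N$ requires tracking the normalization convention for $\tau_m$ (the paper may be using $\mathbb{I}-\Psi_m$ unnormalized in the adjoint). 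Membership of a positive multiple in $cone(Sep)$ is the same as membership of the suitably scaled operator in $\overline{Sep}$, since $M+N\le\mathbb{I}$ bounds its trace. This yields the stated equivalence.

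The main obstacle is the upper bound on $R_G$ in part (1), i.e.\ the separability of isotropic states at fidelity $1/m$. I would handle it by citing the known separability threshold for isotropic states rather than reproving it. The endpoint coefficient in part (2) is a secondary bookkeeping issue.
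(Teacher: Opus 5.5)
Your part (1) is correct and is the paper's argument with the missing details supplied: the paper only asserts $R_G(p\Psi_m+(1-p)\tau_m)=\max\{0,mp-1\}$, whereas you prove both bounds. In part (2) your method is also the paper's: compute the adjoint, note its affine dependence on $a=\mathrm{tr}\Psi_mY\in[0,\frac1m]$, and check the two endpoints. Your closed interval with attained endpoints is what the ``only if'' direction needs; the paper writes $(0,\frac1m)$.

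The genuine gap is the step you deferred as bookkeeping, because it cannot be completed. With $\tau_m=\frac{\mathbb{I}-\Psi_m}{m^2-1}$ as in the statement, $\mathrm{tr}(\tau_mY)=\frac{1-a}{m^2-1}$, which equals $\frac{1}{m(m+1)}$ at $a=\frac1m$. So your segment argument actually proves that $\Lambda_{M,N}\in\mathcal{DNE}_\varepsilon$ iff the ratio condition holds and $N,\ M+\frac{1}{m+1}N\in cone(Sep)$. The combination $\frac1mM+(1-\frac1m)N$ arises only if you use the coefficient $\mathrm{tr}[(\mathbb{I}-\Psi_m)Y]=1-a$, i.e.\ drop the normalization of $\tau_m$. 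That is exactly the slip in the paper's own proof.

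Since $1-\frac1m\ge\frac{1}{m(m+1)}$, the corrected criterion implies the stated one, so the ``only if'' half of (2) holds. The ``if'' half is false. Take $m=2$, input space $\mathbb{C}^2\otimes\mathbb{C}^2$, $M=\frac14\Psi_2$, $N=\frac14\mathbb{I}$. Then:
\begin{itemize}
\item[(i)] $M+N\le\mathbb{I}$ and $\sup_{X\in Sep}\mathrm{tr}MX/\mathrm{tr}NX=\frac12<\frac{1+\varepsilon}{1-\varepsilon}$ for $0<\varepsilon<1$.
\item[(ii)] $N$ is a separable state.
\item[(iii)] $\frac12M+\frac12N=\frac18(\Psi_2+\mathbb{I})$ is proportional to an isotropic state of fidelity $\frac25\le\frac12$, hence separable.
\item[(iv)] Yet $\Lambda_{M,N}^\dagger(\ket{00}\bra{00})=\frac{1}{24}(3\Psi_2+\mathbb{I})$ is proportional to an isotropic state of fidelity $\frac47>\frac12$, which is entangled, so $\Lambda_{M,N}\notin\mathcal{DNE}_\varepsilon$.
\end{itemize}
Hence ``this yields the stated equivalence'' is not a proof. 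The correct conclusion has $M+\frac{1}{m+1}N$ (equivalently $(m+1)M+N$) in place of $\frac1mM+(1-\frac1m)N$. The later use of the ``if'' direction in Theorem~\ref{th2}, where $M,N$ are themselves separable, is unaffected.

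A smaller point: $M+N\le\mathbb{I}$ bounds $\mathrm{tr}N$ only by the input dimension, not by $1$. So the passage from $cone(Sep)$ to $\overline{Sep}$ is not justified by a trace bound; the condition has to be read as membership in $cone(Sep)\cup\{0\}$.
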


\begin{proof}
	\begin{itemize}
		\item[(1).]  For a substate $\gamma=p\Psi_m+q\tau_m$, then 
		\begin{align*}
			\frac{\gamma}{\mathrm{ tr}\gamma}\in Sep\Longleftrightarrow p\in [0,\frac{q}{m-1}],\\
			R_G(\frac{\gamma}{\mathrm{ tr}\gamma})=\max\{0,\frac{mp}{p+q}-1\}.
		\end{align*} Hence, for any separable state $X$, $\frac{\mathrm{ tr}MX}{\mathrm{ tr}NX}\le \frac{\epsilon+1}{m-1-\epsilon}.$
		we finish the proof.
		\item[(2).] As $\Lambda_{M,N}(\cdot)$ is $\mathcal{DNE}_{\varepsilon}$ if and only if $\Lambda_{M,N}\in \mathcal{NE}_{\varepsilon}$ and $\Lambda^{\dagger}_{M,N}(Sep)\subset cone(Sep)$. Besides, $\Lambda_{M,N}^{\dagger}(X)=\mathrm{tr}\Psi_mX\cdot M+\mathrm{ tr}\tau_mX\cdot N,$ when $X$ is a separable state, $\mathrm{ tr}X\Psi_m\in (0,\frac{1}{m}).$ Hence, $\Lambda^{\dagger}_{M,N}(Sep)\subset\overline{Sep}$ if and only if $N\in \overline{Sep}$ and $\frac{1}{m}M+(1-\frac{1}{m})N\in \overline{Sep}.$
	\end{itemize}
\end{proof}
\begin{lemma}\label{lfd}
Assume $\rho_{AB}$ is a bipartite state, then its probabilistic distillation exponent for the maximally entangled state $\ket{\psi_m}$ under the $\mathcal{F}_{\delta}$ instrument, $E_{d,err,p}^{(m),\mathcal{F}_{\delta}}(\rho_{AB})$, can be rewritten as
\begin{align*}
	E_{d,err,p}^{(m),\mathbb{O}_{\mathcal{F}}^{\delta}}(\rho_{AB})=&\liminf_{n\rightarrow\infty}\sup-\frac{1}{n}\log\epsilon_n\\
	\textit{s. t.}\hspace{4mm}&  \frac{\mathrm{tr}M\rho_{AB}^{\otimes n}}{\mathrm{ tr}(M+N)\rho_{AB}^{\otimes n}}\ge 1-\epsilon_n,
	\mathcal{E}_i(X)=\mathrm{ tr}MX\cdot\Psi_m+\mathrm{ tr}NX\cdot\tau_m,\\
	&M+N\le \mathbb{I},M,N\ge 0,\\
	&\mathcal{E}_i\in \mathcal{E},\mathcal{E}\in \mathbb{O}_{\mathcal{NE}}^{\delta},\mathcal{F}=\{\mathcal{NE},\mathcal{DNE}\}.
\end{align*}
Here $\tau_m=\frac{\mathbb{I}-\Psi_m}{m^2-1}$.
\end{lemma}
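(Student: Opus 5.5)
The plan is a twirling reduction, modelled on the proof of Lemma \ref{lf}, with the isotropic twirl $\mathcal{T}$ in the role of $\Lambda$. The inequality ``$\ge$'' is immediate. Any feasible pair $(M,N)$ with $M,N\ge0$, $M+N\le\mathbb{I}$ gives a subchannel $\mathcal{E}_1=\Lambda_{M,N}$. Completing it by $\mathcal{E}_2(X)=\mathrm{tr}[(\mathbb{I}-M-N)X]\,\tau_m$ yields an instrument; $\mathcal{E}_2$ is acceptable because $\tau_m$ is separable. The fidelity of the normalized output of $\mathcal{E}_1$ with $\Psi_m$ is
\[
\frac{\mathrm{tr}M\rho^{\otimes n}}{\mathrm{tr}(M+N)\rho^{\otimes n}},
\]
since $\mathrm{tr}\,\Psi_m\tau_m=0$. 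So the restricted optimization on the right-hand side is a sub-problem of the original one.

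For ``$\le$'', I start from an arbitrary instrument $\mathcal{E}=\{\mathcal{E}_i\}\in\mathbb{O}^{\delta}_{\mathcal{F}}$ and a branch $\mathcal{E}_i$ achieving fidelity $\ge1-\epsilon_n$. I replace it by the twirled instrument $\{\mathcal{T}\circ\mathcal{E}_i\}$, and there are three things to check.

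First, the success branch keeps the same fidelity. Because $\mathcal{T}$ is trace preserving and $\Psi_m$ is $U\otimes\overline U$-invariant, $\mathrm{tr}\,\Psi_m\mathcal{T}(Y)=\mathrm{tr}\,\Psi_m Y$ and $\mathrm{tr}\,\mathcal{T}(Y)=\mathrm{tr}\,Y$. The conditional fidelity is therefore unchanged; this is the computation displayed in Lemma \ref{lf}.

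Second, the twirled branch has the required form. By (\ref{sd2}), $\mathcal{T}\circ\mathcal{E}_i=\Lambda_{M,N}$ with $M=\mathcal{E}_i^{\dagger}(\Psi_m)$ and $N=\mathcal{E}_i^{\dagger}(\mathbb{I}-\Psi_m)$. Both are positive, and $M+N=\mathcal{E}_i^{\dagger}(\mathbb{I})\le\mathbb{I}$. The fidelity then reads $\mathrm{tr}M\rho^{\otimes n}/\mathrm{tr}(M+N)\rho^{\otimes n}$.

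Third, membership in $\mathbb{O}^{\delta}_{\mathcal{F}}$ is preserved. For $\sigma\in Sep$, $\mathcal{T}$ is an LOCC channel, and I would use monotonicity of the generalized robustness $R_G$ under LOCC channels. If $\frac{\mathcal{E}_i(\sigma)+s\tau}{1+s}$ is separable, applying $\mathcal{T}$ shows $\frac{\mathcal{T}\mathcal{E}_i(\sigma)+s\mathcal{T}(\tau)}{1+s}$ is separable. Hence $R_G(\mathcal{T}\mathcal{E}_i(\sigma)/\mathrm{tr})\le R_G(\mathcal{E}_i(\sigma)/\mathrm{tr})\le\delta$, the normalization being unchanged since $\mathcal{T}$ is trace preserving. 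For the $\mathcal{DNE}$ case I also need $(\mathcal{T}\circ\mathcal{E}_i)^{\dagger}(Sep)=\mathcal{E}_i^{\dagger}(\mathcal{T}^{\dagger}(Sep))\subset cone(Sep)$. This holds because $\mathcal{T}^{\dagger}=\mathcal{T}$, which is self-dual as an average of unitary conjugations with inverses, maps $Sep$ into $Sep$.

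Combining these, every achievable $\epsilon_n$ is also achieved by some $\Lambda_{M,N}$ inside an instrument of the same class, which gives ``$\le$''. Taking $-\frac1n\log$, the supremum, and the $\liminf$ then produces the stated formula on both sides.

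The main obstacle I expect is bookkeeping rather than depth. The lemma is phrased in terms of an instrument containing $\mathcal{E}_i$, so I must make sure the other twirled branches $\mathcal{T}\circ\mathcal{E}_j$ also satisfy the constraints; this follows by the same robustness argument. I also need to handle branches with $\mathrm{tr}\,\mathcal{E}_j(\sigma)=0$, where the normalized state is undefined and the constraint should be read as vacuous. Finally, I should note that the constraint in the lemma is intended as $\mathbb{O}^{\delta}_{\mathcal{F}}$ rather than only $\mathcal{NE}$. The concrete characterization of which $(M,N)$ are allowed is deferred to Lemma \ref{dm}.
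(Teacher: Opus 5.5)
Your proposal is correct and follows the paper's own route: twirl the successful branch by $\mathcal{T}$, observe that the conditional fidelity is unchanged, and read off $M=\mathcal{E}_i^{\dagger}(\Psi_m)$, $N=\mathcal{E}_i^{\dagger}(\mathbb{I}-\Psi_m)$ with $M+N\le\mathbb{I}$. You also supply the class-preservation argument that the paper only asserts, namely monotonicity of $R_G$ under $\mathcal{T}$ and, for $\mathcal{DNE}$, the fact that $\mathcal{T}^{\dagger}=\mathcal{T}$ maps $Sep$ into $Sep$.

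One small correction concerns the ``$\ge$'' step, which is in any case unnecessary: your completion $\mathcal{E}_2(X)=\mathrm{tr}[(\mathbb{I}-M-N)X]\,\tau_m$ is automatically admissible only in the $\mathcal{NE}$ case. For $\mathcal{DNE}$ it additionally needs $\mathcal{E}_2^{\dagger}(Sep)\subset cone(Sep)$, i.e.\ $\mathbb{I}-M-N$ separable, which need not hold. This does no harm, because the right-hand side already requires $\Lambda_{M,N}$ to be a branch of an admissible instrument, so ``$\ge$'' is just an inclusion of feasible sets.
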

\begin{proof}
As $\mathcal{T}(\cdot)=\int_U (U\otimes\overline{U})^{\dagger}(\cdot)(U\otimes\overline{U})\in \mathcal{F}$, when $\mathcal{E}_i(\cdot)\in \mathcal{E}$ and $\mathcal{E}\in\mathbb{O}_{\mathcal{F}}^{\delta},$ $\mathcal{T}\circ\mathcal{E}_i\in \mathcal{T}\circ\mathcal{E}$, $\mathcal{T}\circ\mathcal{E}\in\mathbb{O}_{\mathcal{F}}^{\delta}$.  then
	\begin{align*}
	F(\frac{\mathcal{E}_i(\rho_{AB}^{\otimes n})}{\mathrm{tr}(\mathcal{E}_i(\rho_{AB}^{\otimes n}))},\Psi_m)=&\bra{\psi_m}\frac{\mathcal{E}_i(\rho_{AB}^{\otimes n})}{\mathrm{ tr}\mathcal{E}_i(\rho_{AB}^{\otimes n})}\ket{\psi_m}\\
	=&\frac{\int_U\bra{\psi_m}(U\otimes \overline{U})^{\dagger}\mathcal{E}_i(\rho_{AB}^{\otimes n})(U\otimes\overline{U})\ket{\psi_m}dU}{\mathrm{ tr}\mathcal{E}_i(\rho_{AB}^{\otimes n})}\\
	=&\frac{\mathrm{ tr}\Psi_m(\mathrm{ tr}\mathcal{E}_i^{\dagger}(\Psi_m)\rho^{\otimes n}\cdot\Psi_m+\mathrm{ tr}\mathcal{E}_i^{\dagger}(\mathbb{I}-\Psi_m)\rho_{AB}^{\otimes n}\cdot\tau)}{\mathrm{ tr}\mathcal{E}_i(\rho_{AB}^{\otimes n})}\\
	=&\frac{\mathrm{ tr}M\rho_{AB}^{\otimes n}}{\mathrm{tr}(M+N)\rho_{AB}^{\otimes n}},
	\end{align*}
	Here $M=\mathcal{E}_i^{\dagger}(\Psi_m)$ and $N=\mathcal{E}_i^{\dagger}(\mathbb{I}-\Psi_m)$. As $\mathcal{E}_i$ is a subchannel, $M+N=\mathcal{E}_i(\mathbb{I})\le \mathbb{I}$.
	Hence, we finish the proof.
\end{proof}

\emph{Theorem \ref{t1}:} Assume $\rho_{AB}^{\otimes n}$ is a bipartite state on $\mathcal{H}_{AB}^{\otimes n}$. Let $m\ge2\in \mathbb{N}$, $\delta> 0$. Then the conditional error exponent of the entanglement distillation under $\mathbb{O}_{\mathcal{NE}}^{\delta}$ is 
\begin{align*}
	E_{d,err,p}^{(m),\mathbb{O}_{\mathcal{NE}}^{\delta}}(\rho^{\otimes n}_{AB})=\frac{1}{n}\hat{\beta}_{\frac{\delta+1}{m},Sep}(\rho_{AB}^{\otimes n}).
\end{align*}

Assume $\{m_n\ge2\}_n$ is a sequence of natural numbers with $\lim\limits_{n\rightarrow\infty}\frac{\log m_n}{n}=0,$ then
\begin{align*}
	E_{d,err,p}^{\infty,\mathbb{O}_{\mathcal{NE}}^{\delta}}(\rho_{AB})=\lim\limits_{n\rightarrow\infty}E_{d,err,p}^{(m_n),\mathbb{O}_{\mathcal{NE}}^{\delta}}(\rho^{\otimes n}_{AB})=\hat{D}_{\Omega,Sep}^{reg }(\rho).
\end{align*}
\begin{proof}
	Assume $\{\mathcal{E}_i\}_{i=1}^k$ is a feasible $\mathbb{O}_{\mathcal{NE}}^{\delta}$ instrument such that
	\begin{align}
		1-\epsilon_n\le& F(\frac{\mathcal{E}_i(\rho^{\otimes n})}{\mathrm{tr}(\mathcal{E}_i(\rho^{\otimes n}))},\Psi_m)\nonumber\\
		=&\bra{\psi_m}\frac{\mathcal{E}_i(\rho^{\otimes n})}{\mathrm{tr}(\mathcal{E}_i(\rho^{\otimes n}))}\ket{\psi_m}\nonumber\\
		=&\frac{\mathrm{tr}[\rho_{AB}^{\otimes n}\mathcal{E}_i^{\dagger}(\Psi_m)]}{p_n},\label{tf0}
	\end{align}
	here $\mathcal{E}_i^{\dagger}(\cdot)$ satisfies $\mathrm{tr}(\mathcal{E}_i^{\dagger}(A)B)=\mathrm{tr}(A\mathcal{E}_i(B))$, $p_n=\mathrm{tr}(\mathcal{E}_i^{\dagger}(\mathbb{I})\rho_{AB}^{\otimes n})$.
	
	Let $M^{(n)}_2=\mathcal{E}_i^{\dagger}(\Psi_m), M^{(n)}_1=\mathcal{E}^{\dagger}_i(\mathbb{I}-\Psi_m)$, $M^{(n)}_0=\mathbb{I}-M^{(n)}_1-M_2^{(n)}$. As $\mathcal{E}_i$ is completely positive and trace nonincreasing and $\mathbb{I}-\Psi_m\ge 0$, $M^{(n)}_1,M^{(n)}_2\ge 0$. As $\sum_{i=1}^k\mathcal{E}_i$ is trace preserving, then $\sum_i\mathcal{E}_i^{\dagger}(\mathbb{I})=\mathbb{I}$, 
	\begin{align*}
		M^{(n)}_0=&\mathbb{I}-M^{(n)}_1-M^{(n)}_2\\
		=&\sum_{\{1,2,\cdots,k\}-i}\mathcal{E}_l^{\dagger}(\mathbb{I})\ge 0,
	\end{align*} 
	Hence, $\{M^{(n)}_0,M^{(n)}_1,M^{(n)}_2\}$ is a POVM. Next based on (\ref{tf0}), we have
	\begin{align}
		\frac{\mathrm{tr}M_1^{(n)}\rho^{\otimes n}}{\mathrm{ tr}(M_1^{(n)}+M_0^{(n)})\rho^{\otimes n}}\le \epsilon_n.
	\end{align}
	
	Assume $\sigma_n$ is an arbitrary separable state in $\mathcal{H}_{AB}^{\otimes n}$, then
	\begin{align*}
		\frac{\mathrm{tr}M^{(n)}_2\sigma_n}{\mathrm{tr}(M^{(n)}_1+M^{(n)}_2)\sigma_n}=&\frac{\mathrm{tr}(\mathcal{E}_i(\sigma_n)\Psi_m)}{\mathrm{tr}\mathcal{E}_i(\sigma_n)}\\
		\le& \frac{\delta+1}{m},
	\end{align*}
	the last inequality is due to Lemma \ref{dm}.  Thus, based on the definition of postselected hypothesis testing, we have
	\begin{align*}
		\hat{\beta}_{\frac{\epsilon+1}{m}, Sep}(\rho_{AB}^{\otimes n})\ge& \log\frac{\mathrm{tr}(M^{(n)}_1+M^{(n)}_2)\rho^{\otimes n}}{\mathrm{tr}(M^{(n)}_1\rho^{\otimes n})}\\
		\ge&-\log\epsilon_n.
	\end{align*}
	Then multiplying two sides $\frac{1}{n}$, and when taking the supermum over all $\{\mathcal{E}_i\}_{i=1}^k\in \mathbb{O}_{\mathcal{NE}}^{\delta},$ we have
	\begin{align}
		E^{(m)}_{d,err,p}(\rho_{AB} )\le&	\frac{1}{n}\min_{\sigma_n\in Sep_{A_n:B_n}} \hat{\beta}_{\frac{\delta+1}{m}}(\rho_{AB}^{\otimes n},\sigma_n)\nonumber\\
		=&\frac{1}{n} \hat{\beta}_{\frac{\delta+1}{m},Sep}(\rho_{AB}^{\otimes n}).\label{a1}
	\end{align}
	Next we show the other direction. Assume $\{(M^{(n)}_1,M^{(n)}_2,M^{(n)}_0)|\sum_{i=0}^2M_i^{(n)}=\mathbb{I},M^{(n)}_i\ge0,i=1,2,0\}$ is a POVM with $\frac{\mathrm{tr}M^{(n)}_2\sigma_n}{\mathrm{ tr}M_1^{(n)}\sigma_n}\le \frac{\delta+1}{m-1-\delta}$ for any $\sigma_n\in Sep(A_n:B_n).$ Next let
	\begin{align*}
		\mathcal{E}_1(\cdot)=\mathrm{tr}(M^{(n)}_2(\cdot))\Psi_m+\mathrm{tr}M^{(n)}_1(\cdot)\frac{\mathbb{I}-\Psi_m}{m^2-1},\\
		\mathcal{E}_2(\cdot)=\mathrm{tr}M_0^{(n)}(\cdot)\frac{\mathbb{I}-\Psi_m}{m^2-1}.
	\end{align*}
	For any $\sigma_n\in Sep_{A_n:B_n}$, as $\mathbb{I}-\Psi_m$ is separable, $\mathcal{E}_2$ is $\mathcal{NE}$. Based on Lemma \ref{dm}, and  $\frac{\mathrm{tr}M^{(n)}_2\sigma_n}{\mathrm{ tr}M_1^{(n)}\sigma_n}\le \frac{\delta+1}{m-\delta-1}$, $\mathcal{E}_1(\cdot)$ is in $\mathcal{NE}_{\delta}.$ As $\mathrm{tr}(\mathcal{E}_1(\cdot)+\mathcal{E}_2(\cdot))=\mathrm{tr}(\cdot)$, and $\mathcal{E}_1$ and $\mathcal{E}_2$ is completely positive, $\{\mathcal{E}_1,\mathcal{E}_2\}$ is an $\mathbb{O}_{\mathcal{NE}}^{\delta}$ instrument. Then we have
	\begin{align*}
		E_{d,err,p}^{(m)}(\rho_{AB} )\ge& -\frac{1}{n}\log\frac{\mathrm{tr}M_1^{(n)}\rho^{\otimes n}}{\mathrm{tr}(M_1^{(n)}+M_2^{(n)})\rho^{\otimes n}}
	\end{align*}
	by optimising over all measurements with the property, we have
	\begin{align}
		E_{d,err,p}^{(m)}(\rho_{AB} )\ge& \frac{1}{n}\hat{\beta}_{\frac{\delta+1}{m},Sep}(\rho^{\otimes n}).\label{a2}
	\end{align}
	
	Based on (\ref{a1}) and (\ref{a2}), we have
	\begin{align}
		E_{d,err,p}^{(m)}(\rho )= \frac{1}{n}\hat{\beta}_{\frac{\delta+1}{m},Sep}(\rho^{\otimes n})\label{a3}
	\end{align}

	For for the probabilistic entanglement distillation under zero rate asymptotic scenario, by combing Corollary \ref{c1} and (\ref{a3}), let $n\rightarrow\infty,$
	\begin{align*}
		E^{\infty,\mathbb{O}_{\mathcal{NE}}^{\delta}}_{d,err,p}(\rho_{AB})=&D_{\Omega,Sep}^{reg}(\rho).
	\end{align*}

\end{proof}
 
\emph{Theorem \ref{th2}} Assume $\rho_{AB}^{\otimes n}$ is a bipartite state on $\mathcal{H}_{AB}^{\otimes n}$. Let $\delta\ge 0$ and $\{m_n\ge 2\}_n$ be a sequence of natural numbers. Then the error exponent of the entanglement distillation is \begin{align*}
		\hat{\beta}^{\mathbb{SEP}}_{\frac{\delta+2}{m_n+1},Sep}(\rho_{AB}^{\otimes n})- \log\frac{m_n+1}{m_n}\ge	nE_{d,err,p}^{(m_n),\mathbb{O}_{\mathcal{DNE}}^{\delta}}(\rho_{AB}^{\otimes n})\ge& \hat{\beta}_{\frac{\delta+1}{m_n},Sep}^{\mathbb{SEP}}(\rho_{AB}^{\otimes n}).
	\end{align*}
Furthermore, if $\{m_n\ge 2\}_n$ is any sequence with $\lim\limits_{n\rightarrow\infty}\frac{\log m_n}{n}=0,$ 
\begin{align*}
	 \liminf\limits_{n\rightarrow\infty} E^{\infty,\mathbb{O}_{\mathcal{DNE}}^{\delta}}_{d,err,p}(\rho^{\otimes m_n}_{AB})=\hat{D}_{\Omega,Sep}^{reg,\mathbb{SEP}}(\rho_{AB}).
\end{align*}
\begin{proof}
	Based on Lemma \ref{lfd}, we only need to consider the map of the form $\Lambda(X)=\mathrm{ tr}MX\cdot\Psi_m+\mathrm{ tr}NX\tau_m,$ here $M+N\le \mathbb{I},$ $M,N\ge0.$ Based on Lemma \ref{dm}, 
	$$\Lambda(X)\in \mathcal{DNE}_{\delta}\Longleftrightarrow\sup\limits_{X\in Sep}\frac{\mathrm{ tr}MX}{\mathrm{ tr}NX}\le\frac{\delta+1}{m-\delta-1}, N,\frac{1}{m}M+(1-\frac{1}{m})N\in cone(Sep).$$
	
	 The condition that the probabilistic entanglement distillation subchannel turns $\rho$ to $\Psi_m$ up to error $\epsilon$ probabilistically
	\begin{align*}
		\frac{\mathrm{ tr}M\rho_{AB}^{\otimes n}}{\mathrm{ tr}(M+N)\rho_{AB}^{\otimes n}}\ge 1-\epsilon,
	\end{align*}
	
	Next assume $(M,N,\mathbb{I}-M-N)$ is a feasible measurement of $\hat{\beta}_{\frac{\delta+1}{m-\delta-1},Sep}^{\mathbb{SEP}}(\cdot)$, as $M,N,\mathbb{I}-M-N\in cone(Sep)$, $N,\frac{1}{m}M+(1-\frac{1}{m})N\in cone(Sep)$, and $\sup_{\sigma\in Sep}\frac{\mathrm{tr}M\sigma}{\mathrm{ tr}N\sigma}\le\frac{\delta+1}{m_n-\delta-1}.$ Furthermore, let $\mathcal{E}_2(X)=\mathrm{ tr}(\mathbb{I}-M-N)X\cdot\tau_m$, then 
	\begin{align}
		E_{d,err,p}^{(m),\mathbb{O}_{\mathcal{DNE}}^{\delta}}(\rho^{\otimes n})\ge &-\frac{1}{n}\log\frac{\mathrm{ tr}N\rho_{AB}^{\otimes n}}{\mathrm{ tr}(M+N)\rho_{AB}^{\otimes n}}\nonumber\\
		\ge& \frac{1}{n} \hat{\beta}_{\frac{\delta+1}{m_n},Sep}^{\mathbb{SEP}}(\rho_{AB}^{\otimes n}).\label{th2f1}
	\end{align}
	The last inequality is due to the definition of $\hat{\beta}^{\mathbb{SEP}}_{\epsilon,{Sep}}(\rho_{AB}).$
	
	Next assume $\mathcal{E}=\{(\mathcal{E}_1,\mathcal{E}_2)|\mathcal{E}_1,\mathcal{E}_2\in \mathcal{DNE}_{\delta},\mathrm{tr}[\mathcal{E}_1(\cdot)+\mathcal{E}_2(\cdot)]=\mathrm{ tr}(\cdot)\}$ is a feasible one-shot distillation protocol such that 
	\begin{align*}
		F(\frac{\mathcal{E}_1(\rho^{\otimes n})}{\mathrm{ tr}\mathcal{E}_1(\rho^{\otimes n})},\Psi_m)\ge 1-\epsilon.
	\end{align*}
	Here we always assume $\mathcal{E}_1(X)=\mathrm{ tr}MX\cdot\Psi_m+\mathrm{ tr}NX\cdot\tau_m$, $M+N\le \mathbb{I},$ $M,N\ge 0.$ As $\mathcal{E}_1\in\mathcal{DNE}_{\delta}$, $\sup_{X\in Sep}\frac{\mathrm{ tr}MX}{\mathrm{ tr}NX}\le \frac{\delta+1}{m-\delta-1}$ and $N,M+(m-1)N\in cone(Sep)$. Then let $M_2=\mathcal{E}_1^{\dagger}(\frac{1}{m+1}(\mathbb{I}+m\Psi_m)),M_1=\mathcal{E}_1^{\dagger}(\frac{m}{m+1}(\mathbb{I}-\Psi_m)),$ for any separable state $\sigma$,
	\begin{align*}
\frac{\mathrm{ tr}M_2\sigma}{\mathrm{ tr}(M_1+M_2)\sigma}=&\frac{\mathrm{ tr}(\frac{1}{m+1}(\mathbb{I}+m\Psi_m))\mathcal{E}_1(\sigma)}{\mathrm{tr}(\mathcal{E}_1(\sigma))}\\
\le&\frac{\delta+2}{m+1},
	\end{align*}
	here the last inequality is due to the definition of $\mathcal{DNE}_{\delta}$ and Lemma \ref{dm},
	then 
	\begin{align}
	\hat{\beta}^{\mathbb{SEP}}_{\frac{\delta+2}{m+1},Sep}(\rho_{AB}^{\otimes n})\ge&\lim_{n\rightarrow\infty}\frac{1}{n}\log \frac{\mathrm{ tr}(M_1+M_2)\rho_{AB}^{\otimes n}}{\mathrm{ tr}M_1\rho_{AB}^{\otimes n}}\nonumber\\
		=&\lim_{n\rightarrow\infty}\frac{1}{n}\log\frac{1}{\mathrm{ tr}\frac{m}{m+1}(\mathbb{I}-\Psi_m)\frac{\mathcal{E}_1(\rho_{AB}^{\otimes n})}{\mathrm{tr}\mathcal{E}_1(\rho^{\otimes n})}}\nonumber\\
		\ge&\lim_{n\rightarrow\infty}\frac{1}{n}\log\frac{1}{\epsilon}+\frac{1}{n}\log\frac{m+1}{m}\nonumber\\=&E_{d,err,p}^{(m),\mathbb{O}_{\mathcal{DNE}}^{\delta}}(\rho_{AB}) \label{th2f2}
	\end{align}
	Here the first inequality is due to the definition of $\hat{\beta}^{\mathbb{SEP}}_{\epsilon,Sep}(\rho)$.
	
	At last, for the probabilistic entanglement distillation under zero rate asymptotic scenario, based on (\ref{th2f1}) and (\ref{th2f2}), we have
	\begin{align}
		\lim\limits_{n\rightarrow\infty}\frac{1}{n}\hat{\beta}_{\frac{\delta+1}{m_n},Sep}^{\mathbb{SEP}}(\rho_{AB}^{\otimes n})\le E_{d,err,p}^{(m_n),\mathbb{O}_{\mathcal{DNE}}^{\delta}}(\rho_{AB})\label{th2f3}\\
\lim_{n\rightarrow\infty}\frac{1}{n}	\hat{\beta}^{\mathbb{SEP}}_{\frac{\delta+2}{m+1},Sep}(\rho^{\otimes n})\ge E_{d,err,p}^{(m_n),\mathbb{O}_{\mathcal{DNE}}^{\delta}}(\rho_{AB})\label{th2f4},
	\end{align}
	Hence, combing $(\ref{th2f3})$, (\ref{th2f4}) and Corollary \ref{c2}, we have
	\begin{align*}
		\hat{D}_{\Omega,Sep}^{reg,\mathbb{SEP}}(\rho)=E_{d,err,p}^{\infty,\mathbb{O}_{\mathcal{DNE}}^{\delta}}(\rho_{AB}).
	\end{align*}
\end{proof}

\subsection{Analytical Formula for Specific Examples} \label{aa1}
\emph{Example \ref{e1}:} Assume $\mathcal{H}_{AB}$ is a bipartite system with $dim(\mathcal{H}_A)=dim(\mathcal{H}_B)=d$, and $\rho_{AB}$ is the Werner state,
\begin{align*}
	\rho_{p}=p\cdot\frac{2P_s}{d(d+1)}+(1-p)\cdot\frac{2P_{as}}{d(d-1)},
\end{align*}
here $P_s=\frac{I+F}{2}$, $P_{as}=\frac{I-F}{2}$, $F$ is the swap operator, $F=\sum_{ij}\ket{ij}\bra{ji}$. 
\begin{itemize}
	\item[(1)] Then for each $n\in\mathbb{N}$,
	\begin{equation*}
		\frac{1}{n}D_{\Omega}(\rho_p^{\otimes n})= \begin{cases} 
			\log\frac{1-p}{p}\hspace{3mm}p<\frac{1}{2} \\
			0\hspace{12mm}p\ge\frac{1}{2}
		\end{cases}  .
	\end{equation*}
	\item[(2)] $$\min\limits_{\sigma\in Sep_{A:B}}D(\sigma||\rho_p)= \begin{cases} 
		-\frac{1}{2}\log[4p(1-p)]\hspace{3mm}p<\frac{1}{2} \\
		0\hspace{12mm}p\ge\frac{1}{2}
	\end{cases}  .$$
	\item[(3)] 	\begin{align*}
		\hat{D}_{Sep}^{\mathbb{SEP}}(\rho_p)=\begin{cases} 
			\log\frac{d+1-2p}{2dp}\hspace{3mm}0<p<\frac{1}{2} \\
			0\hspace{20mm}p\ge\frac{1}{2}
		\end{cases}  .
	\end{align*}
\end{itemize}
	\begin{proof}
		\begin{itemize}
			\item[(1)] Due to the faithfulness of $D_{\Omega}(\cdot)$ in Lemma \ref{l1}, and when $p\ge \frac{1}{2}$, $\rho_{AB}$ is separable \cite{horodecki2009quantum}, we only need to consider the case when $p<\frac{1}{2}$.
	
	As $p_{\frac{1}{2}}$ is separable,
	\begin{align*}
		\frac{1}{n}D_{\Omega,Sep}(\rho_{p}^{\otimes n})\le& \frac{1}{n}D_{\Omega}(\rho_{p}^{\otimes n},\rho_{\frac{1}{2}}^{\otimes n})\\
		=&D_{\Omega}(\rho_p,\rho_{\frac{1}{2}}),
	\end{align*} 
	As $P_s$ and $P_{as}$ are two mutually orthogonal projectors, then 
	\begin{align*}
		D_{\Omega}(\rho_p,\rho_{\frac{1}{2}})=&\hspace{2mm}\log\inf\hspace{4mm} \mu\\
		\textit{s. t.}\hspace{3mm}&(1,1)\le(2p\lambda,2(1-p)\lambda)\le(\mu,\mu),\\
		&\lambda,\mu\ge 0.
	\end{align*}
	From computation, we have 
	\begin{align}
		D_{\Omega,Sep}(\rho_p)\le	D_{\Omega}(\rho_p,\rho_{\frac{1}{2}})=\log\frac{1-p}{p}. \label{lf1}\\
		\frac{1}{n}D_{\Omega,Sep}(\rho_p^{\otimes n})\le		\frac{1}{n}D_{\Omega}(\rho^{\otimes n}_p,\rho^{\otimes n}_{\frac{1}{2}})=\log\frac{1-p}{p}.\label{lf2}
	\end{align}Here $n$ is an arbitrary natural number.
	
 Next we show the other direction. 	The dual problem of $\Omega_{Sep}(\cdot)$ is
	\begin{align}
		\Omega_{Sep}(\rho)=\hspace{2mm}&\hspace{2mm}\sup\mathrm{tr}(A\rho)\label{dsep}\\
		\textit{s. t.}\hspace{4mm}&\mathrm{tr}B\rho=1,\nonumber\\
		&\mathrm{tr}(B-A)\sigma\ge 0\hspace{3mm}\forall \sigma\in Sep,\nonumber\\
		&A,B\ge 0.\nonumber
	\end{align} 
	Let
	\begin{align*}
		A=\frac{1}{p}P_{as},\hspace{3mm}B=\frac{1}{p}P_{s}.
	\end{align*}
	Due to computation, $\mathrm{tr}A\rho=\frac{1-p}{p},$ $\mathrm{tr}B\rho=1$. Next we show the last condition, when $\sigma$ is any separable state,
	\begin{align*}
		\mathrm{tr}(B-A)\sigma=&\frac{1}{p}\mathrm{tr}\frac{I+F-I+F}{2}\sigma\\\
		=&		\frac{1}{p}\mathrm{ tr}F\sigma\ge 0.
	\end{align*}
	Hence, $A$ and $B$ are feasible for the dual program of $\Omega_{Sep}(\cdot)$, then
	\begin{align}
		D_{\Omega,{Sep}}(\rho)\ge \log\frac{1-p}{p}. \label{rf1}
	\end{align}
	Combing $(\ref{lf1})$ and (\ref{rf2}), we have $$D_{\Omega,Sep}(\rho_p)=\log\frac{1-p}{p}.$$ For the state $\rho_p^{\otimes n}$, let 
	\begin{align*}
		A^{(n)}=A^{\otimes n},B^{(n)}=B^{\otimes n},
	\end{align*}
	Due to the computation, $\mathrm{tr}B^{(n)}\rho_p^{\otimes n}=1,$ for any separable state $\sigma_n\in Sep_{A_n:B_n}$,
	\begin{align*}
		&\mathrm{tr}(B^{(n)}-A^{(n)})\sigma_n\\
		&=\mathrm{tr}(\Pi_1+\Pi_3+\cdots+\Pi_{2\lceil\frac{n}{2}\rceil-1})\sigma_n,
	\end{align*}
	here $\Pi_m$ is a sum of all product opertors with $m$ $F$ and $n-m$ $I$. As each $\Pi_m$ satisfies $\mathrm{tr}\Pi_m\sigma\ge 0$, the above formula is nonnegative. Hence $A^{(n)}$ and $B^{(n)}$ are the feasible for $\rho_p^{\otimes n}$ in terms of $(\ref{dsep})$ for $\Omega_{Sep}$, then
	\begin{align}
		\frac{1}{n}	D_{\Omega,Sep}(\rho^{\otimes n})\ge \log\frac{1-p}{p},\label{rf2}
	\end{align}
	Combing (\ref{lf2}) and (\ref{rf2}), we have $$\frac{1}{n}D_{\Omega,Sep}(\rho_p)=\log\frac{1-p}{p}.$$
	
	\item[(2)]	Due to that quantum relative entropy satisfies the data-processing property, we have when $p\in [0,\frac{1}{2}),$
	\begin{align}
		\min_{\sigma\in Sep}D(\sigma||\rho_p)=\min_{q\in [\frac{1}{2},1]}[q\log\frac{q}{p}+(1-q)\log\frac{1-q}{1-p}],
	\end{align}
	Let $t(q)=q\log\frac{q}{p}+(1-q)\log\frac{1-q}{1-p}$, $t^{'}(q)=\log\frac{q}{1-q}-\log\frac{p}{1-p}$. As $r(q)=\frac{q}{1-q}$ is an increasing function, $$D(Sep||\rho_p)=D(\rho_{\frac{1}{2}}||\rho_p)=-\frac{1}{2}\log[4p(1-p)].$$
	
	At last, 
	\begin{align*}
		-\frac{1}{2}\log[4p(1-p)]-\log\frac{1-p}{p}=\frac{1}{2}\log\frac{1}{1-p}\cdot\frac{p}{(2(1-p))^2}\le 0.
	\end{align*}
	Hence, $$D_{\Omega}(\rho_p)\ge \min_{\sigma\in Sep}D(\sigma||\rho_p).$$
	\item[(3)]  Based on Corollary \ref{c2}, 
	\begin{align}
		\Omega_{Sep}^{\mathbb{SEP}}(\rho_p)=&\inf \gamma\label{ddsep}\\
		\textit{s. t.}\hspace{3mm}&\rho_p\preceq_{\mathbb{SEP}^{*}}t{\sigma}\preceq_{\mathbb{SEP}^{*}}\gamma\rho_p,\nonumber\\
		&	{\sigma}\in Sep, t>0.\nonumber
	\end{align}
	
	As $\rho_p$ satisfies $(U\otimes{U})^{\dagger} \rho_p(U\otimes{U})=\rho_p$, and $\int_UdU(U\otimes{U})^{\dagger} {\sigma}(U\otimes {U})=a \frac{2P_s}{d(d+1)}+(1-a)\frac{2P_{as}}{d(d-1)},$ as ${\sigma}\ge 0$, $a\ge \frac{ 1}{2}.$ As for any Werner state $W=m\frac{2P_s}{d(d+1)}+n\frac{2P_{as}}{d(d-1)}$, 
	\begin{align*}
		W\succeq_{\mathbb{SEP}^{*}}0\Longleftrightarrow \mathrm{ tr}W(\ket{xy}\bra{xy})=\frac{m(1+|\bra{x}y\rangle|^2)}{d(d+1)}+\frac{n(1-|\bra{x}y\rangle|^2}{d(d-1)})\ge0\Longleftrightarrow m\ge 0, (d-1)m+(d+1)n\ge 0,
	\end{align*}
	
	Hence, (\ref{ddsep}) turns into the following,
	\begin{align*}
		\Omega_{Sep}^{\mathbb{SEP}}(\rho_p)=&\inf\gamma\\
		\textit{s. t.}\hspace{3mm}& 	ta-p\ge 0, (d-1)(ta-p)+(d+1)(t(1-a)-1+p)\ge 0,\\
		&	\gamma p-ta\ge 0, (d-1)(\gamma p-ta)+(d+1)(\gamma-\gamma p-t(1-a))\ge 0, t>0.
	\end{align*}
	As when $p\ge\frac{1}{2},$ $\rho_p$ is separable, we only consider $0<p<\frac{1}{2}.$ Due to the computation, the optimal $\sigma\in Sep$ in (\ref{ddsep}) is $\rho_{\frac{1}{2}},$ we have
	\begin{align*}
		\hat{D}_{Sep}^{\mathbb{SEP}}(\rho_p)=\begin{cases} 
			\log\frac{d+1-2p}{2dp}\hspace{3mm}0<p<\frac{1}{2} \\
			0\hspace{20mm}p\ge\frac{1}{2}
		\end{cases}  .
	\end{align*}
	At last, we present a lower bound of  $\hat{D}_{Sep}^{\mathbb{SEP}}(\rho^{\otimes n}_p)$ when $n\in \mathbb{N}$ by presenting a feasible solution $(A_n,B_n)$ for $\rho_p^{\otimes n}$ in terms of the dual program for $\hat{D}_{Sep}^{\mathbb{SEP}}(\rho^{\otimes n}_p)$ in terms of $(\ref{dmhp})$.
	
	Let $B=\frac{1}{p}P_s,$ and $A(m,n)=mP_s+\frac{n(d+1)P_{as}}{d-1}$ with $n\le m$ and $m+\frac{d+1}{d-1}n\le \frac{1}{p}$, we have $A,B\in \overline{Sep}$ and $B-A\in Sep^{*}.$ Hence,
	\begin{align*}
		\hat{D}_{Sep}^{\mathbb{SEP}}(\rho_p)\ge& \log\mathrm{ tr}A\rho_p
		= \log(\frac{d+1-2p}{2dp}).
	\end{align*}
	Here $A=A(\frac{d-1}{2dp},\frac{d-1}{2dp})$,
	let $A^{(n)}=A^{\otimes n}$ and $B^{(n)}=B^{\otimes n}$, then $A^{(n)},B^{(n)}\in \overline{Sep}$. Assume $\sigma^{(n)}$ is any state in $Sep_{A^{\otimes n}:B^{\otimes n}}$,
	\begin{align*}
		\mathrm{ tr}(B^{(n)}-A^{(n)})\sigma^{(n)}=\sum_{k=1}^n\mathrm{tr}(A^{k-1}\otimes (B-A)\otimes B^{\otimes (n-k)})\sigma^{(n)}\ge 0.
	\end{align*}
	At last, we have $\frac{1}{n}D_{\Omega,Sep}^{\mathbb{SEP}}(\rho^{\otimes n}_p)\ge \log\frac{d+1-2p}{2dp}.$
	\end{itemize}

\end{proof}

\emph{Example \ref{e2}} Assume $\rho$ is a two-qubit state with 
\begin{align*}
	\rho=p_1\Psi^{+}+p_2\Psi^{-}+p_3\Phi^{+}+p_4\Phi^{-},\\
	\Psi^{+}=\frac{1}{2}(\ket{00}+\ket{11})(\bra{00}+\bra{11}),\\
	\Psi^{-}=\frac{1}{2}(\ket{00}-\ket{11})(\bra{00}-\bra{11}),\\
	\Phi^{+}=\frac{1}{2}(\ket{01}+\ket{10})(\bra{01}+\bra{10}),\\
	\Phi^{-}=\frac{1}{2}(\ket{01}-\ket{10})(\bra{01}-\bra{10}),
\end{align*}
where $p_i\ge 0$ and $\sum_i p_i=1.$ Then for each $n\in \mathbb{N}$,
Then for each $n\in \mathbb{N}$,
\begin{align*}
	E_{d,err,p}(\rho)=	  \begin{cases} 
		\log\frac{p_{max}}{1-p_{max}}\hspace{3mm}p_{max}\ge\frac{1}{2} \\
		0\hspace{18mm}p_{max}\le\frac{1}{2}
	\end{cases}.
\end{align*}
Here $p_{max}=\max\{p_1,p_2,p_3,p_4\}$. 
Furthermore,
\begin{align*}
	\min_{\sigma\in Sep}D(\sigma||\rho)=\begin{cases} 
		-\frac{\log[4p_{max}(1-p_{max})]}{2}\hspace{3mm}p\ge \frac{1}{2} \\
		0\hspace{30mm}p<\frac{1}{2}
	\end{cases} .
\end{align*}

\begin{proof}
		Due to the faithfulness of $D_{\Omega}(\cdot)$, and when $p_{max}\le \frac{1}{2}$, $\rho_{AB}$ is separable \cite{peres}. When $p_{max}>\frac{1}{2}$,
$$
\frac{1}{n}D_{\Omega,\mathrm{Sep}}(\rho^{\otimes n})\le \frac{1}{n}D_{\Omega}{(\rho^{\otimes n},\rho^{\otimes n}_{\frac{1}{2}})} = D_{\Omega}\big(\rho,\rho_{\frac{1}{2}}\big)
$$
	As $\Psi^{+}$, $\Psi^{-},\Phi^{+}$ and $\Phi^{-}$ are mutually orthogonal projectors, then
\begin{align*}
	D_{\Omega}(\rho,\rho_{\frac{1}{2}})&=\log\inf \mu\\
	\textit{s. t.} \hspace{3mm}& (p_1,p_2,p_3,p_4)\le \lambda(q_1,q_2,q_3,q_4)\le \mu (p_1,p_2,p_3,p_4),\\
	&\lambda,\mu\ge 0.
\end{align*} 
Here $q_i$ are the coefficients of $\rho_{\frac{1}{2}}$ and $\max_i q_i\le \frac{1}{2}$. Through computation, $D_{\Omega,Sep}(\rho)\le D_{\Omega}(\rho,\rho_{\frac{1}{2}})=\log\frac{p_{\max}}{1-p_{\max}}$. Based on the property of $D_{\Omega}(\cdot,\cdot)$, we have 
\begin{align}
	\frac{1}{n}D_{\Omega,Sep}(\rho^{\otimes n})\le \log\frac{p_{\max}}{1-p_{\max}}. \label{bell1}
\end{align}

Next we show the other direction. The dual problem of $\Omega_{Sep}(\cdot)$ is
\begin{align}
	\Omega_{Sep}(\rho)=&\hspace{2mm}\sup\mathrm{tr}(A\rho)\\
	\textit{s. t.}\hspace{4mm}&\mathrm{tr}B\rho=1,\nonumber\\
	&\mathrm{tr}(B-A)\sigma\ge 0\hspace{3mm}\forall \sigma\in Sep,\nonumber\\
	&A,B\ge 0.\nonumber
\end{align} 
Let
\begin{align*}
	A=\frac{1}{1-p_{max}}\Pi_k,\hspace{3mm}B=\frac{1}{1-p_{max}}(I-\Pi_k),
\end{align*}
here $\Pi_k$ is the projector of Bell state with the largest weight. As $\mathrm{tr}B\rho=1,$ for any separable state $\sigma,$
\begin{align*}
	\mathrm{ tr}(B-A)\sigma=&\frac{1}{1-p_{max}}\mathrm{tr}[(I-2\Pi_k)\sigma]\\
	=&\frac{1-2\mathrm{ tr}\Pi_k\sigma}{1-p_{max}}\ge 0,
\end{align*}
the last inequality is due to that $\mathrm{tr}\Pi_k\sigma\le \frac{1}{2}$, $\forall \sigma\in Sep.$ Hence,
\begin{align*}
	D_{\Omega,Sep}(\rho)\ge \frac{p_{max}}{1-p_{max}}.
\end{align*}

For the state $\rho^{\otimes n}$, let
\begin{align*}
	A^{(n)}=A^{\otimes n},\hspace{3mm} B^{(n)}=B^{\otimes n}.
\end{align*}

Through computation, $\mathrm{tr}B^{(n)}\rho^{\otimes n}=\mathrm{tr}^nB\rho=1.$ For any separable state $\sigma,$
\begin{align*}
	\mathrm{tr}(B^{(n)}-A^{(n)})\sigma=&\frac{1}{(1-p)^n}\mathrm{tr}[(I-\Pi_k)^{\otimes n}-\Pi_k^{\otimes n}]\sigma\\
	=&\frac{1}{(1-p)^n}\mathrm{tr}[(I-\Pi^{T_B}_k)^{\otimes n}-(\Pi^{T_B}_k)^{\otimes n}]\sigma^{T_B}\\
	=&\frac{1}{(1-p)^n}\mathrm{tr}[(I-\frac{F}{2})^{\otimes n}-(\frac{F}{2})^{\otimes n}]\sigma^{T_B}\ge 0,
\end{align*}
the last inequality is due to that $(I-\frac{F}{2})^{\otimes n}-(\frac{F}{2})^{\otimes n}\ge 0$ and $\sigma^{T_B}\in Sep$. Hence, $$D_{\Omega,Sep}(\rho^{\otimes n})\ge\log\mathrm{tr}A^{(n)}\rho^{\otimes n}=\log( \frac{p_{max}}{1-p_{max}})^n.$$
Combing (\ref{bell1}), we have
\begin{align*}
	\frac{1}{n}D_{\Omega,Sep}(\rho^{\otimes n})=\log\frac{p_{max}}{1-p_{max}}.
\end{align*}

Due to that quantum relative entropy satisfies the data-processing property, we have 
\begin{align}
	\min_{\sigma\in Sep}D(\sigma||\rho)=\min_{q_1\ge \frac{1}{2},\sum_{i=1}^4q_i=1}[q_1\log\frac{q_1}{p_1}+q_2\log\frac{q_2}{p_2}+q_3\log\frac{q_3}{p_3}+q_4\log\frac{q_4}{p_4}],\label{bellrr}
\end{align}
where we can always assume $q_1\ge q_2\ge q_3\ge q_4,$ $\max(p_1,p_2,p_3,p_4)=p_1$ and $p_i$ are the weights of Bell projectors corresponding to $q_i$ of $\sigma$ for the state $\rho$. By using Lagrange multiplier method, let
\begin{align*}
	f=q_1\log\frac{q_1}{p_1}+q_2\log\frac{q_2}{p_2}+q_3\log\frac{q_3}{p_3}+q_4\log\frac{q_4}{p_4}+\lambda(1-\sum_{i=1}^{4}q_i)+\mu(\frac{1}{2}-q_1),
\end{align*}
then KKT conditions are 
\begin{align*}
	\log\frac{q_1}{p_1}+1-\lambda-\mu=0,\\
	\log\frac{q_2}{p_2}+1-\lambda=0,\\
	\log\frac{q_3}{p_3}+1-\lambda=0,\\
	\log\frac{q_4}{p_4}+1-\lambda=0,\\
	q_1\ge \frac{1}{2},\hspace{3mm} \mu\ge 0,\hspace{3mm}\mu(\frac{1}{2}-q_1)=0.
\end{align*}
then \begin{align*}
	q_1=\frac{1}{2},\hspace{3mm} q_2=\frac{p_2}{2(1-p_1)}, \hspace{3mm}q_3=\frac{p_3}{2(1-p_1)},\hspace{3mm} q_4=\frac{p_4}{2(1-p_1)}.
\end{align*}
Hence, 
\begin{align*}
	\min_{\sigma\in Sep}D(\sigma||\rho)=\begin{cases} 
		-\frac{\log[4p_{max}(1-p_{max})]}{2}\hspace{3mm}p\ge \frac{1}{2} \\
		0\hspace{30mm}p<\frac{1}{2}
	\end{cases} .
\end{align*}
\end{proof}

\subsection{Probabilistic entanglement distillation exponents for the resource theory of coherence}\label{apc}
\indent In this subsection, assume the reference basis is $\{\ket{i}\}$, let the permutation twirling be 
\begin{align*}
	\mathcal{T}_{S}(\rho)=\frac{1}{d!}\sum_{\pi\in S_m}P_{\pi}\rho P_{\pi},
\end{align*}
where $\pi$ takes over all the permutation in $S_{m}$, here $m\ge 2.$

\begin{lemma}Assume $\rho^{\otimes n}$ is a state, then its probabilistic distillation exponent for the maximally coherent state $\ket{\psi_m}$ under $\mathbb{O}_{\mathcal{NG}}^{\delta}$ instrument, $E_{d,err,p}^{(m),\mathbb{O}_{\mathcal{NG}}^{\delta}}(\rho^{\otimes n})$, can be rewritten as
	\begin{align*}
		E_{d,err,p}^{(m),\mathbb{O}^{\delta}_{\mathcal{NG}}}(\rho^{\otimes n})=&\sup-\frac{1}{n}\log\epsilon_n\\
		\textit{s. t.}\hspace{4mm}&  \frac{\mathrm{tr}M\rho^{\otimes n}}{\mathrm{ tr}(M+N)\rho^{\otimes n}}\ge 1-\epsilon_n,
		\mathcal{E}_i(X)=\mathrm{ tr}MX\cdot\Psi_m+\mathrm{ tr}NX\cdot\tau_m,\\
		&M+N\le \mathbb{I},M,N\ge 0,\\
		&\mathcal{E}_i\in \mathcal{E},\mathcal{E}\in\mathbb{O}^{\delta}_{\mathcal{NG}}.
	\end{align*}
\end{lemma}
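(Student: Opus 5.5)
The plan is to reduce to the case where the output is symmetrized by a free twirl, in the same way as Lemma \ref{lfd} and Lemma \ref{lf}. The coherence twirl differs from the entanglement one. A bare permutation twirl does not collapse the output onto $\mathrm{span}\{\Psi_m,\mathbb{I}-\Psi_m\}$, so I would combine it with dephasing by the diagonal phase group. Concretely, take the group $G$ generated by the permutation unitaries $P_\pi$, $\pi\in S_m$, and the real diagonal sign/phase unitaries $Z_\theta=\sum_j e^{i\theta_j}\ket{j}\bra{j}$ whose phases are all equal. Both families map incoherent states to incoherent states, so $\mathcal{U}_g\in\mathbf{O}_{max}$.

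A cleaner choice, avoiding representation-theoretic subtleties, is the twirl $\mathcal{T}=\mathcal{T}_S\circ\mathcal{D}'$. Here $\mathcal{T}_S$ is the permutation twirl defined above, and $\mathcal{D}'$ is the twirl over the subgroup of diagonal unitaries $\mathrm{diag}(e^{i\theta_j})$. That subgroup alone would dephase completely, so I would not use it. Instead I would check directly that the commutant of $\{P_\pi\}$ on $\mathbb{C}^m$ is spanned by $\mathbb{I}$ and $J=m\Psi_m$. This holds because the permutation representation decomposes as trivial $\oplus$ standard, each irreducible with multiplicity one, so condition (\ref{hs}) holds. Hence $\mathcal{T}_S(X)=\mathrm{tr}(X\Psi_m)\Psi_m+\mathrm{tr}[X(\mathbb{I}-\Psi_m)]\tau_m$ with $\tau_m=\frac{\mathbb{I}-\Psi_m}{m-1}$ in the coherence setting, where the normalization is $m-1$ rather than $m^2-1$. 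This is exactly the structure used in Theorem \ref{th1}, and $\mathcal{T}_S$ alone suffices.

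With this decomposition the proof follows Lemma \ref{lfd} verbatim. Take any feasible instrument $\mathcal{E}\in\mathbb{O}^{\delta}_{\mathcal{NG}}$ and the branch $\mathcal{E}_i$ achieving fidelity $\ge 1-\epsilon_n$. Replace it by $\{\mathcal{T}_S\circ\mathcal{E}_j\}_j$. Each $\mathcal{T}_S\circ\mathcal{E}_j$ is still in the $\delta$-approximate class, because $\mathcal{T}_S$ is a mixture of free unitaries. The generalized robustness is monotone under free channels and preserves normalization, so $R_G(\mathcal{T}_S(\omega))\le R_G(\omega)\le\delta$ for every normalized output $\omega$ of a free input. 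The trace of each branch output is unchanged, so success probabilities are the same. Since $\Psi_m$ is $\mathcal{T}_S$-invariant, $\bra{\psi_m}\mathcal{T}_S(\omega)\ket{\psi_m}=\bra{\psi_m}\omega\ket{\psi_m}$, and the fidelity is preserved. Setting $M=\mathcal{E}_i^{\dagger}(\Psi_m)$ and $N=\mathcal{E}_i^{\dagger}(\mathbb{I}-\Psi_m)$ gives $M,N\ge0$ and $M+N=\mathcal{E}_i^{\dagger}(\mathbb{I})\le\mathbb{I}$. The twirled branch is then $X\mapsto \mathrm{tr}(MX)\Psi_m+\mathrm{tr}(NX)\tau_m$, and its fidelity is $\frac{\mathrm{tr}M\rho^{\otimes n}}{\mathrm{tr}(M+N)\rho^{\otimes n}}$. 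This shows the restricted optimization is at least the original one. The reverse inequality is immediate because the restricted maps form a subset of the feasible set. Hence the two suprema coincide.

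The main obstacle I expect is verifying that the twirled instrument stays inside $\mathbb{O}^{\delta}_{\mathcal{NG}}$, i.e.\ monotonicity of $R_G$ under $\mathcal{T}_S$. I would prove it directly. If $\frac{\omega+s\tau}{1+s}\in I_{\mathcal{E}}$, apply $\mathcal{T}_S$. Permutations preserve diagonality, so $\frac{\mathcal{T}_S(\omega)+s\mathcal{T}_S(\tau)}{1+s}$ is incoherent with $\mathcal{T}_S(\tau)$ a state, giving $R_G(\mathcal{T}_S(\omega))\le s$. A secondary point is to state carefully that $\mathcal{T}_S$ acts on the $m$-dimensional output space, with $S_m$ and normalization $\frac{1}{m!}$, and to confirm the commutant claim via Schur's lemma as in the proof of Theorem \ref{th1}.
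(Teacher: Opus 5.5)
Your argument is correct and follows the paper's route: you twirl with the free permutation twirl $\mathcal{T}_S$, whose commutant is spanned by $\Psi_m$ and $\mathbb{I}-\Psi_m$, so fidelity and success probability are preserved, and you read off $M=\mathcal{E}_i^{\dagger}(\Psi_m)$ and $N=\mathcal{E}_i^{\dagger}(\mathbb{I}-\Psi_m)$; you also explicitly check that the twirled instrument stays in $\mathbb{O}^{\delta}_{\mathcal{NG}}$ and note the trivial reverse inclusion, both of which the paper leaves implicit. Delete your opening claim that a bare permutation twirl fails to collapse the output onto $\mathrm{span}\{\Psi_m,\mathbb{I}-\Psi_m\}$, together with the detour through diagonal phases, since your own second paragraph shows that claim is false and $\mathcal{T}_S$ alone suffices.
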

\begin{proof}
	Based on Schur's Lemma,
	\begin{align*}
		\mathcal{T}_S(\rho)=a\Psi_m+b\tau_m,
	\end{align*}
	here $a=\mathrm{ tr}\Psi_m\rho$, $\tau_m=\frac{\mathbb{I}-\Psi_m}{m-1}$ and $b={1-a}.$
	Let $\{\mathcal{E}_i\}_i$ be a feasible instrument such that 
	\begin{align*}
	1-\epsilon_n\le& F(\frac{\mathcal{E}_i(\rho^{\otimes n})}{\mathrm{tr}(\mathcal{E}_i(\rho^{\otimes n}))},\Psi_m)\nonumber\\
	=&\frac{1}{d!}\sum_{\pi\in S_m}\mathrm{ tr}\frac{\mathcal{E}_i(\rho^{\otimes n})}{\mathrm{tr}(\mathcal{E}_i(\rho^{\otimes n}))}P_{\pi}\Psi_m P_{\pi}\nonumber\\
	=&\frac{\mathrm{ tr}\Psi_m [(\mathrm{ tr}\mathcal{E}_i(\rho^{\otimes n})\Psi_m)\Psi_m +(\mathrm{ tr}(\mathbb{I}-\Psi_m)\mathcal{E}_i(\rho^{\otimes n}))\tau_m]}{\mathrm{ tr}\mathcal{E}_i(\rho^{\otimes n})}\\
	=&\frac{\mathrm{ tr}\mathcal{E}_i(\rho^{\otimes n})\Psi_m}{\mathrm{ tr}\mathcal{E}_i(\rho^{\otimes n})}\\
	=&\frac{\mathrm{ tr}M\rho^{\otimes n}}{\mathrm{ tr}(M+N)\rho^{\otimes n}},
\end{align*}
here $M=\mathcal{E}_i^{\dagger}(\Psi_m),$ and $N=\mathcal{E}_i^{\dagger}(\mathbb{I}-\Psi_m)$. As $\mathcal{E}_i$ is trace-nonincreaing, $M+N=\mathcal{E}^{\dagger}_i(\mathbb{I})\le \mathbb{I}$. Hence, we finish the proof.
\end{proof}
\begin{lemma}\label{cm}\cite{napoli2016robustness,piani2016robustness}
	Assume $\gamma=p\ket{\psi_m}\bra{\psi_m}+q\tau_m$, here $p+q\le 1$, $p,q\ge0,$ and $\tau_m=\frac{\mathbb{I}-\ket{\psi_m}\bra{\psi_m}}{m-1}$, then
	\begin{align*}
		R_{g,ic}(\frac{\gamma}{\mathrm{ tr}\gamma})=\begin{cases}
			m\frac{p}{p+q}-1 \hspace{7mm}\frac{p}{p+q}\ge \frac{1}{m}\\
			\frac{1-\frac{mp}{p+q}}{m-1}\hspace{12mm}\frac{p}{p+q}\le \frac{1}{m}.
		\end{cases}
	\end{align*}
\end{lemma}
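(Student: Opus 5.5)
The plan is to compute $R_{g,ic}$ of the normalized state exactly, by matching an explicit feasible decomposition (upper bound) with an explicit witness-type certificate (lower bound). Throughout, write $x=\frac{p}{p+q}$, so that
\begin{align*}
\omega:=\frac{\gamma}{\mathrm{tr}\gamma}=x\Psi_m+(1-x)\tau_m .
\end{align*}
First we read off the matrix of $\omega$ in the reference basis. Every entry of $\Psi_m$ equals $\frac{1}{m}$. The off-diagonal entries of $\tau_m=\frac{\mathbb{I}-\Psi_m}{m-1}$ equal $-\frac{1}{m(m-1)}$, and its diagonal entries equal $\frac{1}{m}$. Hence $\omega$ has constant diagonal $\frac1m$ and every off-diagonal entry equal to
\begin{align*}
c=\frac{mx-1}{m(m-1)}.
\end{align*}
In particular $\omega$ is incoherent exactly when $x=\frac1m$, which is consistent with both formulas vanishing there.

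For the upper bounds, we add a free-to-choose state that cancels the off-diagonal entries. If $x\ge\frac1m$ (so $c\ge0$), we take $\tau=\tau_m$. The off-diagonal entries of $\omega+s\tau_m$ are $c-\frac{s}{m(m-1)}$, and these vanish for $s=mx-1$. Therefore $\frac{\omega+s\tau_m}{1+s}$ is diagonal, hence incoherent, and $R_{g,ic}(\omega)\le mx-1$. If $x\le\frac1m$ (so $c\le0$), we take $\tau=\Psi_m$. The off-diagonal entries become $c+\frac{s}{m}$, which vanish for $s=\frac{1-mx}{m-1}$, giving $R_{g,ic}(\omega)\le\frac{1-mx}{m-1}$.

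For the lower bounds, we use an elementary weak-duality argument instead of invoking full SDP duality. Suppose $X\ge0$ and every diagonal entry satisfies $X_{ii}\le1$. Then $\mathrm{tr}X\sigma=\sum_iX_{ii}\sigma_{ii}\le1$ for every $\sigma\in I_{\mathcal{E}}$. Now let $\omega+s\tau=(1+s)\sigma$ with $\sigma$ incoherent and $\tau$ a state. Since $\mathrm{tr}X\tau\ge0$, we get
\begin{align*}
\mathrm{tr}X\omega\le(1+s)\,\mathrm{tr}X\sigma\le 1+s,
\end{align*}
so $R_{g,ic}(\omega)\ge\mathrm{tr}X\omega-1$. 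In the case $x\ge\frac1m$ we choose $X=m\Psi_m$. Its diagonal entries equal $1$, and $\mathrm{tr}X\omega=mx$, which yields $R_{g,ic}(\omega)\ge mx-1$. In the case $x\le\frac1m$ we choose $X=\frac{m}{m-1}(\mathbb{I}-\Psi_m)\ge0$. Its diagonal entries equal $\frac{m}{m-1}(1-\frac1m)=1$, and $\mathrm{tr}X\omega=\frac{m(1-x)}{m-1}$, which yields $R_{g,ic}(\omega)\ge\frac{1-mx}{m-1}$.

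Combining the two directions gives both branches of the formula. The only step requiring care is guessing the right certificates. For the decompositions, the choice is dictated by the sign of $c$: we must add a state whose off-diagonal entries have the opposite sign. For the witnesses, we align $X$ with the sign pattern of the off-diagonal entries of $\omega$ while keeping its diagonal normalized to $1$. Once these are chosen, everything reduces to checking entries of $\Psi_m$ and $\tau_m$, so I expect no real obstacle beyond verifying $X\ge0$, which holds since both choices are nonnegative multiples of projectors.
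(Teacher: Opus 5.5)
Your proof is correct. The entries of $\omega$ check out: constant diagonal $\frac1m$ and off-diagonal entries $c=\frac{mx-1}{m(m-1)}$. The two cancelling decompositions, with $\tau=\tau_m$ when $x\ge\frac1m$ and $\tau=\Psi_m$ when $x\le\frac1m$, and the two witnesses $X=m\Psi_m$ and $X=\frac{m}{m-1}(\mathbb{I}-\Psi_m)$ are all valid. In each branch the upper and lower bounds coincide.

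The paper gives no argument of its own for this lemma; it imports it from Napoli et al.\ and Piani et al. There, the generalized robustness of coherence is characterized as a semidefinite program, with a dual formulation in terms of coherence witnesses.

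Your route does not need that machinery.
\begin{itemize}
\item For the lower bound you use only weak duality: any $X\ge 0$ with $X_{ii}\le 1$ gives $R_{g,ic}(\omega)\ge \mathrm{tr}X\omega-1$. Such an $X$ is exactly a dual-feasible point of that program, equivalently the witness $\mathbb{I}-X$.
\item You close the gap with explicit primal decompositions, so you never invoke strong duality. The argument is fully self-contained and also exhibits the optimal mixing state in each regime.
\end{itemize}
What the citation buys in exchange is generality: the SDP characterization applies to arbitrary states. Your certificates instead exploit the specific twirled form $x\Psi_m+(1-x)\tau_m$, which is all that is needed where the lemma is used in the proof of Theorem~\ref{th3}.
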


\emph{Theorem \ref{th3}}:
	Assume $\rho^{\otimes n}$ is a state on $\mathcal{H}^{\otimes n}$. Let $m\ge 2\in \mathbb{N}$, $\delta\ge 0$. Then the conditional error exponent of the coherence distillation under $\mathbb{O}^{\delta}_{\mathcal{NG}}$ is 
	\begin{align*}
		E_{d,err,p}^{\infty,\mathbb{O}^{\delta}_{\mathcal{NG}}}(\rho^{\otimes n} )\le\frac{1}{n}\hat{\beta}_{\frac{\delta+1}{m},ic}(\rho^{\otimes n}).
	\end{align*}
	
	Assume $\{m_n\ge 2\}_n$ is a sequence of natural numbers with $\lim\limits_{n\rightarrow\infty}\frac{\log m_n}{n}=0,$ then
	\begin{align*}
		E_{d,err,p}^{\infty,\mathbb{O}^{\delta}_{\mathcal{NG}}}(\rho)\le\hat{D}_{\Omega,ic}^{reg }(\rho).
	\end{align*}
\begin{proof}
Assume $\{\mathcal{E}_i\}_{i=1}^k$ is a feasible instrument such that
\begin{align}
	1-\epsilon_n\le& F(\frac{\mathcal{E}_i(\rho^{\otimes n})}{\mathrm{tr}(\mathcal{E}_i(\rho^{\otimes n}))},\Psi_m)\nonumber\\
	=&\bra{\psi_m}\frac{\mathcal{E}_i(\rho^{\otimes n})}{\mathrm{tr}(\mathcal{E}_i(\rho^{\otimes n}))}\ket{\psi_m}\nonumber\\
	=&\frac{\mathrm{tr}[\rho_{AB}^{\otimes n}\mathcal{E}_i^{\dagger}(\Psi_m)]}{p_n},\label{cf0}
\end{align}
here $\mathcal{E}_i^{\dagger}(\cdot)$ satisfies $\mathrm{tr}(\mathcal{E}_i^{\dagger}(A)B)=\mathrm{tr}(A\mathcal{E}_i(B))$, $p_n=\mathrm{tr}(\mathcal{E}_i^{\dagger}(\mathbb{I})\rho_{AB}^{\otimes n})$.

Let $M^{(n)}_2=\mathcal{E}_i^{\dagger}(\Psi_m), M^{(n)}_1=\mathcal{E}^{\dagger}_i(\mathbb{I}-\Psi_m)$, $M^{(n)}_0=\mathbb{I}-M^{(n)}_1-M_2^{(n)}$. As $\mathcal{E}_i$ is completely positive trace nonincreasing and $\mathbb{I}-\Psi_m\ge 0$, $M^{(n)}_1,M^{(n)}_2\ge 0$. As $\sum_{i=1}^k\mathcal{E}_i$ is trace preserving, then $\sum_i\mathcal{E}_i^{\dagger}(\mathbb{I})=\mathbb{I}$, 
\begin{align*}
	M^{(n)}_0=&\mathbb{I}-M^{(n)}_1-M^{(n)}_2\\
	=&\sum_{\{1,2,\cdots,k\}-i}\mathcal{E}_l^{\dagger}(\mathbb{I})\ge 0,
\end{align*} 
Hence, $\{M^{(n)}_0,M^{(n)}_1,M^{(n)}_2\}$ is a POVM. Next based on (\ref{cf0}), we have
\begin{align}
	\frac{\mathrm{tr}M_1^{(n)}\rho^{\otimes n}}{\mathrm{ tr}(M_1^{(n)}+M_0^{(n)})\rho^{\otimes n}}\le \epsilon_n.
\end{align}

Assume $\sigma_n$ is an arbitrary incoherent state on $\mathcal{H}^{\otimes n}$, then
\begin{align*}
	\frac{\mathrm{tr}M^{(n)}_2\sigma_n}{\mathrm{tr}(M^{(n)}_1+M^{(n)}_2)\sigma_n}=&\frac{\mathrm{tr}(\mathcal{E}_i(\sigma_n)\Psi_{m_n})}{\mathrm{tr}\mathcal{E}_i(\sigma_n)}\\
	\le& \frac{\delta+1}{m_n},
\end{align*}
the last inequality is due to Lemma \ref{cm}.  Thus, based on the definition of postselected hypothesis testing, we have
\begin{align*}
	\hat{\beta}_{\frac{\delta+1}{m_n}, ic}(\rho^{\otimes n})\ge& \log\frac{\mathrm{tr}(M^{(n)}_1+M^{(n)}_2)\rho^{\otimes n}}{\mathrm{tr}(M^{(n)}_1\rho^{\otimes n})}\\
	\ge&-\log\epsilon_n.
\end{align*}
Then multiplying two sides $\frac{1}{n}$, and when taking the supermum over all $\{\mathcal{E}_i\}_{i=1}^k\in \mathbb{O}_{\mathcal{NG}}^{\delta},$ we have
\begin{align}
	E^{(m_n)}_{d,err,p}(\rho )\le&	\frac{1}{n}\min_{\sigma_n\in ic} \hat{\beta}_{\frac{\delta+1}{m_n}}(\rho^{\otimes n},\sigma_n)\nonumber\\
	=&\frac{1}{n} \hat{\beta}_{\frac{\delta+1}{m},ic}(\rho^{\otimes n}).
\end{align}

Let $n\rightarrow \infty$,
\begin{align}
	E_{d,err,p}^{\infty,\mathbb{O}^{\delta}_{\mathcal{NG}}}(\rho )\le \lim\limits_{n\rightarrow\infty}\frac{1}{n}\hat{\beta}_{\frac{\delta+1}{m},ic}(\rho^{\otimes n})=D_{\Omega,ic}^{reg}(\rho).\label{c3}
\end{align}

The last equality is due to Corollary \ref{c1}.
\end{proof}
\subsection{Probabilistic entanglement distillation exponents for the resource theory of magic}\label{ptm}

Assume $\mathcal{H}$ is a Hilbert space with $\dim(\mathcal{H})=3,$ and the standard computational basis is $\{\ket{0},\ket{1},\ket{2}\}$. Let 
\begin{align*}
	X=\sum_{j=0}^2\ket{j}\bra{j\oplus1},\hspace{5mm} Z=\sum_{j=0}^2 \omega^j\ket{j}\bra{j},
\end{align*} 
here $\oplus$ denotes addition modulo $d$ and $\omega=e^{\frac{2\pi i}{d}}.$ We define the Heisenberg–Weyl operators as
\begin{align*}
	T_{\boldsymbol{u}}=\tau^{-a_1a_2}Z^{a_1}X^{a_2},
\end{align*}
where $\tau=e^{\frac{4\pi i}{3}}$ and $\boldsymbol{u}=(a_1,a_2)\in \mathbb{Z}_3\times \mathbb{Z}_3.$

The set of Clifford operators and stabilizer states are defined as follows, respectively,
\begin{align*}
	\mathcal{C}_3=&\{U|UT_{\boldsymbol{u}}U^{\dagger}=e^{i\theta}T_{\boldsymbol{v}}, \forall \boldsymbol{u}, \exists \theta,\boldsymbol{v}\},\\
		STAB(\mathcal{H}_3)=&\{\rho\in \mathcal{D}(\mathcal{H}_3)|\rho=\sum_i p_i U_i\ket{0}\bra{0}U_i^{\dagger}, U\in \mathcal{C}_3\},
\end{align*}
where $\{p_i\}_i$ is a set of probabilistic distribution with $p_i\ge 0$ and $\sum_ip_i=1.$

The generalized robustness of magic for a states $\rho$ is defined as follows \cite{howard2017application},
\begin{align*}
	R_{\boldsymbol{G}}(\rho)=\min\{s\ge 0|\frac{\rho+s\tau}{1+s}\in STAB,\tau\in \mathcal{D}(\mathcal{H})\}.
\end{align*}

Next we define the following twirling operation as follows,
\begin{align*}
	\mathcal{T}_{Sp}(\cdot)=\frac{1}{|Sp(2,3)|}\sum_{F\in Sp(2,3)}U_F(\cdot) U_F^{\dagger},
\end{align*}
where $Sp(2,3)=\{F\in GL(2,\mathbb{F}_3)|F^T\begin{pmatrix}
	0&1\\
	-1&0
\end{pmatrix}F=\begin{pmatrix}
0&1\\-1&0
\end{pmatrix}\},$ $GL(2,\mathbb{F}_3)=\{\begin{pmatrix}
a&b\\
c&d
\end{pmatrix}|ad-bc\ne 0, a,b,c,d\in \mathbb{F}_3\}.$ In \cite{Veitch_2014}, the authors showed that $\ket{S}$ stays invariant under the twirling operation $\mathcal{T}_{Sp}(\cdot)$, and for any state $\rho$,
\begin{align*}
	\mathcal{T}_{Sp}(\rho)=F_S\ket{S}\bra{S}+\frac{1-F_S}{2}(\mathbb{I}-\ket{S}\bra{S}),
\end{align*}
where $F_S=\bra{S}\rho\ket{S}.$

\begin{lemma}\label{mrs}
	Assume $\rho_F=F\mathbb{S}+\frac{1-F}{2}(\mathbb{I}-\mathbb{S})$, here $\mathbb{S}=\ket{S}\bra{S}$, then
	\begin{align*}
		R_{\boldsymbol{G}}(\rho_F)=& \max(0,2F-1).\\
		\max_{\sigma\in STAB}&\mathrm{ tr} \sigma \mathbb{S}= \frac{1}{2}.
	\end{align*}
\end{lemma}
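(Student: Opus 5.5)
The plan is to prove the overlap bound first, since the robustness formula follows from it by a witness argument together with an explicit decomposition.

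\emph{Step 1: the overlap $\max_{\sigma\in STAB}\mathrm{tr}\,\sigma\mathbb{S}=\frac{1}{2}$.} The map $\sigma\mapsto \mathrm{tr}\,\sigma\mathbb{S}$ is linear and $STAB$ is the convex hull of the pure stabilizer states, so it suffices to maximize $|\langle s|S\rangle|^2$ over the finitely many ($12$) pure qutrit stabilizer states. These are:
\begin{itemize}
\item the computational basis $\ket{0},\ket{1},\ket{2}$, with overlaps $0,\frac12,\frac12$;
\item the nine states $\frac{1}{\sqrt3}(\ket{0}+\omega^{a}\ket{1}+\omega^{b}\ket{2})$, the eigenbases of $X$, $XZ$ and $XZ^2$.
\end{itemize}
For the second family,
\[
|\langle s|S\rangle|^2=\frac{|\omega^{a}-\omega^{b}|^2}{6},
\]
which equals $0$ if $a=b$ and $\frac{3}{6}=\frac12$ otherwise. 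Hence the maximum is $\frac12$, and it is attained, for instance, by $\ket{1}$.

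\emph{Step 2: the lower bound $R_{\boldsymbol G}(\rho_F)\ge \max(0,2F-1)$.} Suppose $\frac{\rho_F+s\tau}{1+s}\in STAB$ for some state $\tau$. Step 1 gives
\[
\frac{F+s\,\mathrm{tr}\,\tau\mathbb{S}}{1+s}\le\frac12 .
\]
Since $\mathrm{tr}\,\tau\mathbb{S}\ge 0$, this forces $F\le\frac{1+s}{2}$, that is, $s\ge 2F-1$. Together with $s\ge 0$ this is the claimed lower bound.

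\emph{Step 3: the upper bound.} This needs explicit membership in $STAB$, and I will use the twirl $\mathcal{T}_{Sp}$. It is a mixture of Clifford conjugations, so it maps $STAB$ into $STAB$. Moreover, $\mathcal{T}_{Sp}(\sigma)=\rho_{F'}$ with $F'=\bra{S}\sigma\ket{S}$.
\begin{itemize}
\item Taking $\sigma=q\ket{0}\bra{0}+(1-q)\ket{1}\bra{1}\in STAB$ gives $F'=\frac{1-q}{2}$. As $q$ ranges over $[0,1]$, this shows $\rho_F\in STAB$ for every $F\in[0,\frac12]$, so $R_{\boldsymbol G}(\rho_F)=0$ there.
\item For $F>\frac12$, take $s=2F-1$ and $\tau=\frac{\mathbb{I}-\mathbb{S}}{2}$. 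A direct coefficient computation gives
\[
\frac{\rho_F+s\tau}{1+s}=\frac{F\,\mathbb{S}+\frac{F}{2}(\mathbb{I}-\mathbb{S})}{2F}=\rho_{1/2}\in STAB,
\]
so $R_{\boldsymbol G}(\rho_F)\le 2F-1$.
\end{itemize}

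The main obstacle is Step 1, specifically making sure the list of pure qutrit stabilizer states is complete; the standard count of $12$ (four mutually unbiased bases) settles this. The other delicate point is the claim that $\mathcal{T}_{Sp}$ preserves $STAB$. This relies on each $U_F$ being a Clifford unitary, which is the setting of \cite{Veitch_2014}.
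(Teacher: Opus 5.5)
Your proof is correct, and it is a genuinely different route from the paper's: the paper gives no argument for this lemma and simply attributes both identities to \cite{Veitch_2014,one2022takagi}.

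Your version is self-contained and has three parts.
\begin{itemize}
\item You enumerate the twelve pure stabilizer states. These are the computational basis together with the nine states $\ket{s_{a,b}}=\frac{1}{\sqrt3}(\ket0+\omega^a\ket1+\omega^b\ket2)$, $a,b\in\mathbb{Z}_3$, which are exactly the eigenbases of $X$, $XZ$ and $XZ^2$. Direct computation on this list gives the maximal overlap $\frac12$.
\item That overlap bound then serves as a witness, giving $1+R_{\boldsymbol G}(\rho_F)\ge 2F$.
\item The explicit choice $\tau=\frac{\mathbb I-\mathbb S}{2}$ mixes $\rho_F$ onto $\rho_{1/2}$, which shows the bound is tight.
\end{itemize}

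Compared with the bare citation, this exhibits the optimal decomposition explicitly. That is exactly what Lemma~\ref{mm} and the proof of Theorem~\ref{edm} rely on; the latter uses $\frac{\mathbb I-\mathbb S}{2}\in STAB$ without comment.

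The only external input you still use is the Clifford covariance of $\mathcal T_{Sp}$ and its action on states, which the paper itself states. You can remove even that dependence. Since $\rho_F$ is affine in $F$, membership in $STAB$ for $F\in[0,\frac12]$ follows from the two endpoints, and one checks directly that
\[
\rho_0=\tfrac14\Big(\ket0\bra0+\sum_{a}\ket{s_{a,a}}\bra{s_{a,a}}\Big),\qquad
\rho_{1/2}=\tfrac18\Big(\ket1\bra1+\ket2\bra2+\sum_{a\neq b}\ket{s_{a,b}}\bra{s_{a,b}}\Big).
\]
These are the uniform mixtures of the stabilizer states with overlap $0$ and overlap $\frac12$ with $\ket S$, respectively. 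With this, the whole lemma becomes a finite computation over the twelve stabilizer states.
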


The above results were shown in \cite{Veitch_2014,one2022takagi}.

\begin{lemma}\label{mm}
	Assume $\mathcal{H}$ is a Hilbert space, both $M$ and $N$ are semidefinite positive operators with $M+N\le \mathbb{I}$, let $$\Lambda_{M,N}(X)=\mathrm{ tr}MX\cdot \mathbb{S}+\mathrm{ tr}NX\cdot \varphi,$$ here $\varphi=\frac{\mathbb{I}-\mathbb{S}}{2}$, then for any $\delta>0$, then
	$\Lambda_{M,N}(\cdot)\in \mathcal{NG}_{\delta}$ if and only if $\sup_{X\in STAB}\frac{\mathrm{ tr}MX}{\mathrm{ tr}NX}\le \frac{1+\delta}{1-\delta}$.
\end{lemma}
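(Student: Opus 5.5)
The plan is to reduce membership in $\mathcal{NG}_{\delta}$ to a scalar condition on each free input, using the closed form of the generalized robustness in Lemma \ref{mrs}. Here $\mathcal{NG}_{\delta}$ for the subchannel $\Lambda_{M,N}$ is understood, as in the definition of $\mathbb{O}^{\delta}_{\mathcal{NG}}$, through the normalized output $\Lambda_{M,N}(X)/\mathrm{tr}\Lambda_{M,N}(X)$ for $X\in STAB$.

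\textbf{Step 1 (normal form of the output).} Fix $X\in STAB$ and write $a=\mathrm{tr}MX\ge0$ and $b=\mathrm{tr}NX\ge 0$. Since $\mathrm{tr}\,\mathbb{S}=\mathrm{tr}\,\varphi=1$, we have $\mathrm{tr}\Lambda_{M,N}(X)=a+b$. Setting $F=\frac{a}{a+b}$ gives
\begin{align*}
\frac{\Lambda_{M,N}(X)}{\mathrm{tr}\Lambda_{M,N}(X)}=F\,\mathbb{S}+\frac{1-F}{2}(\mathbb{I}-\mathbb{S})=\rho_F,
\end{align*}
which is exactly the family treated in Lemma \ref{mrs}. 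Inputs with $a+b=0$ produce no postselected output and impose no constraint.

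\textbf{Step 2 (apply Lemma \ref{mrs}).} By Lemma \ref{mrs}, $R_{\boldsymbol{G}}(\rho_F)=\max(0,2F-1)$. Hence $R_{\boldsymbol{G}}(\rho_F)\le\delta$ if and only if $F\le\frac{1+\delta}{2}$.

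\textbf{Step 3 (rewrite as a ratio).} For $0<\delta<1$ we have $\frac{a}{a+b}\le\frac{1+\delta}{2}$ if and only if $(1-\delta)a\le(1+\delta)b$, that is, $\frac{a}{b}\le\frac{1+\delta}{1-\delta}$. When $b=0$ and $a>0$ the ratio is $+\infty$; correspondingly $F=1$ and the robustness equals $1>\delta$, so both sides fail together and the equivalence holds in that case as well.

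\textbf{Step 4 (quantify over free inputs).} Taking the condition over all $X\in STAB$, the statement "$R_{\boldsymbol{G}}\le\delta$ for every free input" becomes exactly $\sup_{X\in STAB}\frac{\mathrm{tr}MX}{\mathrm{tr}NX}\le\frac{1+\delta}{1-\delta}$. This proves both directions at once.

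The main obstacle is not computational, since Lemma \ref{mrs} does all the real work. It is bookkeeping at the boundary.
\begin{itemize}
\item The degenerate inputs with $\mathrm{tr}NX=0$ must be handled consistently, which Step 3 does.
\item The regime $\delta\ge1$ must be stated separately. There $R_{\boldsymbol{G}}(\rho_F)\le 1\le\delta$ always holds, and $\frac{1+\delta}{1-\delta}$ should be read as $+\infty$, so the condition is vacuous on both sides.
\end{itemize}
I would state explicitly that the lemma is intended for $0<\delta<1$, consistent with the Corollary following Theorem \ref{edm}, and note the trivial extension to $\delta\ge 1$.
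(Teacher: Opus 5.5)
Your proposal is correct and follows essentially the same argument as the paper: both normalize the output to $\rho_F$ with $F=\mathrm{tr}MX/\mathrm{tr}(M+N)X$, invoke Lemma \ref{mrs} to get $R_{\boldsymbol{G}}=\max(0,2F-1)$, and rewrite $2F-1\le\delta$ as the ratio bound. You also handle inputs with $\mathrm{tr}NX=0$, and you note that the stated equivalence only makes literal sense for $0<\delta<1$: for $\delta>1$ the bound $\frac{1+\delta}{1-\delta}$ is negative, yet every $\Lambda_{M,N}$ is then trivially in $\mathcal{NG}_\delta$. The paper's one-line proof leaves both points implicit.
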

\begin{proof}
	Based on Lemma \ref{mrs}, for any $\sigma\in \mathcal{D}(\mathcal{H}),$
	\begin{align*}
		R_{g,m}(\frac{\Lambda_{M,N}(\sigma)}{\mathrm{ tr}\Lambda_{M,N}(\sigma)})=\max(0,\frac{2\mathrm{ tr}M\sigma}{\mathrm{ tr}(M+N)\sigma}-1),
	\end{align*}
	Hence, $\Lambda_{M,N}(\cdot)\in \mathcal{NG}_{\delta}\Longleftrightarrow \sup_{X\in STAB}\frac{\mathrm{ tr}MX}{\mathrm{ tr}NX}\le \frac{1+\delta}{1-\delta}.$
\end{proof}

\emph{Theorem \ref{edm}:} Assume $\rho\in \mathcal{D}(\mathcal{H})$. Let $\delta\in (0,1)$, then the conditional error exponent of the probabilistic magic distillation under $\mathbb{O}^{\delta}_{\mathcal{NG}}$ is 
\begin{align*}
	E_{d,err,p}^{\infty,\mathbb{O}^{\delta}_{\mathcal{NG}}}(\rho)=\hat{D}_{\Omega,STAB}^{reg }(\rho),
\end{align*}
here $\mathbb{O}^{\delta}_{\mathcal{NG}}=\{\{\mathcal{E}_i\}|R_{\boldsymbol{G}}(\frac{\mathcal{E}_i(\sigma)}{\mathrm{ tr}\mathcal{E}_i(\sigma)})\le \delta,\forall\sigma\in STAB,\forall i\},$ and $\hat{D}_{\Omega,STAB}^{reg }(\rho)=\lim\limits_{n\rightarrow\infty}\frac{1}{n}\hat{\Omega}_{STAB}(\rho^{\otimes n}).$		

\begin{proof}
	Based on Lemma \ref{lf},
	 	\begin{align*}
	 	E_{d,err,p}^{\mathbb{O}^{\delta}_{\mathcal{NG}}}(\rho^{\otimes n})=&\sup-\frac{1}{n}\log\epsilon_n\\
	 	\textit{s. t.}\hspace{4mm}&  \frac{\mathrm{tr}M^{(n)}_2\rho^{\otimes n}}{\mathrm{ tr}(M^{(n)}_1+M^{(n)}_2)\rho^{\otimes n}}\ge 1-\epsilon_n,
	 	\mathcal{E}_i(X)=\mathrm{ tr}M^{(n)}_2X\cdot \mathbb{S}+\mathrm{ tr}M^{(n)}_1X\cdot \varphi,\\
	 	&M^{(n)}_2+M^{(n)}_1\le \mathbb{I},M^{(n)}_1,M^{(n)}_2\ge 0,\\
	 	&\mathcal{E}_i\in \mathcal{E},\mathcal{E}\in\mathbb{O}^{\delta}_{\mathcal{NG}}.
	 \end{align*}
	 
	 Then let $\{\mathcal{E}_i\}_{i=1}^k$ is a feasible $\mathbb{O}_{\mathcal{NG}}^{\delta}$ instrument such that
	 \begin{align}
	 	1-\epsilon_n\le& F(\frac{\mathcal{E}_i(\rho^{\otimes n})}{\mathrm{tr}(\mathcal{E}_i(\rho^{\otimes n}))},\mathbb{S})\nonumber\\
	 	=&\bra{\mathbb{S}}\frac{\mathcal{E}_i(\rho^{\otimes n})}{\mathrm{tr}(\mathcal{E}_i(\rho^{\otimes n}))}\ket{\mathbb{S}}\nonumber\\
	 	=&\frac{\mathrm{tr}[\rho_{AB}^{\otimes n}\mathcal{E}_i^{\dagger}(\mathbb{S})]}{p_n},\label{tfm0}
	 \end{align}
	 here $\mathcal{E}_i^{\dagger}(\cdot)$ satisfies $\mathrm{tr}(\mathcal{E}_i^{\dagger}(A)B)=\mathrm{tr}(A\mathcal{E}_i(B))$, $p_n=\mathrm{tr}(\mathcal{E}_i^{\dagger}(\mathbb{I})\rho_{AB}^{\otimes n})$.
	 
	 Let $M^{(n)}_2=\mathcal{E}_i^{\dagger}(\mathbb{S}), M^{(n)}_1=\mathcal{E}^{\dagger}_i(\mathbb{I}-\mathbb{S})$, $M^{(n)}_0=\mathbb{I}-M^{(n)}_1-M_2^{(n)}$. As $\mathcal{E}_i$ is completely positive trace nonincreasing and $\mathbb{I}-\mathbb{S}\ge 0$, $M^{(n)}_1,M^{(n)}_2\ge 0$. Let
	 \begin{align*}
	 	M^{(n)}_0=&\mathbb{I}-M^{(n)}_1-M^{(n)}_2\\
	 	=&\sum_{l\in\{1,2,\cdots,k\}-i}\mathcal{E}_l^{\dagger}(\mathbb{I})\ge 0,
	 \end{align*} 
	 next because $\sum_{i=1}^k\mathcal{E}_i$ is trace preserving, $\{M^{(n)}_0,M^{(n)}_1,M^{(n)}_2\}$ is a POVM. Based on (\ref{tfm0}), we have
	 \begin{align}
	 	\frac{\mathrm{tr}M_1^{(n)}\rho^{\otimes n}}{\mathrm{ tr}(M_1^{(n)}+M_0^{(n)})\rho^{\otimes n}}\le \epsilon_n.
	 \end{align}
	 
	 Assume $\sigma_n$ is an arbitrary stabilizer state on $\mathcal{H}^{\otimes n}$, then
	 \begin{align*}
	 	\frac{\mathrm{tr}M^{(n)}_2\sigma_n}{\mathrm{tr}(M^{(n)}_1+M^{(n)}_2)\sigma_n}=&\frac{\mathrm{tr}(\mathcal{E}_i(\sigma_n)\mathbb{S})}{\mathrm{tr}\mathcal{E}_i(\sigma_n)}\\
	 	\le& \frac{\delta+1}{2},
	 \end{align*}
	 the last inequality is due to Lemma \ref{mm}.  Thus, based on the definition of postselected hypothesis testing, we have
	 \begin{align*}
	 	\hat{\beta}_{\frac{\delta+1}{2}, Sep}(\rho^{\otimes n})\ge& \log\frac{\mathrm{tr}(M^{(n)}_1+M^{(n)}_2)\rho^{\otimes n}}{\mathrm{tr}(M^{(n)}_1\rho^{\otimes n})}\\
	 	\ge&-\log\epsilon_n.
	 \end{align*}
	 Then multiplying two sides $\frac{1}{n}$, and when taking the supermum over all $\{\mathcal{E}_i\}_{i=1}^k\in NG,$ we have
	 \begin{align}
	 	E_{d,err,p}(\rho^{\otimes n})\le&	\frac{1}{n}\min_{\sigma_n\in STAB} \hat{\beta}_{\frac{\delta+1}{2}}(\rho^{\otimes n},\sigma_n)\nonumber\\
	 	=&\frac{1}{n} \hat{\beta}_{\frac{\delta+1}{2},STAB}(\rho^{\otimes n}).\label{am1}
	 \end{align}
	 Next we show the other direction. Assume $\{(M^{(n)}_1,M^{(n)}_2,M^{(n)}_0)|\sum_{i=0}^2M_i^{(n)}=\mathbb{I},M^{(n)}_i\ge0,i=1,2,0\}$ is a POVM with $\frac{\mathrm{tr}M^{(n)}_2\sigma_n}{\mathrm{ tr}(M_1^{(n)}\sigma_n+M^{(n)}_2\sigma_n)}\le \frac{\delta+1}{2}$ for any $\sigma_n\in STAB.$ Let $\mathcal{E}_1(\cdot)$, $\mathcal{E}_2(\cdot)$ be
	 \begin{align*}
	 	\mathcal{E}_1(\cdot)=\mathrm{tr}(M^{(n)}_2(\cdot))\mathbb{S}+\mathrm{tr}M^{(n)}_1(\cdot)\frac{\mathbb{I}-\mathbb{S}}{2},\\
	 	\mathcal{E}_2(\cdot)=\mathrm{tr}M_0^{(n)}(\cdot)\frac{\mathbb{I}-\mathbb{S}}{2}.
	 \end{align*}
	 As $\frac{\mathbb{I}-\mathbb{S}}{2}\in STAB$, $\mathcal{E}_2$ is $\mathcal{NG}$. Based on Lemma \ref{dm}, and  $\frac{\mathrm{tr}M^{(n)}_2\sigma_n}{\mathrm{ tr}(M_1^{(n)}\sigma_n+M^{(n)}_2\sigma_n)}\le \frac{\delta+1}{2}$, $\mathcal{E}_1(\cdot)$ is in $\mathbb{O}_{\mathcal{NG}}^{\delta}.$Then 
	 \begin{align*}
	 	E_{d,err,p}(\rho^{\otimes n})\ge& -\frac{1}{n}\log\frac{\mathrm{tr}M_1^{(n)}\rho^{\otimes n}}{\mathrm{tr}(M_1^{(n)}+M_2^{(n)})\rho^{\otimes n}}
	 \end{align*}
	 by optimising over all measurements with the property, we have
	 \begin{align}
	 	E_{d,err,p}(\rho^{\otimes n})\ge& \frac{1}{n}\hat{\beta}_{\frac{\delta+1}{2},STAB}(\rho^{\otimes n}).\label{am2}
	 \end{align}
	 
	 Based on (\ref{am1}) and (\ref{am2}), we have
	 \begin{align}
	 	E^{\infty}_{d,err,p}(\rho)= \lim\limits_{n\rightarrow\infty}\frac{1}{n}\hat{\beta}_{\frac{\delta+1}{2},STAB}(\rho^{\otimes n})\label{am3}
	 \end{align}

	 Combing Corollary \ref{c1} and (\ref{am3}), we have
	 \begin{align*}
	 	E^{\infty}_{d,err,p}(\rho)=&D_{\Omega,STAB}^{reg}(\rho).
	 \end{align*}
\end{proof}

\emph{Example \ref{e3}} 	Let 
\begin{align*}
	\chi_p=&p\ket{\phi}\bra{\phi}+(1-p)\frac{\mathbb{I}}{3}, \\\ket{\phi}=&\frac{1}{\sqrt{6}}(2\ket{0}-\ket{1}-\ket{2}).
\end{align*}
Then for each $n\in \mathbb{N}$, 
\begin{align*}
	\frac{1}{n}D_{\Omega,STAB}(\chi_p^{\otimes n})=D_{\Omega,STAB}(\chi_p)=\begin{cases}
		0&\hspace{5mm} p\in [0,\frac{2}{5}],\\
		\log\frac{1+2p}{3(1-p)}&\hspace{5mm}p\in (\frac{2}{5},1).
	\end{cases}
\end{align*}

\begin{proof}
	As when $p=\frac{2}{5}$, 
	\begin{align*}
		&\rho_{\frac{2}{5}}=\frac{1}{5}(\ket{0}\bra{0}+\ket{u_1}\bra{u_1}+\ket{u_2}\bra{u_2}+\ket{u_3}\bra{u_3}+\ket{u_4}\bra{u_4}),\\
	&\ket{u_1}=\frac{1}{\sqrt{3}}\begin{pmatrix}
			1\\\omega\\\omega^2
		\end{pmatrix},\hspace{3mm}\ket{u_2}=\frac{1}{\sqrt{3}}\begin{pmatrix}
		1\\\omega^2\\\omega
		\end{pmatrix},\hspace{3mm}\ket{u_3}=\frac{1}{\sqrt{3}}\begin{pmatrix}
		1\\\omega\\\omega
		\end{pmatrix},\hspace{3mm}\ket{u_4}=\frac{1}{\sqrt{3}}\begin{pmatrix}
		1\\\omega^2\\\omega^2
		\end{pmatrix}.
	\end{align*}
	Hence, when $p\le \frac{2}{5}$, $\rho_p\in STAB.$ Furthermore, when $p>\frac{2}{5}$, there exists negative value in the Wigner function of $\rho_p$. Then when $p\in (\frac{2}{5},1)$,
	\begin{align}
		\Omega_{ STAB}(\rho_p)\le \Omega(\rho_p,\rho_{\frac{2}{5}})=\log\frac{1+2p}{3(1-p)},\label{mf1}
	\end{align}
	
	Next we show the other direction. 	The dual problem of $\Omega_{STAB}(\cdot)$ is
	\begin{align}
		\Omega_{STAB}(\rho)=\hspace{2mm}&\hspace{2mm}\sup\mathrm{tr}(A\rho)\label{dmcc}\\
		\textit{s. t.}\hspace{4mm}&\mathrm{tr}B\rho=1,\nonumber\\
		&\mathrm{tr}(B-A)\sigma\ge 0\hspace{3mm}\forall \sigma\in STAB,\nonumber\\
		&A,B\ge 0.\nonumber
	\end{align} 
	Let
	\begin{align*}
		A=\frac{\Phi}{1-p},\hspace{3mm}B=\frac{2\ket{u}\bra{u}+\ket{v}\bra{v}}{1-p},
	\end{align*}
	here $$\ket{u}=\frac{1}{\sqrt{3}}\begin{pmatrix}
		1\\1\\1
	\end{pmatrix},\ket{v}=\frac{1}{\sqrt{2}}\begin{pmatrix}
	0\\1\\-1
	\end{pmatrix}.$$
	Due to computation, $$\mathrm{tr}A\rho=\frac{1+2p}{3(1-p)},\mathrm{tr}B\rho=1.$$ Next we show the last condition. Due to computation, $\mathrm{ tr}(B-A)\gamma\ge 0,$ here $\gamma$ is an arbitrary pure qutrit stabilizer state. As $STAB$ is constructed by pure qutrit stabilizer states, $\mathrm{ tr}(B-A)\sigma\ge 0$, $\forall\sigma\in STAB.$
	Hence, $A$ and $B$ are feasible for the dual program of $\Omega_{Sep}(\cdot)$, then
	\begin{align}
		D_{\Omega,{STAB}}(\rho)\ge \log\frac{1+2p}{3(1-p)}. \label{mf2}
	\end{align}
	Combing (\ref{mf1}) and (\ref{mf2}), when $p\in (\frac{2}{5},1),$
	\begin{align}
		D_{\Omega,STAB}(\rho_p)=\log\frac{1+2p}{3(1-p)}.\label{lmf2}
	\end{align}
	For the state $\rho_p^{\otimes n}$, 
	based on the property of $\Omega(\cdot,\cdot)$, 
	\begin{align}
		\Omega_{STAB_n}(\rho^{\otimes n})\ge \Omega(\rho_p^{\otimes n},\rho_{\frac{2}{5}}^{\otimes n})=n\Omega(\rho_p,\rho_{\frac{2}{5}}). \label{lmff2}
	\end{align}
	Next let
	\begin{align*}
		A^{(n)}=A^{\otimes n},B^{(n)}=B^{\otimes n},
	\end{align*}
	Due to the computation, $\mathrm{tr}B^{(n)}\rho_p^{\otimes n}=\mathrm{ tr}^nB\rho_p=1,$ and for any $\sigma_n\in STAB(\mathcal{H}^{\otimes n})$,
	\begin{align*}
		&\mathrm{tr}(B^{(n)}-A^{(n)})\sigma_n\\
		=&3^n\sum_{\boldsymbol{u}}W_{B^{(n)}-A^{(n)}}(\boldsymbol{u})W_{\sigma_n}(\boldsymbol{u})\ge 0,
	\end{align*}
	here $W_O(\boldsymbol{u})=\frac{1}{3}\mathrm{ tr}T_{\boldsymbol{u}}O$.
	Hence, $B^{(n)}-A^{(n)}\in cone(STAB_n)^{*},$ that is, $A^{(n)}$ and $B^{(n)}$ are the feasible for $\rho_p^{\otimes n}$ in terms of $(\ref{dmcc})$ for $\Omega_{Sep}$, 
	\begin{align}
		\frac{1}{n}	D_{\Omega,Sep}(\rho^{\otimes n})\ge\log\frac{1+2p}{3(1-p)}.\label{rmf2}
	\end{align}
	Combing (\ref{lmff2}) and (\ref{rmf2}), we have $$\frac{1}{n}D_{\Omega,Sep}(\rho_p)=\log\frac{1+2p}{3(1-p)}.$$
\end{proof}

\begin{lemma}\label{mf6}
	Assume $\mathcal{H}$ is Hilbert space with $\dim(\mathcal{H})=3$. Let $(STAB_n)_n$ be a sequence of sets of states  $r_n\in (0,1]$, and 
	\begin{align*}
		\mathbb{M}_n=\left\{\left(\frac{\mathbb{I}^{\otimes n}+X_n}{2},\frac{\mathbb{I}^{\otimes n}-X_n}{2}\right)|X_n=X_n^{\dagger}\in B(\mathcal{H}^{\otimes n}), ||X_n||_{\infty}\le r_n\right\},
	\end{align*}
	where $||Y||_{\infty}$ is the operator norm and $B(\mathcal{H})$ is the set of bounded operators. Then for any $n,m\in \mathbb{N}^{+}$, for any state $\sigma_{n+m}\in STAB_{n+m}$, it holds that 
	\begin{align*}
		\mathrm{ tr}_{1\cdots n}[(\frac{\mathbb{I}^{\otimes n}\pm X_n}{2}\otimes \mathbb{I}^{\otimes m})\sigma_{n+m}]\in cone(STAB_{m}),
	\end{align*}
	where $cone(STAB_{m})=\{\lambda\gamma|\lambda\ge 0,\gamma\in STAB_{m}\}.$	
	
\end{lemma}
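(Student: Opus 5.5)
The plan is to reduce the statement to pure stabilizer states in a normal form, and then to show that the resulting contracted operator is a small perturbation of a multiple of the identity. A multiple of the identity lies in the interior of $cone(STAB_k)$, so a small enough perturbation stays in the cone.

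\emph{Step 1 (reduce to pure states).} Because $cone(STAB_m)$ is a convex cone and the map
\[
\sigma_{n+m}\mapsto \mathrm{tr}_{1\cdots n}\Big[\Big(\tfrac{\mathbb{I}^{\otimes n}\pm X_n}{2}\otimes \mathbb{I}^{\otimes m}\Big)\sigma_{n+m}\Big]
\]
is linear, and $STAB_{n+m}$ is the convex hull of pure stabilizer states, it suffices to treat a pure stabilizer state $\sigma_{n+m}=\ket{s}\bra{s}$.

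\emph{Step 2 (normal form).} For the odd prime $d=3$, every pure stabilizer state on a bipartition $(1\cdots n):(n+1\cdots n+m)$ is equivalent under local Cliffords $C_n\otimes C_m$ to
\[
\Phi^{\otimes k}\otimes\ket{a}\bra{a}\otimes\ket{b}\bra{b},
\]
where $\Phi$ is the two-qutrit maximally entangled stabilizer state, $\ket{a}$ is a product stabilizer state on the remaining $n-k$ qutrits of the first block, and $\ket{b}$ is a product stabilizer state on the remaining $m-k$ qutrits of the second block.
\begin{itemize}
\item The Clifford $C_m$ maps $cone(STAB_m)$ onto itself, so it can be ignored.
\item The Clifford $C_n$ can be absorbed into $X_n$ by setting $X_n'=C_n^{\dagger}X_nC_n$. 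This preserves hermiticity and gives $\|X_n'\|_\infty\le r_n$.
\end{itemize}
A direct computation then gives the contracted operator as
\[
\frac{1}{3^k}\,(E')^{T}\otimes\ket{b}\bra{b},\qquad E'=\bra{a}\tfrac{\mathbb{I}\pm X_n'}{2}\ket{a},
\]
where $E'$ is an operator on $k$ qutrits with $\frac{1-r_n}{2}\mathbb{I}\le E'\le\frac{1+r_n}{2}\mathbb{I}$. The set $STAB$ is closed under tensoring with a stabilizer state, and closed under transposition in the computational basis (transposition maps stabilizer states to stabilizer states). It therefore remains to show $E'\in cone(STAB_k)$.

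\emph{Step 3 (interior argument, the main obstacle).} Write
\[
E'=\tfrac{1-r_n}{2}\mathbb{I}+P,\qquad 0\le P\le r_n\mathbb{I},\qquad \mathrm{tr}P\le r_n3^k.
\]
The operator $P$ need not lie in $cone(STAB_k)$, so I would use the generalized robustness. Set $R_k=\max_{\tau}R_{\boldsymbol{G}}(\tau)$ over all states $\tau$ on $k$ qutrits, and write $P=\mathrm{tr}P\cdot\tau$. By the definition of $R_{\boldsymbol{G}}$ there is a state $\tau'$ with $\tau+R_k\tau'\in cone(STAB_k)$. Hence
\[
E'=\Big[\tfrac{1-r_n}{2}\mathbb{I}-\mathrm{tr}P\cdot R_k\,\tau'\Big]+\mathrm{tr}P\,(\tau+R_k\tau'),
\]
and the second bracket is in the cone. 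The first bracket is in the cone provided it is a nonnegative multiple of $\mathbb{I}$ minus a small multiple of a state that still leaves a stabilizer-decomposable remainder. I would handle this by a robustness-around-$\frac{\mathbb{I}}{3^k}$ bound: the state $\frac{\mathbb{I}}{3^k}$ has a ball of operator-norm radius $c_k>0$ inside $STAB_k$. This gives a sufficient condition of the form
\[
r_n\le \frac{c_k}{1+c_k}\qquad\text{for all } k\le\min(n,m).
\]
This is exactly the place where the choice of the sequence $(r_n)_n\subset(0,1]$ enters. I expect verifying this quantitative condition to be the hardest step, namely establishing that the admissible $r_n$ (the ones used in Theorem 14 of \cite{lami2026}) are small enough relative to the interior radius of $STAB_k$ around the maximally mixed state. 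I would first try to compute $c_k$ via the Wigner-function description of the qutrit stabilizer polytope, using that the $3^{2k}$ phase-point operators have operator norm bounded by $1$.
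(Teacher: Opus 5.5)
Your Steps 1 and 2 are the same reduction the paper uses. You pass to pure states, bring the state to the local-Clifford normal form $\ket{EPR}^{\otimes k}\otimes$(product stabilizer state), absorb the first-block Clifford into $X_n$, and obtain $\frac{1}{3^k}(E')^{T}\otimes\ket{b}\bra{b}$ with $E'=\frac12(\mathbb{I}\pm Y)$, $Y=Y^{\dagger}$, $\|Y\|_\infty\le r_n$.

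The gap is Step 3. It is the only nontrivial step, and you leave it unproved.

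\emph{The stated condition is not derived.} The condition $r_n\le c_k/(1+c_k)$ does not follow from your robustness decomposition. For $\frac{1-r_n}{2}\mathbb{I}-\mathrm{tr}P\,R_k\tau'$ even to be positive semidefinite, with $\tau'$ uncontrolled and possibly pure, you need $\mathrm{tr}P\,R_k\le\frac{1-r_n}{2}$. Since $\mathrm{tr}P$ can reach $r_n3^k$, any condition from this route must involve $3^kR_k$. The detour is also unnecessary: $E'/\mathrm{tr}E'$ is already within operator-norm distance $\frac{2r_n}{3^k(1-r_n)}$ of $\mathbb{I}/3^k$.

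\emph{The proposed way to compute $c_k$ cannot work.} Bounding Wigner functions through phase-point operators only shows that a neighbourhood of $\mathbb{I}/3^k$ has nonnegative Wigner function. For qutrits, Wigner nonnegativity is strictly weaker than membership in $STAB_k$, because Wigner-positive non-stabilizer (bound magic) states exist.

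\emph{Deferring $r_n$ to Theorem 14 of \cite{lami2026} does not close the gap.} With arbitrary $r_n\in(0,1]$ the statement is false. Take $n=m=1$, $\sigma_2=\ket{EPR}\bra{EPR}$, and $\frac{\mathbb{I}+X_1}{2}=\ket{T}\bra{T}$ for a non-stabilizer $\ket{T}$, so that $\|X_1\|_\infty=1$. The output is $\frac13(\ket{T}\bra{T})^{T}$, which is rank one and not a multiple of a stabilizer state, hence not in $cone(STAB_1)$. So the proof itself has to produce an explicit admissible range of $r_n$.

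The missing idea, which is what the paper's proof uses, is that the pure $k$-qutrit stabilizer projectors $\{P_j\}_{j=1}^{N_k}$ form a 2-design. This gives, for every operator $A$ on $k$ qutrits, the explicit frame expansion
\[
A=\frac{3^k}{N_k}\sum_{j}\big[(3^k+1)\,\mathrm{tr}(P_jA)-\mathrm{tr}A\big]P_j .
\]
Apply it to $A=\mathbb{I}\pm Y^{T}$. The coefficients are $1\pm(3^k+1)\mathrm{tr}(P_jY^{T})\mp\mathrm{tr}Y\ge 1-(2\cdot 3^k+1)r_n$. Hence the output lies in $cone(STAB_m)$ whenever $r_n\le(2\cdot3^{k}+1)^{-1}$.

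Since $k\le\min(n,m)$, the hypothesis $r_n\le(2\cdot 3^{\min(n,m)}+1)^{-1}$ suffices. The stronger $r_n\le(2\cdot3^{n}+1)^{-1}$ works uniformly in $m$.

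Equivalently, the same expansion shows that $\mathbb{I}/3^k$ has an operator-norm ball of radius $\frac{1}{3^k(3^k+1)}$ inside $STAB_k$. That is exactly the quantitative input your interior argument needs; with it, Step 3 goes through for $r_n\le(2\cdot3^k+3)^{-1}$.
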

\begin{proof}
	Here we only prove the result when $\sigma_{n+m}$ is pure. For any pure stabilizer state, based on \cite{Looi2011}, there always exists local Clifford $U_A$, $U_B$ for any bipartite $n|m$ such that
	\begin{align*}
		(U_A\otimes U_B)\ket{\phi}=\ket{EPR}_{A_1B_1}^{\otimes k}\otimes\ket{0}_{A_2}\otimes\ket{0}_{B_2},
	\end{align*}
	here $k\in min\{n,m\},$ $\ket{EPR}=\frac{1}{\sqrt{3}}(\ket{00}+\ket{11}+\ket{22})$. Let $X^{'}=U_AXU_A^{\dagger}, Y=U_B^{\dagger}\bra{0}_{A_2}X^{'}\ket{0}_{A_2}U_B$, then 
	\begin{align*}
		Y=Y^{\dagger}, ||Y||_{\infty}\le ||X||_{\infty}.
	\end{align*}
	Then 
	\begin{align*}
		&\mathrm{ tr}_{A}[(\frac{\mathbb{I}^{\otimes n}\pm X_n}{2}\otimes \mathbb{I}^{\otimes m})\ket{\phi}\bra{\phi}]\\
		=&\mathrm{ tr}_{A}[(\frac{\mathbb{I}^{\otimes n}\pm U_A^{\dagger}X_n^{'}U_A}{2}\otimes \mathbb{I}^{\otimes m})(U_A\otimes U_B)^{\dagger}(\ket{EPR}_{A_1B_1}^{\otimes k}\bra{EPR}\otimes\ket{0}_{A_2}\bra{0}\otimes\ket{0}_{B_2}\bra{0})(U_A\otimes U_B)]\\
		=&\frac{1}{2\times 3^k}(\mathbb{I}_{3^k}\pm Y^T)\otimes\ket{0}_{B_2}\bra{0}\\
		=&\frac{1}{2m_k}\sum_j[1\pm(3^k+1)\mathrm{ tr}P_sY\mp \mathrm{ tr}Y]P_s\otimes \ket{0}_{B_2}\bra{0}\in cone(STAB_m),
	\end{align*}
	here $\{P_j\}_{j=1}^{m_k}$ denotes the set of all pure stabilizer projectors of a $k$-qutrit system which can form 2-design. Hence, when $r_n\le \frac{1}{2\times 3^{\min(n,m)}+1}$, $1\pm(3^k+1)\mathrm{ tr}P_sY\mp \mathrm{ tr}Y\ge 0$. We finish the proof.
\end{proof}
\begin{example}
	Let
	\begin{equation}
		\ket{\Psi}
		=
		\frac{1}{\sqrt{6}}
		\left(2\ket{0}-\ket{1}-\ket{2}\right)
	\end{equation}
	be the qutrit Norrell state, and define
	\begin{equation}
		\rho_p
		=
		p\Psi+(1-p)\frac{\mathbb{I}}{3},
		\qquad
		\Psi=\ket{\Psi}\!\bra{\Psi},
		\qquad
		0\leq p\leq 1.
	\end{equation}
	For $2/5<p<1$, let $y_p>1$ be the unique solution of
	\begin{equation}
		y_p^2(2y_p+1)
		=
		\frac{1+2p}{1-p}.
		\label{eq:y-equation}
	\end{equation}
	Then the unique minimizer of
	\begin{equation}
		D(\operatorname{STAB}_1\Vert\rho_p)
		:=
		\min_{\sigma\in\operatorname{STAB}_1}
		D(\sigma\Vert\rho_p)
	\end{equation}
	is
	\begin{align}
		\sigma_p^\star
		={}&
		\frac{y_p}{3y_p+2}
		\left(
		\ket{0}\!\bra{0}
		+
		\ket{1,0}\!\bra{1,0}
		+
		\ket{2,0}\!\bra{2,0}
		\right)
		\nonumber\\
		&+
		\frac{1}{3y_p+2}
		\left(
		\ket{0,1}\!\bra{0,1}
		+
		\ket{0,2}\!\bra{0,2}
		\right),
		\label{eq:optimal-stabilizer-state}
	\end{align}
	where
	\begin{equation}
		\ket{m,k}
		=
		\frac{1}{\sqrt{3}}
		\sum_{x=0}^{2}
		\omega^{mx^2+kx}\ket{x},
		\qquad
		\omega=e^{2\pi i/3}.
	\end{equation}
	Consequently,
	\begin{equation}
		D(\operatorname{STAB}_1\Vert\rho_p)
		=
		D(\sigma_p^\star\Vert\rho_p)=	\begin{cases}
			0,
			&
			0\leq p\leq \dfrac{2}{5},
			\\[3mm]
			\displaystyle
			q_\Psi\log\frac{q_\Psi}{a_p}
			+
			q_u\log\frac{q_u}{b_p}
			+
			q_v\log\frac{q_v}{b_p},
			&
			\dfrac{2}{5}<p<1.
		\end{cases}
	\end{equation}
	where $q_\Psi=\frac{2y_p+1}{3y_p+2},q_u=\frac{y_p}{3y_p+2},q_v=\frac{1}{3y_p+2},a_p=\frac{1+2p}{3},$
	and $b_p=\frac{1-p}{3}.$
\end{example}
\begin{proof}
	Introduce the orthonormal basis
	\begin{equation}
		\ket{\Psi}
		=
		\frac{1}{\sqrt{6}}(2,-1,-1)^{\mathsf T},
		\qquad
		\ket{u}
		=
		\frac{1}{\sqrt{3}}(1,1,1)^{\mathsf T},
		\qquad
		\ket{v}
		=
		\frac{1}{\sqrt{2}}(0,1,-1)^{\mathsf T},
	\end{equation}
	and write
	\begin{equation}
		U=\ket{u}\!\bra{u},
		\qquad
		V=\ket{v}\!\bra{v}.
	\end{equation}
	Since $\mathbb{I}=\Psi+U+V$, the state $\rho_p$ has the
	spectral decomposition
	\begin{equation}
		\rho_p
		=
		a_p\Psi+b_pU+b_pV,
		\qquad
		a_p=\frac{1+2p}{3},
		\qquad
		b_p=\frac{1-p}{3}.
		\label{eq:rho-spectral}
	\end{equation}
	
	A direct calculation gives
	\begin{align}
		\ket{0}\!\bra{0}
		+
		\ket{1,0}\!\bra{1,0}
		+
		\ket{2,0}\!\bra{2,0}
		&=
		2\Psi+U,
		\label{eq:first-projector-identity}\\
		\ket{0,1}\!\bra{0,1}
		+
		\ket{0,2}\!\bra{0,2}
		&=
		\Psi+V.
		\label{eq:second-projector-identity}
	\end{align}
	Hence, setting $y=y_p$, Eq.~\eqref{eq:optimal-stabilizer-state}
	can be written as
	\begin{equation}
		\sigma_p^\star
		=
		q_\Psi\Psi+q_uU+q_vV,
		\label{eq:sigma-spectral}
	\end{equation}
	where
	\begin{equation}
		q_\Psi=\frac{2y+1}{3y+2},
		\qquad
		q_u=\frac{y}{3y+2},
		\qquad
		q_v=\frac{1}{3y+2}.
		\label{eq:q-definitions}
	\end{equation}
	The coefficients in Eq.~\eqref{eq:optimal-stabilizer-state}
	are nonnegative and sum to one:
	\begin{equation}
		3\frac{y}{3y+2}
		+
		2\frac{1}{3y+2}
		=1.
	\end{equation}
	Therefore,
	\begin{equation}
		\sigma_p^\star\in\operatorname{STAB}_1.
	\end{equation}
	
	We now introduce the Hermitian operator
	\begin{equation}
		W=-\Psi+2U+V
		=
		\begin{pmatrix}
			0&1&1\\
			1&1&0\\
			1&0&1
		\end{pmatrix}.
		\label{eq:stabilizer-witness}
	\end{equation}
	Its expectation values on the twelve pure qutrit stabilizer
	states are
	\begin{equation}
		\bra{s}W\ket{s}
		=
		\begin{cases}
			0,
			&
			\ket{s}\in
			\{
			\ket{0},
			\ket{1,0},
			\ket{2,0},
			\ket{0,1},
			\ket{0,2}
			\},
			\\[2mm]
			1,
			&
			\ket{s}\in
			\{
			\ket{1},
			\ket{2},
			\ket{1,1},
			\ket{1,2},
			\ket{2,1},
			\ket{2,2}
			\},
			\\[2mm]
			2,
			&
			\ket{s}=\ket{0,0}.
		\end{cases}
		\label{eq:witness-expectations}
	\end{equation}
	Since $\operatorname{STAB}_1$ is the convex hull of these
	twelve projectors, Eq.~\eqref{eq:witness-expectations} implies
	\begin{equation}
		\operatorname{Tr}(W\sigma)\geq 0
		\qquad
		\text{for all }
		\sigma\in\operatorname{STAB}_1.
		\label{eq:witness-dual-cone}
	\end{equation}
	Moreover, $\sigma_p^\star$ is a convex combination of precisely
	the five stabilizer states on which $W$ vanishes. Thus,
	\begin{equation}
		\operatorname{Tr}(W\sigma_p^\star)=0.
		\label{eq:witness-complementarity}
	\end{equation}
	
	It remains to verify the first-order optimality condition.
	Because $\rho_p$ and $\sigma_p^\star$ are diagonal in the same
	basis, define
	\begin{equation}
		G
		:=
		\log\sigma_p^\star-\log\rho_p
		=
		g_\Psi\Psi+g_uU+g_vV,
	\end{equation}
	where
	\begin{equation}
		g_\Psi=\log\frac{q_\Psi}{a_p},
		\qquad
		g_u=\log\frac{q_u}{b_p},
		\qquad
		g_v=\log\frac{q_v}{b_p}.
	\end{equation}
	Equation~\eqref{eq:q-definitions} gives
	\begin{equation}
		g_u-g_v
		=
		\log\frac{q_u}{q_v}
		=
		\log y.
		\label{eq:first-gradient-difference}
	\end{equation}
	Furthermore, using Eq.~\eqref{eq:y-equation} and
	$q_\Psi/q_v=2y+1$, we obtain
	\begin{align}
		g_v-g_\Psi
		&=
		\log
		\left(
		\frac{q_v/b_p}{q_\Psi/a_p}
		\right)
		\nonumber\\
		&=
		\log
		\left(
		\frac{a_p}{b_p}
		\frac{q_v}{q_\Psi}
		\right)
		\nonumber\\
		&=
		\log
		\left(
		y^2(2y+1)\frac{1}{2y+1}
		\right)
		\nonumber\\
		&=
		2\log y.
		\label{eq:second-gradient-difference}
	\end{align}
	Since the eigenvalues of $W$ in the ordered basis
	$(\ket{\Psi},\ket{u},\ket{v})$ are $(-1,2,1)$,
	Eqs.~\eqref{eq:first-gradient-difference} and
	\eqref{eq:second-gradient-difference} imply that there exists
	$c_p\in\mathbb{R}$ such that
	\begin{equation}
		\log\sigma_p^\star-\log\rho_p	=	c_p\mathbb{I}+(\log y)W.	\label{eq:gradient-normal-form}
	\end{equation}
	
	For fixed full-rank $\rho_p$, the map
	\begin{equation}
		\sigma\longmapsto D(\sigma\Vert\rho_p)
	\end{equation}
	is convex. Its supporting-hyperplane inequality at
	$\sigma_p^\star$ gives, for every density operator $\sigma$,
	\begin{align}
		D(\sigma\Vert\rho_p)
		\geq{}&
		D(\sigma_p^\star\Vert\rho_p)
		\nonumber\\
		&+
		\operatorname{Tr}
		\left[
		\left(
		\log\sigma_p^\star
		+\mathbb{I}
		-\log\rho_p
		\right)
		(\sigma-\sigma_p^\star)
		\right].
		\label{eq:relative-entropy-support}
	\end{align}
	Since $\operatorname{Tr}(\sigma-\sigma_p^\star)=0$, the identity
	terms vanish. Using Eqs.~\eqref{eq:gradient-normal-form} and
	\eqref{eq:witness-complementarity}, we find
	\begin{align}
		D(\sigma\Vert\rho_p)
		&\geq
		D(\sigma_p^\star\Vert\rho_p)
		+
		(\log y)
		\operatorname{Tr}
		\left[
		W(\sigma-\sigma_p^\star)
		\right]
		\nonumber\\
		&=
		D(\sigma_p^\star\Vert\rho_p)
		+
		(\log y)\operatorname{Tr}(W\sigma).
		\label{eq:final-support-inequality}
	\end{align}
	For $p>2/5$, Eq.~\eqref{eq:y-equation} has $y>1$, and hence
	$\log y>0$. Therefore, Eq.~\eqref{eq:witness-dual-cone} implies
	that
	\begin{equation}
		D(\sigma\Vert\rho_p)
		\geq
		D(\sigma_p^\star\Vert\rho_p)
		\qquad
		\text{for all }
		\sigma\in\operatorname{STAB}_1.
	\end{equation}
	This proves the global optimality of $\sigma_p^\star$.
	
	Finally, since $D(\sigma\Vert\rho_p)$ is strictly convex in
	$\sigma$ for full-rank $\rho_p$, the minimizer is unique.
\end{proof}

	\end{document}